\documentclass[11pt,letterpaper]{article}
\usepackage[T1]{fontenc}
\usepackage[utf8]{inputenc}
\usepackage{amsmath,amssymb,amsthm,mathtools}
\usepackage{newtxtext,newtxmath}
\usepackage[margin=1in,headheight=14pt]{geometry}
\usepackage{microtype}
\usepackage{booktabs,array,enumitem}
\usepackage[backend=biber,style=numeric-comp,sorting=nyt,giveninits=true,   maxbibnames=99,doi=true,isbn=false,url=false,eprint=true]{biblatex}
\usepackage[hidelinks,pdfencoding=auto,psdextra]{hyperref}
\IfFormatAtLeastTF{2026-06-01}{}{\usepackage{aliascnt}}
\usepackage[nameinlink,noabbrev,capitalise]{cleveref}

\numberwithin{equation}{section}

\newtheorem{theorem}{Theorem}[section]
\newcommand{\newsharedtheorem}[2]{%
  \IfFormatAtLeastTF{2026-06-01}
    {\newtheorem{#1}[theorem]{#2}}
    {\newaliascnt{#1}{theorem}%
     \newtheorem{#1}[#1]{#2}%
     \aliascntresetthe{#1}}%
}
\newsharedtheorem{lemma}{Lemma}
\newsharedtheorem{proposition}{Proposition}
\newsharedtheorem{corollary}{Corollary}

\theoremstyle{definition}
\newsharedtheorem{definition}{Definition}
\newsharedtheorem{example}{Example}

\theoremstyle{remark}
\newsharedtheorem{remark}{Remark}

\crefname{theorem}{Theorem}{Theorems}
\Crefname{theorem}{Theorem}{Theorems}

\crefname{lemma}{Lemma}{Lemmas}
\Crefname{lemma}{Lemma}{Lemmas}

\crefname{proposition}{Proposition}{Propositions}
\Crefname{proposition}{Proposition}{Propositions}

\crefname{corollary}{Corollary}{Corollaries}
\Crefname{corollary}{Corollary}{Corollaries}

\crefname{definition}{Definition}{Definitions}
\Crefname{definition}{Definition}{Definitions}

\crefname{example}{Example}{Examples}
\Crefname{example}{Example}{Examples}

\crefname{remark}{Remark}{Remarks}
\Crefname{remark}{Remark}{Remarks}

\newcommand{\R}{\mathbb R}

\newcommand{\F}{\mathcal F}
\newcommand{\B}{\mathcal B}
\newcommand{\CM}{\operatorname{CM}}
\newcommand{\SC}{\operatorname{SC}}
\newcommand{\OPT}{\operatorname{OPT}}
\newcommand{\AR}{\operatorname{AR}}
\newcommand{\med}{\operatorname{med}}

\newcommand{\dist}{\operatorname{dist}}
\newcommand{\diam}{\operatorname{diam}}
\newcommand{\aff}{\operatorname{aff}}
\newcommand{\conv}{\operatorname{conv}}
\newcommand{\rec}{\operatorname{rec}}
\newcommand{\range}{\operatorname{range}}
\newcommand{\inter}{\operatorname{int}}
\newcommand{\boxB}{\operatorname{box}_{\B}}
\newcommand{\ip}[2]{\langle #1,#2\rangle}
\newcommand{\norm}[1]{\lVert #1\rVert}
\newcommand{\eps}{\varepsilon}

\newcommand{\step}[1]{\par\medskip\noindent\textbf{#1}\ }
\newcommand{\prfref}[1]{\hyperref[#1]{Section~\ref*{#1}}}

\DeclareFieldFormat[article,inproceedings]{title}{#1}
\AtEveryBibitem{\clearfield{issn}\clearfield{month}\clearlist{location}}

\title{Strategyproof Aggregation in Euclidean Spaces:\\
Rigidity and Median Optimality}
\author{Jianhao Jia\thanks{Email: \texttt{jianhao@hku.hk}}}
\date{}
\begin{document}
\maketitle
\vspace{-2.0em}
\begin{abstract}
    We study deterministic strategyproof aggregation in finite-dimensional Euclidean spaces. For every odd number $n\ge3$ of agents and every finite dimension, we prove that the coordinate-wise median minimizes the worst-case approximation ratio for total Euclidean distance among all continuous, anonymous, deterministic strategyproof mechanisms. The same optimality result holds for every even $n\ge4$ when each coordinate uses a fixed choice of the lower or upper middle rank. The proof combines a rigidity theorem with a normalization that does not increase the approximation ratio: any hypothetical mechanism outperforming the median has a normalized representative that is a fixed coordinate-wise order-statistic rule in a single orthonormal frame. A reflection argument then shows that no such rule improves on the median.

\end{abstract}

\section{Introduction}\label{sec:intro}

Strategic aggregation is one of the basic problems of social choice and mechanism design.  A group of agents must choose a single outcome from a continuous space---a facility location, a compromise policy, or a vector of decisions on several quantitative issues.  Each agent has an ideal point and prefers outcomes that are closer to it.  The reports used by the rule are therefore also strategic objects: an agent may report a false point in the hope of moving the collective decision in a favorable direction.  The aim is to design an aggregation rule that remains well behaved even when every agent reasons in this way.  Classical impossibility results show that strategyproof collective choice is severely constrained on unrestricted domains; spatial and single-peaked preferences are among the main structured settings in which meaningful non-dictatorial rules can be obtained~\cite{Satterthwaite75,barbera2011strategyproof}.

The median is the canonical rule in this setting, and its role in social choice can be traced at least to Black's work on single-peaked preferences~\cite{Black48}.  On a line, the median is strategyproof: once the reports of the other agents are fixed, truthful reporting selects the point of the attainable interval closest to the agent's true location.  A large characterization literature explains why median rules arise so persistently.  Moulin placed generalized median voter schemes at the center of strategyproof aggregation on one-dimensional single-peaked domains~\cite{moulin1980strategy}.  Border and Jordan developed the phantom-voter representation in a broader public-choice framework~\cite{border1983straightforward}; Barber\`a, Gul, and Stacchetti extended generalized median ideas to product and committee settings~\cite{barbera1993generalized}; and Ching clarified the relation between strategyproofness, continuity, and augmented median-voter rules~\cite{ching1997strategy}.  Barber\`a's survey gives a broad account of this line of work and of the role of domain restrictions in strategyproof social choice~\cite{barbera2011strategyproof}.

The multidimensional problem is more subtle.  Once an orthonormal coordinate system is fixed, one may take the median separately in every coordinate.  The resulting coordinate-wise median is strategyproof for Euclidean-distance preferences.  But the choice of coordinates is not part of the underlying social-choice problem: Euclidean preferences are unchanged by rotations, while a coordinate-wise rule appears to privilege particular directions.  A central structural question is therefore whether the coordinate system can itself be derived from the incentive and regularity properties of the mechanism.

For Euclidean planar models, foundational answers were obtained in two dimensions.  Kim and Roush characterized continuous, anonymous, nonmanipulable rules in their planar setting in terms of generalized coordinate-wise medians~\cite{kim1984nonmanipulability}.  Peters, van der Stel, and Storcken proved that, for strategyproof voting schemes in the Euclidean plane, continuity is equivalent to convexity of the range, and they used this connection to obtain further anonymous and unanimous characterizations~\cite{peters1993range}.  Related multidimensional results were developed for richer domains of separable or quadratic preferences by Border and Jordan and by Peremans et al.~\cite{border1983straightforward,peremans1997strategy}; strategyproof voting on compact ranges was studied by Barber\`a, Mass\'o, and Serizawa~\cite{barbera1998strategy}.  These results established that coordinate decompositions, phantom points, and convex option sets are recurring features of strategyproof aggregation.  They also show why continuity and anonymity are natural assumptions in this literature rather than proof-specific conveniences.

Our baseline assumptions follow this tradition.  We study deterministic rules because they form the classical characterization problem; randomized mechanisms are a distinct design class and can achieve different guarantees~\cite{AzizLSW22,barak26}.  Anonymity requires agents with the same reports to be treated symmetrically and rules out identity-based dictatorships. Unanimity requires the mechanism to select the common point when all agents agree. It is important to point out that every mechanism with a finite approximation guarantee
is automatically unanimous, so imposing unanimity does not restrict
the approximation problem. Continuity is a stability requirement since small changes in the reported points should not cause a jump in the outcome, and, as the planar theory demonstrates, it is closely tied to the convex geometry of attainable outcomes~\cite{peters1993range}.  The main proof treats odd populations, for which the central order statistic is unambiguous and the reduction by two reports terminates at the one-report identity. Appendix~\ref{app:even-populations} treats even populations for completeness.

\paragraph{From strategyproof characterization to approximate facility location.}
The same aggregation problem became a central example in algorithmic mechanism design under the name \emph{strategic facility location}.  Agents report locations, the mechanism places a facility, and the quality of the outcome is measured by an aggregate of the agents' distances.  The point minimizing the sum of distances is the geometric median, but selecting it is not strategyproof in general~\cite{el2023strategyproofness}.  This creates the characteristic tension of mechanism design without money: the socially optimal rule may be manipulable, and one instead asks for the best approximation achievable subject to truthfulness.  Procaccia and Tennenholtz used facility location as a canonical example of approximate mechanism design without transfers~\cite{procaccia2013approximate}; Chan et al. survey the extensive literature that followed~\cite{chan2021mechanism}.

On the line, the ordinary median is both strategyproof and optimal for the sum of distances.  Feigenbaum, Sethuraman, and Ye studied the optimal deterministic approximation guarantees for more general $L_p$ aggregates of individual costs~\cite{FeigenbaumSY17}.  In the Euclidean plane, Goel and Hann-Caruthers proved that the coordinate-wise median has the smallest worst-case approximation ratio among deterministic, anonymous, strategyproof mechanisms; for total distance and an odd number of agents, they also determined its exact ratio~\cite{GoelHC23}.  Their result settles the mechanism-comparison question in two dimensions.

Until recently, even the approximation ratio of the coordinate-wise median itself was poorly understood in higher dimensions.  The standard analysis gave a $\sqrt d$ bound in $\R^d$~\cite{meir2019strategyproof}.  Gravin and Jia showed that this dependence on dimension is unnecessary: for total Euclidean distance they proved the dimension-independent estimate $\AR(\CM)\le \sqrt{6\sqrt 3-8}<1.55$,
and constructed high-dimensional examples approaching their upper bound~\cite{jia1}.  Their result shows that the coordinate-wise median remains a strong approximation in every dimension.  It does not, however, rule out the possibility that another strategyproof mechanism performs better.
This leads to the question that motivates the present paper:
\begin{quote}
\emph{In arbitrary finite dimension, does the coordinate-wise median minimize the worst-case approximation ratio among continuous, anonymous, deterministic strategyproof aggregation rules?}
\end{quote}
The question has two parts.  One must first understand what a multidimensional strategyproof mechanism can look like, and then compare the remaining possibilities under the approximation objective.  In dimensions at least three, neither part was supplied by the existing planar characterizations or by the approximation analysis of the median.

We present the main argument for odd populations. In that case the
middle order statistic is unique, and the absorber reduction
\(n\mapsto n-2\) terminates at the one-report identity. For even
populations there are two middle ranks, and the same reduction
terminates at a two-report rule. Appendix~\ref{app:even-populations}
supplies that additional endpoint and proves the corresponding
rigidity and optimality results for every even \(n\ge4\).

\paragraph{Our contributions.}
We first state the result for odd populations, which are treated in
the main proof; Appendix~\ref{app:even-populations} gives the
extension to every even population \(n\ge4\). Let $\F_{n,d}$ denote the class of deterministic
mechanisms $f:(\mathbb R^d)^n\longrightarrow \mathbb R^d$
that are continuous, anonymous, unanimous~\footnote{The unanimity restriction in $\mathcal F_{n,d}$ does not affect the
approximation optimum: every continuous mechanism with finite
approximation ratio is unanimous. Thus the theorem also establishes
optimality among all continuous, anonymous, deterministic
strategyproof mechanisms, without imposing unanimity separately.} and strategyproof for
Euclidean-distance preferences. For an orthonormal basis
$\B=(e_1,\ldots,e_d)$, let $\CM_{\B}$ denote the coordinate-wise
median in that basis.

\begin{theorem}[Median optimality]
\label{thm:optimality}
For every odd $n\ge 3$ and every finite $d\ge 1$, $\inf_{f\in\F_{n,d}} \AR(f)
    =
    \AR(\CM_{\B})$.
\end{theorem}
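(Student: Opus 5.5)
The equality splits into two inequalities. The inequality $\inf_{f\in\F_{n,d}}\AR(f)\le\AR(\CM_{\B})$ is immediate: $\CM_{\B}$ is continuous, anonymous and unanimous, and it is strategyproof for Euclidean preferences. Strategyproofness holds because an agent's report moves each coordinate of the outcome only within the interval it controls in that coordinate. Since $\|x-y\|^2=\sum_i\langle x-y,e_i\rangle^2$ in an orthonormal basis, truthful reporting minimizes every summand simultaneously. Rotation invariance of the Euclidean objective also shows that $\AR(\CM_{\B})$ does not depend on $\B$. The real content is the reverse inequality. I would argue by contradiction: suppose some $f\in\F_{n,d}$ has $\AR(f)<\AR(\CM_{\B})$.

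\textbf{Step 1 (normalization).} I would replace $f$ by a normalized representative whose ratio is no larger. The natural operation is a blow-down or scaling limit: take a limit point of $f_\lambda(x)=\lambda^{-1}f(\lambda x)$ as $\lambda\to\infty$, possibly combined with limits of translates $f(x+t)-t$. The approximation ratio is invariant under scaling and translation of instances and only decreases under pointwise limits. Continuity and equicontinuity come from strategyproofness plus unanimity, which make $f$ $1$-Lipschitz in each report in a suitable sense; these are enough to pass to a limit in $\F_{n,d}$. The goal is a limit mechanism with no phantoms at finite positions, i.e.\ a positively homogeneous, translation-equivariant mechanism.

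\textbf{Step 2 (rigidity).} Next I would show that such a normalized $f$ is a fixed coordinate-wise order-statistic rule $x\mapsto\sum_i \operatorname{ord}_{k_i}(\langle x_1,e_i\rangle,\dots,\langle x_n,e_i\rangle)e_i$ in one orthonormal frame. The plan is an induction $n\mapsto n-2$. Send one report to infinity in a direction $v$ and another to infinity in direction $-v$; these two ``absorbers'' cancel, yielding an $(n-2)$-agent mechanism still in the class, down to the identity at $n=1$. At the base I would use the planar and one-dimensional characterizations, namely generalized medians and convex range as in the work of Peters, van der Stel and Storcken, applied on two-dimensional sections. These force the range restricted to any two moving agents to be a product of intervals in some frame. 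Consistency across all pairs of directions then forces a single frame. Anonymity makes the rank $k_i$ the same for all agents in each coordinate. This step is the main obstacle: showing that the frames found on different two-dimensional slices glue into one global orthonormal frame in dimension $d\ge3$, rather than into a direction-dependent mixture. I would first try to prove that the set of directions along which $f$ acts coordinate-wise is closed under orthogonal complements.

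\textbf{Step 3 (reflection).} It remains to show that any order-statistic rule with ranks $(k_1,\dots,k_d)$ has ratio at least $\AR(\CM_{\B})$, with equality only at medians. Take the extremal instances for $\CM_{\B}$, which the extremal constructions of Gravin and Jia approach. For a coordinate with $k_i\ne (n+1)/2$, reflect instances in $e_i^\perp$: this maps a rule with rank $k_i$ to one with rank $n+1-k_i$, and the ratio is reflection invariant. Combining the instance with its reflection, I would argue via convexity of the sum-of-distances cost that the worst case of the off-median rule dominates that of the median. Equivalently, moving an agent mass toward one side in coordinate $i$ makes the chosen point lie no closer to the geometric median. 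Coordinate by coordinate this gives $\AR\ge\AR(\CM_{\B})$, which contradicts the assumption.
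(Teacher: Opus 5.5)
\textbf{Gap 1: Step 2 (rigidity).} This step carries the entire structural content of the theorem, and the route you sketch does not work as stated.

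\emph{Slices are not invariant.} A mechanism on $\R^d$ does not restrict to a mechanism on a two-dimensional section unless that plane is spanned by axes of the (still unknown) frame. For $\CM_{\B}$ on $\R^3$, the reports $(1,0,1),(0,1,1),(0,0,0)$ lie in the plane $\{x_3=x_1+x_2\}$, but their coordinate-wise median $(0,0,1)$ does not. So the planar characterizations cannot be applied to slices, and gluing slice frames presupposes the invariant planes you are trying to find.

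\emph{The $\pm tv$ reduction is unsupported.} You give no argument that $f(tv,-tv,z)$ converges, is unanimous, or is independent of $v$. More seriously, even a known reduced mechanism does not determine $f$: at $n=3$ it is the identity, which carries no frame information (\cref{rem:three-agent-frame}).

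\emph{What the paper does instead.} Compact menus come from $\AR<n-1$ via \cref{lem:approx-menu} and the Gravin--Jia bound $R_{n,d}<2\le n-1$. That is where this bound is actually used, not for extremal instances. Compactness gives a canonical absorbing point $g(z)$ with $f(a,g(z),z)=g(z)$ (\cref{lem:binary-absorption,prop:absorber}). The menus of $g$ are ternary cores; their option intervals form a median algebra, which the projection structure straightens to $\CM_{\B}$ in one frame (\cref{thm:straightening}). Reports outside the core then force the core to be an orthogonal box (\cref{thm:core-box}). Clipping and same-frame rank lifting reconstruct $f$ from $g$ (\cref{prop:clipping,prop:rank-lifting}), and the common recession cone gives one global frame (\cref{thm:box-characterization}).

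\emph{Missing hypothesis.} You never use the approximation hypothesis before Step 3. Your Step 2 would therefore have to classify every normalized admissible mechanism, including those with unbounded menus, where the compactness behind the absorbing point is unavailable. That is more than the paper proves, and you offer no tool for it.

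\textbf{Gap 2: Step 1 (normalization).} The blow-down targets homogeneity, but that is automatic once translation equivariance holds (\cref{cor:automatic-homogeneity}). It does not produce equivariance. On the line, $\lambda^{-1}\med(\lambda x_1,\lambda x_2,\lambda x_3,p_1,p_2)\to\med(x_1,x_2,x_3,0,0)$, which is homogeneous but not translation-equivariant. Subsequential limits of $f(x+t_ku)-t_ku$ do exist by equicontinuity. However, nothing forces the limit at $x+v$ to equal the limit at $x$ plus $v$, since it is taken along a different family.

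The paper gets an offset-independent limit from three ingredients your plan lacks:
\begin{enumerate}[label=(\roman*)]
\item the sharp common-translation bound $\norm{f(x+v)-f(x+w)}\le\norm{v-w}$; the $d_1$ estimate gives only $n\norm{v-w}$, and the sharp bound comes from $0\preceq S\preceq I$ via radial invariance;
\item the resulting monotonicity of the bounded displacement $F_x(v)=v-f(x+v)$;
\item a single direction $u$ exposing a unique point of $\overline{\conv}\,F_x(E)$ for a dense set of profiles (\cref{lem:exposed-limit}).
\end{enumerate}

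\textbf{Step 3 (reflection)} is essentially the paper's argument, but run it profile by profile. At every $x$, $\CM_{\B}(x)$ lies in the convex hull of the $2^d$ reflected outputs. Alternatively, fixing one coordinate at a time, it lies on the segment between an output and its reflection. Convexity of $\SC(x,\cdot)$ then gives the bound directly. No extremal instances of $\CM_{\B}$ are needed, and ``equality only at medians'' is neither proved nor needed.
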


The value on the right is independent of $\B$, since an orthogonal
change of coordinates preserves all Euclidean distances. Thus the
theorem compares the coordinate-wise median with every mechanism $f$ in
$\F_{n,d}$. It extends the planar optimality theorem of Goel and
Hann-Caruthers to arbitrary finite dimension, with continuity as an
explicit assumption in the higher-dimensional result
\cite{GoelHC23}.

The main ingredient is a structural theorem that does not refer to a
social-cost objective. Fix the reports $x_{-i}$ of all agents except
agent $i$. The corresponding \emph{unilateral menu} $ M_i(x_{-i})
    =
    \{f(a,x_{-i}):a\in\mathbb R^d\}$
is the set of outcomes that agent $i$ can induce by changing only its
own report. We say that $f$ is translation-equivariant if
\[
    f(x_1+v,\ldots,x_n+v)
    =
    f(x_1,\ldots,x_n)+v
    \qquad
    (v\in\mathbb R^d).
\]

For an orthonormal basis $\B=(e_1,\ldots,e_d)$, write
$x_{i,j}=\langle e_j,x_i\rangle$, and let $x_{(k),j}$ be the
$k$th smallest element of $ x_{1,j},\ldots,x_{n,j}$.

\begin{theorem}[Rigidity under bounded influence]
\label{thm:rigidity}
Let $n\ge 3$ be odd, and let $f\in\F_{n,d}$ be
translation-equivariant. If every unilateral menu of $f$ is bounded,
then there are a single orthonormal basis
$\B=(e_1,\ldots,e_d)$ and fixed ranks $k_j\in\{2,\ldots,n-1\}$ for $j=1,\ldots,d$,
such that $f(x_1,\ldots,x_n)=
    \sum_{j=1}^d x_{(k_j),j}e_j$
for every reported profile. The basis and the ranks are independent
of the profile.
\end{theorem}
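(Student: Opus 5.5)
The plan is an induction on the odd population size. The reduction removes two reports at a time using an ``absorber'' construction, with the base case $n=1$, where a unanimous mechanism is the identity.

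\textbf{Menu geometry.} Fix $i$ and $x_{-i}$. Euclidean strategyproofness forces $f(x_i,x_{-i})$ to be a nearest point of the menu $M_i(x_{-i})$ to $x_i$. Continuity of $f$ makes the nearest-point selection single-valued and continuous on all of $\R^d$. A closed set with a single-valued nearest-point map everywhere is convex (Motzkin--Bunt). So each menu is a compact convex set $K$, and $f(\cdot,x_{-i})=\Pi_K$ is its metric projection. Translation equivariance gives $M_i(x_{-i}+v)=M_i(x_{-i})+v$. Anonymity makes the menu a symmetric function of $x_{-i}$.

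\textbf{Absorber reduction.} Send two reports to infinity in opposite directions, $x_{n-1}=p+tu$ and $x_n=p-tu$ with $t\to\infty$. Boundedness of menus together with projection structure should yield a limit mechanism $g$ on $n-2$ reports; compactness plus equicontinuity of projections gives locally uniform convergence. The limit $g$ inherits continuity, anonymity, unanimity, translation equivariance, strategyproofness and bounded menus. I must check that the limit exists and is independent of $p$ and of the sign of $t$; I would try monotonicity of projections onto nested or translated convex sets to obtain it. By induction $g$ is a coordinate-wise order-statistic rule in some basis $\B_u$ with ranks $k'_j$, and $g$ reduces to the identity at $n=1$.

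\textbf{Rigidity of menus and reassembly.} This is the main obstacle. The menu $M_i(x_{-i})$ is a compact convex body depending continuously and symmetrically on $n-1$ points. Reading off the limits, it must coincide in various asymptotic regimes with boxes $\prod_j[x_{(k_j-1),j},x_{(k_j),j}]$ of the inductive rule. My approach is to show first that every menu is a box with faces orthogonal to one fixed frame. In a generic profile, move a single other agent $\ell$ continuously. Projection structure for agent $\ell$, combined with symmetry between $i$ and $\ell$ (the identity $f(a,b,\dots)=\Pi_{M_i}(a)=\Pi_{M_\ell}(b)$), forces the faces of $M_i$ to move only through the coordinates of $x_\ell$ in a product fashion. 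A non-box face normal would create a direction in which agent $\ell$ could manipulate, which is a non-orthogonality of projections. Concretely, I would show that the two-variable map $(a,b)\mapsto f(a,b,x_{-\{i,\ell\}})$ is a composition of projections in both orders. Two commuting projections onto convex bodies, both varying with parameters, should force both bodies to be boxes in a common frame. The frame must then be constant across profiles, because it varies continuously and takes values in the discrete set of frames produced by the induction once the absorbing limit is taken. With boxes established, agent $i$'s outcome is coordinate-wise clipping of $x_{i,j}$ to $[\ell_j(x_{-i}),u_j(x_{-i})]$. Here $\ell_j,u_j$ are symmetric, translation-equivariant functions of the $j$th coordinates that agree with order statistics in the absorbing limit. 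A one-dimensional Moulin-type phantom argument then makes them exactly $x_{(k_j-1),j}$ and $x_{(k_j),j}$ of $x_{-i}$. This gives $f=\sum_j x_{(k_j),j}e_j$. Finally, bounded menus exclude $k_j\in\{1,n\}$: for those ranks the menu is unbounded in direction $\pm e_j$.
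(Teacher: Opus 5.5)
\textbf{Gap 1: the induction hypothesis is not available for $g$.} You assert that the limit mechanism $g$ inherits bounded menus, and it does not. Take $n=5$ and $f=Q_{\B,k}$ with every $k_j=2$, which has bounded menus. Sending $p\pm tu$ to infinity along a generic $u$ removes one report from each end in every coordinate. So $g=Q_{\B,k-1}$ is the coordinate-wise minimum of three reports, whose menus $\prod_j(-\infty,\min\{z_{2,j},z_{3,j}\}]$ are unbounded. At $n=3$ the limit is the one-report identity, whose menu is all of $E$. Hence the theorem cannot serve as the induction hypothesis for $g$. Iterating your reduction on $g$ also fails outright: for the coordinate-wise minimum the limit diverges.

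You need an inductive statement whose hypothesis survives the reduction. The paper shows that every menu of the smaller mechanism is an orthogonal box, in a frame that may depend on the menu. It then classifies normalized admissible mechanisms with box menus by a lexicographic induction on $(m,d)$:
\begin{itemize}
\item if the common recession cone $K$ of all menus is $\{0\}$, the population drops by two;
\item if $K\ne\{0\}$, then $K$ is pointed because $K=K_1\subseteq K_{m-1}=K_1^*$. An extreme ray of $K$ is then a coordinate axis of every box menu, and the mechanism splits into a scalar order statistic along that ray plus a mechanism on its orthogonal complement, lowering the dimension.
\end{itemize}

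Separately, existence and $p$-independence of your limit are unproved. Moreover, $p$-independence is false whenever $u$ has a zero coordinate in the unknown frame. The paper avoids limits altogether. It defines $g(z)$ as the unique absorbing point of $(a,b)\mapsto f(a,b,z)$, obtained from a partial order and the finite-intersection property of the compact menus. This construction also supplies the identity $f(a,g(z),z)=g(z)$, which every later step uses.

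\textbf{Gap 2: the box rigidity itself is missing.} The step you call the main obstacle is where the theorem's content lies, and your sketch does not supply it. Commutation of projections does not force boxes: projections onto concentric balls commute. Frame constancy also cannot come from discreteness. Orthonormal frames form a continuum, and at $n=3$ the absorber is the identity, which is compatible with every frame, so the induction provides no frame at all. You must therefore show directly that a symmetric, continuous, strategyproof ternary rule with $H(a,b,b)=b$ is $\CM_{\B}$ for some $\B$.

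The paper does this in three moves:
\begin{itemize}
\item it proves that the option intervals $I_{ab}=\{H(a,b,x):x\in E\}$ on a core satisfy Sholander's interval axioms, with the triple intersection a singleton by the diameter-ball bound;
\item it straightens the resulting median algebra to $\CM_{\B}$ via median-halfspace separation, gate projections and clamp saturation;
\item it uses reports outside the core to exclude oblique regular supporting normals, so the core is a box.
\end{itemize}
For $n\ge5$, the frame of $f$ is matched to that of $g$ through the clipping identity $P_C H(a,b,c)=\CM_{\B}(P_C a,P_C b,P_C c)$. This holds on generic core boxes, which have a finite face orthogonal to every axis. Your Moulin-type reassembly presupposes exactly this common-frame box structure, so it cannot be used to establish it.
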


The conclusion is stronger than a coordinate-by-coordinate
description at each individual profile. It produces one orthonormal
frame that works for the entire mechanism, together with one fixed
rank in each coordinate. Thus both coordinate separability and the
coordinate system are consequences of the mechanism's properties,
rather than assumptions built into its definition. The ranks may
differ across coordinates, but none can be extreme. In particular,
when $n=3$, rank $2$ is the only possibility, and the rule is a
coordinate-wise median in some orthonormal basis.

The two additional hypotheses in \cref{thm:rigidity} have direct
interpretations. Translation equivariance says that the mechanism
does not depend on the arbitrary choice of origin. Bounded unilateral
menus express bounded individual influence: once the reports of the
other agents are fixed, one agent cannot send the outcome arbitrarily
far away. Some condition of this kind is necessary. On the line, for
example, the minimum rule is continuous, anonymous, unanimous,
strategyproof, and translation-equivariant, but it selects an extreme
rank and has unbounded unilateral menus. Positive homogeneity is not
an additional assumption in \cref{thm:rigidity}; it follows from
translation equivariance and strategyproofness in our setting.

These two hypotheses do not restrict the mechanism class in
\cref{thm:optimality}. The dimension-independent bound of Gravin and
Jia~\cite{jia1} gives $\AR(\CM_{\B})<2.$
Consequently, any mechanism that strictly improves on the
coordinate-wise median has approximation ratio below
$2\le n-1$. We prove that every mechanism with approximation ratio
below $n-1$ has bounded unilateral menus. We also show that a
hypothetical competitor can be replaced by a translation-equivariant
one without increasing its approximation ratio. The replacement still
has ratio below $n-1$, so its unilateral menus are bounded as well.
The rigidity theorem therefore applies to this normalized representative:
it must be a fixed coordinate-wise order-statistic rule in one
orthonormal frame.

The structural theorem may also be of independent interest. It does
not mention total distance, approximation ratios, or any other social
objective. It describes the form of a class of strategyproof
Euclidean aggregation rules. The optimality proof then compares the
fixed-rank rules permitted by this characterization and shows that
none improves on the central rank.

\paragraph{Technical overview.}
We first prove the structural theorem. The argument begins with a
reduction that removes two reports. After fixing all remaining
reports, anonymity turns the mechanism into a symmetric two-report
rule. Strategyproofness identifies each one-report response with
Euclidean projection onto a closed convex menu, and boundedness makes
these menus compact. This compact projection structure yields a unique
\emph{absorbing point}: once either of the two variable reports is
placed at that point, the outcome remains there regardless of the
other report. As the fixed reports vary, the absorbing points define
a new strategyproof mechanism with two fewer agents.

The next step is geometric. A unilateral menu of the smaller mechanism
can be realized as a distinguished convex subset of a symmetric
three-report restriction of the original mechanism. We call this set a
\emph{ternary core}. The option sets inside a core satisfy the interval
axioms of a median algebra. Their compatibility with Euclidean
projection is highly restrictive: it forces the induced median
operation to be the ordinary coordinate-wise median in a single
orthonormal frame. This does not yet determine the shape of the core,
because an oblique convex set may still be closed under coordinate-wise
medians. We therefore use the fact that the three-report rule is
defined and strategyproof on the entire ambient Euclidean space,
including reports outside the core. This ambient extension rules out
oblique supporting directions and forces the core itself to be an
orthogonal box.

At this point every relevant menu is a box, but different menus might
a priori use different coordinate systems. We obtain one global frame
by a joint induction on the population size and the dimension. If the
common recession cone of the menus is trivial, then all menus are
bounded and the absorbing reduction lowers the population by two. If
the recession cone is nontrivial, an extreme recession direction is
shared by every menu. Splitting along this direction isolates a
one-dimensional order-statistic rule and reduces the ambient
dimension. Finally, a clipping identity reconstructs the original
mechanism from its absorber and increases each selected rank by one.
The induction therefore produces one profile-independent orthonormal
frame and one fixed rank in every coordinate.

For the approximation theorem, suppose that a mechanism strictly
outperforms the coordinate-wise median. Its approximation ratio forces
all unilateral menus to be bounded, while a separate reduction
replaces it by a translation-equivariant mechanism without increasing
that ratio. The replacement also has bounded unilateral menus, and the
rigidity theorem identifies this normalized representative as a
fixed-rank coordinate-wise rule. We extend the reflection comparison
of Goel and Hann-Caruthers from the four planar reflections to the
\(2^d\) coordinate reflections in \(\R^d\)
\cite[Lemma~6]{GoelHC23}. At every profile, the coordinate-wise median
lies in the convex hull of the reflected outputs. Since total distance
is convex in the facility location, for each profile at least one
reflected output has social cost no smaller than that of the median.
All reflected rules have the same worst-case approximation ratio as
the original rule. It follows that no fixed-rank rule can improve on
the coordinate-wise median.

\paragraph{Further related work.}
The papers above are the closest to our characterization and optimality results.  The broader social-choice literature on median rules is considerably larger.  Generalized median voter schemes, phantom voters, and uncompromising rules have been studied under many combinations of strategyproofness, efficiency, continuity, voter sovereignty, and range restrictions~\cite{moulin1980strategy,border1983straightforward,barbera1993generalized,ching1997strategy,barbera1998strategy}.  Multidimensional work differs importantly in the preference domain: some papers assume the single Euclidean-distance preference generated by each ideal point, while others require strategyproofness over richer classes of separable or quadratic preferences~\cite{kim1984nonmanipulability,peters1993range,peremans1997strategy}.  These assumptions are not interchangeable, because truthfulness on a richer domain places additional restrictions on a rule.  Nehring and Puppe~\cite{NehringPuppe2007} develop
characterization results for strategyproof social choice on median
spaces under rich domains of generalized single-peaked preferences. Barber\`a~\cite{barbera2011strategyproof} provides a general survey of strategyproof social choice.

Strategic facility location has also been studied on networks, trees, cycles, and discrete domains~\cite{schummer2002strategy,DokowFMN12,meir2019strategyproof,FilimonovM22}; for multiple facilities~\cite{EscoffierGTPS11,LuSWZ10}; and under group-strategyproofness or stronger manipulation requirements~\cite{TangYZ20}.  Other work considers minimax and least-squares objectives, concave costs, heterogeneous or obnoxious preferences, and the limits of deterministic mechanisms~\cite{AlonFPT10,FeldmanW13,FotakisT14,FotakisT16,SerafinoV16}.  Randomization can change the attainable approximation guarantees~\cite{AzizLSW22,barak26}, and a recent line augments facility-location mechanisms with predictions or advice~\cite{AgrawalBGTX22,barak2024mac,ChenGI24}.  Recent work also studies broader $L_p$ social-cost objectives in Euclidean spaces~\cite{ChanLW26,hastings2026}.  For a systematic overview of the algorithmic literature and its many variants, see the survey of Chan et al.~\cite{chan2021mechanism}.

\paragraph{Organization.}
\Cref{sec:prelim} introduces the model and develops the projection consequences of strategyproofness.  \Cref{sec:absorption} constructs the absorbing reduction and realizes its menus as ternary cores.  \Cref{sec:core-geometry} proves that these cores are orthogonal boxes.  \Cref{sec:reconstruction,sec:classification} reconstruct the original mechanism and align all menus in one frame.  \Cref{sec:approximation} applies the structural theorem to prove median optimality. 

\section{Model, notation, and projection tools}\label{sec:prelim}
This section specifies the model and notation, then develops the projection consequences of strategyproofness. We give proofs of the projection inequalities because they drive nearly every later reduction.

\subsection{Model and approximation objective}\label{sec:model}
Fix a finite-dimensional Euclidean space $E=\R^d$, with $d\ge1$, inner product $\ip{\cdot}{\cdot}$, and norm $\norm{\cdot}$. For a positive integer $n$, write $[n]=\{1,\ldots,n\}$. A \emph{mechanism} is a deterministic map $f:E^n\to E$. A profile $x=(x_1,\ldots,x_n)$ lists the agents' reported ideal points; $x_{-i}$ denotes the reports other than $x_i$, and $(a,x_{-i})$ denotes the profile obtained by replacing report $i$ with $a$.

An agent with ideal point $a$ evaluates an outcome $y$ by the cost $\norm{a-y}$. The mechanism is \emph{strategyproof} if, for every agent $i$, every outside profile $x_{-i}$, and all $a,b\in E$, $\norm{a-f(a,x_{-i})}\le\norm{a-f(b,x_{-i})}$.
It is \emph{anonymous} if permuting the reports leaves the outcome unchanged, and \emph{unanimous} if $f(a,\ldots,a)=a$ for every $a\in E$. Continuity is with respect to the usual Euclidean topologies. We write $\F_{n,d}$ for the class of continuous, anonymous, unanimous, strategyproof mechanisms and call its members \emph{admissible}. Results requiring fewer assumptions state them explicitly.

For $v\in E$ and $t\ge0$, write $x+v=(x_1+v,\ldots,x_n+v)$ and $tx=(tx_1,\ldots,tx_n)$. A mechanism is \emph{translation-equivariant} if $f(x+v)=f(x)+v$ for every $x,v$, and \emph{positively homogeneous} if $f(tx)=tf(x)$ for every $x,t$. Its \emph{unilateral menu} for agent $i$ at outside profile $x_{-i}$ is $M_i(x_{-i})=\{f(a,x_{-i}):a\in E\}$.
Boundedness of unilateral menus is a pointwise condition on the outside profile; a uniform bound over all profiles is not assumed.

The social cost and optimal social cost are
\[
 \SC(x,y)=\sum_{i=1}^n\norm{x_i-y},
 \qquad
 \OPT(x)=\min_{y\in E}\SC(x,y).
\]
The minimum exists because the objective is continuous and coercive. For $n\ge2$, define the worst-case approximation ratio of a mechanism
$f:E^n\to E$ by
\[
\AR(f)
:=
\sup_{x\in E^n}
\frac{\SC(x,f(x))}{\OPT(x)},
\]
where the quotient is defined to be $1$ when both numerator and
denominator are zero, and $+\infty$ when the denominator is zero
but the numerator is positive. The population $n$ and dimension $d$
are fixed in this supremum.
With this convention, every mechanism with $\AR(f)<\infty$ satisfies
$\SC(x,f(x))\le \AR(f)\OPT(x)$
for every profile $x$.
Since $\OPT(a,\ldots,a)=0$, this implies
$n\|f(a,\ldots,a)-a\|=0$ for every $a\in E$.
Thus finite approximation ratio forces unanimity, and restricting
to unanimous mechanisms does not affect the approximation optimum. 

For an orthonormal basis $\B=(e_1,\ldots,e_d)$, let $x_{i,j}=\ip{e_j}{x_i}$ and let $x_{(k),j}$ be the $k$th smallest value among $x_{1,j},\ldots,x_{n,j}$. For a fixed rank vector $k=(k_1,\ldots,k_d)\in[n]^d$, define $Q_{\B,k}(x)=\sum_{j=1}^d x_{(k_j),j}e_j$.
For odd $n$, the \emph{coordinate-wise median} $\CM_{\B}$ uses rank $(n+1)/2$ in every coordinate. For odd $n\ge3$, write $R_{n,d}=\AR(\CM_{\B})$; this value is independent of $\B$, since orthogonal transformations preserve distances and biject the profile space.

The corresponding median convention for even populations is specified in Appendix~\ref{app:even-populations}.

\subsection{Profiles, boxes, and affine subspaces}
For a nonempty coalition $S\subseteq[m]$, the profile $(a^S,z)$ has report $a$ in every position of $S$ and fixed reports $z$ outside $S$. When the positions are immaterial, $(a^k,z)$ means $k$ repeated reports followed by $z$. We allow $E^0=\{()\}$, so a three-report restriction has no outside profile when the original population is three. The exponent in $E^m$ always counts reports, not spatial dimensions. For profiles $x,x'\in E^m$, define
\[
 d_1(x,x')=\sum_{i=1}^m\norm{x_i-x_i'}.
\]
Thus $d_1$ is the sum of Euclidean distances between corresponding reports.

An \emph{orthogonal box} is a product of nonempty closed intervals in an orthonormal affine coordinate system. An interval may be bounded, a halfline, the full line, or a singleton. If $\B$ is fixed, $\boxB(a,b)$ denotes the closed coordinate box with opposite corners $a,b$. For a scalar closed interval $J=[\ell,u]$, possibly unbounded, $P_Js=\min\{u,\max\{\ell,s\}\}$ is clamping. Projection onto an orthogonal box is clamping in each coordinate: the squared Euclidean distance is a sum of independent coordinatewise squared distances.

If a convex set $C$ lies in a proper affine subspace, all dimension, interior, and boundary statements concerning its intrinsic geometry are taken in $\aff C$. Choosing an origin there identifies its direction space with a Euclidean vector space. Coordinate medians do not depend on this origin, since the scalar median commutes with translation. An orthonormal frame in $\aff C$ can be extended to the ambient space; the additional coordinates of $C$ are then singleton intervals.

\subsection{Why strategyproof responses are projections}
For a nonempty closed convex set $C\subseteq E$, let $P_Ca$ denote the unique nearest point in $C$ to $a$. Existence follows by restricting a minimizing sequence to a bounded ball and using closedness; uniqueness follows by strict convexity of squared distance along a segment between two distinct minimizers. The nontrivial point below is that a strategyproof response has a \emph{convex} range even though convexity was not assumed.

\begin{lemma}[Projection principle]\label{lem:projection}
Suppose $T:E\to E$ is continuous and
\begin{equation}\label{eq:one-input-sp}
 \norm{a-T(a)}\le\norm{a-T(b)}\qquad(a,b\in E).
\end{equation}
Then $C=T(E)$ is closed and convex, and $T=P_C$.
\end{lemma}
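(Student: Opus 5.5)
The argument has three steps: show that $T$ is the nearest-point map onto $C=T(E)$, then that $C$ is closed, then that $C$ is convex. Only the last step needs real work.

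\textbf{Nearest points and closedness.} By \eqref{eq:one-input-sp}, for every $a\in E$ the point $T(a)$ is at least as close to $a$ as every point $T(b)$ of $C$. So $T(a)$ is a nearest point of $C$ to $a$, and $\norm{a-T(a)}=\dist(a,C)$.

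A first consequence is that $T$ fixes $C$ pointwise. If $c=T(b)\in C$, then $\norm{c-T(c)}\le\norm{c-T(b)}=0$, so $T(c)=c$.

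Now take $c_k\in C$ with $c_k\to c$. Continuity gives $T(c)=\lim T(c_k)=\lim c_k=c$, so $c\in C$ and $C$ is closed. In particular the nearest-point sets onto $C$ are nonempty.

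\textbf{Convexity.} This is the main obstacle. A closed set in which every point has a unique nearest point is convex (a Chebyshev set), but uniqueness is not yet available: $T$ only selects one nearest point continuously. I would show directly that a nonconvex closed $C$ admits no continuous nearest-point selection.

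Suppose $p,q\in C$ and some point $m$ of the segment $[p,q]$ lies outside $C$. Since $C$ is closed, $\dist(m,C)=r>0$. Consider the set
\[ U=\{a\in E:\norm{a-T(a)}<\norm{a-c}\ \text{for all } c\in C\setminus\{T(a)\}\}, \]
the points with a unique nearest point. I would argue by a fixed-point or degree argument. Take a large closed ball $D$ around $m$ and define the map $g(a)=a-T(a)$ on $D$. If $T$ is continuous and $g$ never vanishes on $D$, then the normalized map $g/\norm{g}$ is defined on all of $D$. On the other hand, take a small sphere around $m$ of radius less than $r$. Every point $a$ there satisfies $T(a)\notin D_r(m)$, and the direction $T(a)-a$ is close to $T(m)-m$. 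I suspect this route becomes messy.

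A cleaner alternative is the standard one. Fix $c\in C$ and $a\notin C$, set $y=T(a)$, and consider the points $a_t=y+t(a-y)$ for $t\ge0$. For $t\in[0,1]$ the nearest point of $C$ to $a_t$ is $y$, and it is unique when $t<1$, so $T(a_t)=y$ there. The key claim is that $T(a_t)=y$ for all $t\ge0$.

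Let $t^*$ be the supremum of those $t$ with $T(a_s)=y$ for all $s\le t$. By continuity, $T(a_{t^*})=y$. Beyond $t^*$, the nearest points $y_s=T(a_s)$ differ from $y$ and converge to $y$ as $s\downarrow t^*$. Both $y$ and $y_s$ lie on or outside the sphere of radius $\dist(a_s,C)$ about $a_s$. A first-order comparison then shows $y_s$ must approach $y$ from a direction in which the ball about $a_{t^*}$ has radius strictly smaller than the ball about $a_s$. I would make this precise by showing that the balls $B(a_t,\norm{a_t-y})$ are nested and increasing in $t$, each containing the previous one and touching it only at $y$. Since $C$ avoids the interior of $B(a_s,\norm{a_s-y_s})$, this forces $y_s=y$ once $y_s$ is close enough to $y$. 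That contradicts the definition of $t^*$.

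Given the claim, the union of the interiors of these balls is the open half-space $\{z:\ip{a-y}{z-y}>0\}$, and it is disjoint from $C$. Hence $\ip{a-y}{c-y}\le0$ for every $c\in C$, which is the variational inequality characterizing $P_C$.

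\textbf{Conclusion.} Intersecting these half-spaces over all $a\notin C$ shows that $C$ is an intersection of closed half-spaces, so $C$ is convex. By uniqueness of projection onto a closed convex set, $T=P_C$.

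The step I expect to be delicate is the continuation claim: ruling out a jump of $T(a_s)$ away from $y$ and then back. I would handle it by using continuity at $t^*$ together with the strict nesting of the balls.
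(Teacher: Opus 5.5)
Your first step (nearest points, $T$ fixing $C$, closedness) is correct, and so is your endgame: once $T(a_t)=y$ for all $t\ge0$, the half-space argument gives $\ip{a-y}{c-y}\le0$ and then convexity. The gap is the continuation claim itself, and the nested-balls argument cannot prove it. Nesting only shows that $y_s$ avoids $\inter B(a_{t^*},\norm{a_{t^*}-y})$. It does not keep $C$ out of the crescent between that ball and $B(a_s,\norm{a_s-y})$. That crescent is tangent at $y$ and has only second-order thickness there, so points of $C$ can enter it tangentially.

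Here is a concrete example. In $\R^2$ let $C=\{(x_1,x_2):x_2\le x_1^2/(2R)\}$, $y=0$, and $a_s=(0,s)$. The nearest point is $0$ exactly for $s\le R$, uniquely for $s<R$. For $s>R$ the nearest points are $y_s=(\pm\sqrt{2R(s-R)},\,s-R)$, at squared distance $2Rs-R^2<s^2$. Selecting the $+$ branch gives a map with the following properties:
\begin{itemize}[nosep]
\item it is continuous along the whole ray and equals $y$ at the critical parameter;
\item $y_s\to y$ and $C\cap\inter B(a_s,\norm{a_s-y_s})=\varnothing$;
\item yet $y_s\ne y$ for every $s>R$.
\end{itemize}
So a nearest-point selection that is continuous along the ray, which is all your argument uses, cannot give the claim. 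In this example the obstruction to a continuous $T$ appears only off the ray: the points $(\eps,s)$ with $\eps>0$ and with $\eps<0$ project to opposite branches. Some use of continuity of $T$ on a full neighborhood is unavoidable. This is exactly the topological content of the Bunt--Motzkin theorem.

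There are two ways to close the gap.

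The fixed-point route you abandoned actually works. Put $x_0=a_{t^*}$, pick any $r>0$, and apply Brouwer's theorem to the continuous self-map $z\mapsto x_0+r\,(x_0-T(z))/\norm{x_0-T(z)}$ of $B(x_0,r)$. This map is well defined because $T(z)\in C$ and $x_0\notin C$. At a fixed point $z$, the point $x_0$ lies strictly between $T(z)$ and $z$, so $T(z)$ is the unique nearest point of $C$ to $x_0$. Hence $T(z)=T(x_0)=y$. Moreover $z=a_{t^*+r/\norm{a-y}}$, so $T(z)=y$ contradicts the maximality of $t^*$.

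Alternatively, and this is what the paper does, you dismissed uniqueness too quickly. If $c$ is any nearest point of $C$ to $a$, then $c$ is the unique nearest point to $(1-t)a+tc$ for $0<t<1$. So $T((1-t)a+tc)=c$, and letting $t\downarrow0$ gives $T(a)=c$. Thus $C$ is a Chebyshev set, and the paper cites the Bunt--Motzkin theorem for convexity. The classical proof of that theorem is precisely your ray argument completed by the Brouwer step above.
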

\begin{proof}
For any $c\in C$, choose $b$ with $T(b)=c$. Then \eqref{eq:one-input-sp} says that $T(a)$ is at least as close to $a$ as every $c\in C$. Thus $T(a)$ is a nearest point of $C$ to $a$. If the true point is $c\in C$, reporting the preimage $b$ gives distance zero; hence $T(c)=c$. Conversely, every fixed point is in the range. Therefore $C=\{c\in E:T(c)=c\}$,
which is closed by continuity.

We must still exclude multiple nearest points. Let $c\in C$ be any nearest point to $a$, and put $a_t=(1-t)a+tc$ for $0<t<1$. For $s\in C$,
\[
 \norm{a_t-s}^2-\norm{a_t-c}^2
 =(1-t)\bigl(\norm{a-s}^2-\norm{a-c}^2\bigr)+t\norm{s-c}^2.
\]
The first term is nonnegative and the second is positive for $s\ne c$. Thus $c$ is the unique nearest point to $a_t$, and $T(a_t)=c$. Letting $t\downarrow0$ yields $T(a)=c$. Since this holds for every nearest point $c$, the nearest point to $a$ is unique.

A set with a unique nearest point from every point of a finite-dimensional Euclidean space is called a \emph{Chebyshev set}. The Bunt--Motzkin theorem states that such a set is convex (see~\cite{BauschkeWangYeYuan2009}). Applying this theorem to $C$ completes the proof.
\end{proof}

\begin{lemma}[Projection inequalities]\label{lem:projection-tools}
Let $C\subseteq E$ be nonempty, closed, and convex.
\begin{enumerate}[label=(\roman*)]
\item For $p\in C$,
\begin{equation}\label{eq:projection-normal}
 p=P_Ca\quad\Longleftrightarrow\quad
 \ip{a-p}{b-p}\le0\quad\text{for every }b\in C.
\end{equation}
\item Projection is \emph{firmly nonexpansive}:
\begin{equation}\label{eq:firm}
 \norm{P_Ca-P_Cb}^2\le\ip{a-b}{P_Ca-P_Cb}.
\end{equation}
In particular, $P_C$ is $1$-Lipschitz.
\item If $D\subseteq C$ is nonempty, closed, and convex, and $P_Ca\in D$, then
\begin{equation}\label{eq:restrict-minimizer}
 P_Da=P_Ca.
\end{equation}
\end{enumerate}
\end{lemma}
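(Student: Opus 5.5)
The three parts are standard convex-analysis facts, and I would prove them in order, each resting on the previous one. All three come from the same one-variable computation: let $p\in C$ and $b\in C$, and put $p_t=p+t(b-p)$ for $t\in[0,1]$. By convexity $p_t\in C$. Expanding the square gives
\[
 \norm{a-p_t}^2=\norm{a-p}^2-2t\ip{a-p}{b-p}+t^2\norm{b-p}^2 .
\]

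For (i), first suppose $p=P_Ca$. Then the right-hand side is minimized over $[0,1]$ at $t=0$. Its right derivative at $t=0$ must therefore be nonnegative, which is exactly $\ip{a-p}{b-p}\le0$. Conversely, suppose the inequality holds for every $b\in C$. Taking $t=1$ gives
\[
 \norm{a-b}^2=\norm{a-p}^2-2\ip{a-p}{b-p}+\norm{b-p}^2\ge\norm{a-p}^2 .
\]
So $p$ is a nearest point of $C$ to $a$, and by uniqueness of the projection $p=P_Ca$.

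For (ii), write $p=P_Ca$ and $q=P_Cb$. Apply (i) at $a$ with the test point $q\in C$, which gives $\ip{a-p}{q-p}\le0$. Apply it again at $b$ with the test point $p\in C$, which gives $\ip{b-q}{p-q}\le0$. Adding the two inequalities yields $\ip{(a-b)-(p-q)}{q-p}\le0$. Rearranged, this is $\norm{p-q}^2\le\ip{a-b}{p-q}$, which is \eqref{eq:firm}. The Lipschitz bound then follows from Cauchy--Schwarz: $\norm{p-q}^2\le\norm{a-b}\,\norm{p-q}$, so $\norm{p-q}\le\norm{a-b}$ (the case $p=q$ is trivial).

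For (iii), put $p=P_Ca$ and assume $p\in D$. By (i) for $C$, we have $\ip{a-p}{b-p}\le0$ for every $b\in C$, and in particular for every $b\in D\subseteq C$. Since $p\in D$, the converse direction of (i), applied now to $D$, gives $p=P_Da$. One could argue even more directly: $p$ minimizes the distance to $a$ over the larger set $C$, it lies in $D$, and so it minimizes over $D$ as well.

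None of these steps is a real obstacle. The only point needing care is the converse direction of (i), where uniqueness of nearest points in a closed convex set is needed; this was already noted before \cref{lem:projection}.
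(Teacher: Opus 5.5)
Your proposal is correct and follows the paper's proof essentially step for step. For (i) you use the one-sided derivative along the segment and the quadratic expansion, for (ii) you sum the two normal inequalities at $q$ and $p$, and for (iii) your second, direct argument (the larger set's minimizer stays feasible and optimal in $D$) is exactly the paper's.
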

\begin{proof}
For necessity in (i), fix $z\in C$ and compare the squared distance to $p$ with that to $p+t(z-p)$ for $0<t\le1$. Divide the resulting inequality by $t$ and let $t\downarrow0$. For sufficiency, expand
\[
 \norm{a-z}^2=\norm{a-p}^2+\norm{z-p}^2-2\ip{a-p}{z-p}.
\]
The right side is at least $\norm{a-p}^2$.

For (ii), set $p=P_Ca$ and $q=P_Cb$. Part (i) gives $\ip{a-p}{q-p}\le0$ and $\ip{b-q}{p-q}\le0$.
Rearranging their sum gives \eqref{eq:firm}. The $1$-Lipschitz bound follows from Cauchy--Schwarz, cancelling $\norm{p-q}$ when it is nonzero. For (iii), the unique minimizer over the larger feasible set remains feasible and optimal in the smaller set.
\end{proof}

These are classical convex-projection facts (see, for example, ~\cite{Rockafellar1970}). In particular, \eqref{eq:restrict-minimizer} is \emph{not} a claim that projections onto nested convex sets commute. We use it only after checking that the known minimizer belongs to the smaller set.

\begin{corollary}[Outcome replacement and radial invariance]\label{cor:replacement}
Let $f:E^m\to E$ be continuous and strategyproof. If $f(x)=y$, then for every agent $i$,
\begin{equation}\label{eq:replacement}
 f(y,x_{-i})=y,
\end{equation}
and, for every $t\ge0$,
\begin{equation}\label{eq:radial-one}
 f\bigl(y+t(x_i-y),x_{-i}\bigr)=y.
\end{equation}
Consequently,
\begin{align}
 \norm{f(x)-f(x')}&\le d_1(x,x'),\label{eq:global-lipschitz}\\
 f\bigl(y+t_1(x_1-y),\ldots,y+t_m(x_m-y)\bigr)&=y
 \qquad(t_1,\ldots,t_m\ge0).\label{eq:radial-all}
\end{align}
\end{corollary}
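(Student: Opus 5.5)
The plan is to apply \cref{lem:projection} to each unilateral response and then chain single-coordinate replacements. Fix an agent $i$ and the outside profile $x_{-i}$, and set $T(a)=f(a,x_{-i})$. This $T$ is continuous, and strategyproofness is exactly the inequality \eqref{eq:one-input-sp}. Hence $C=M_i(x_{-i})$ is closed and convex, and $T=P_C$. Since $y=f(x)=T(x_i)\in C$ and $P_C$ fixes the points of $C$, we get $T(y)=y$, which is \eqref{eq:replacement}.

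For \eqref{eq:radial-one}, write $y=P_Cx_i$ and apply \cref{lem:projection-tools}(i). The point $y+t(x_i-y)$ satisfies
\[
\ip{y+t(x_i-y)-y}{b-y}=t\ip{x_i-y}{b-y}\le0
\]
for every $b\in C$. The normal-cone characterization then gives $P_C(y+t(x_i-y))=y$. The case $t=0$ is consistent with \eqref{eq:replacement}.

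For \eqref{eq:global-lipschitz}, pass from $x$ to $x'$ by changing one report at a time through the hybrid profiles $z^{(k)}=(x'_1,\ldots,x'_k,x_{k+1},\ldots,x_m)$. Consecutive hybrids differ only in report $k$. By \cref{lem:projection-tools}(ii), the map $a\mapsto f(a,z^{(k)}_{-k})$ is a projection and hence $1$-Lipschitz, so
\[
\norm{f(z^{(k)})-f(z^{(k-1)})}\le\norm{x'_k-x_k}.
\]
Summing over $k$ with the triangle inequality yields $d_1(x,x')$.

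For \eqref{eq:radial-all}, again move one coordinate at a time. First replace $x_1$ by $y+t_1(x_1-y)$; by \eqref{eq:radial-one} the outcome stays $y$. The new profile still has outcome $y$, and its $i$th report is unchanged for $i\ge2$. So \eqref{eq:radial-one} applies again to agent $2$ with the same $y$, and induction over the agents gives the claim.

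The only delicate point is ensuring that each inductive step uses the current outcome, which stays $y$ throughout. There is no real obstacle, since everything reduces to \cref{lem:projection} and the normal-cone inequality.
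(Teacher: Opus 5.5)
Your proposal is correct and follows the paper's proof essentially step for step. The fixed-point property of the projection from \cref{lem:projection} gives \eqref{eq:replacement}, and scaling the normal-cone inequality gives \eqref{eq:radial-one}. Changing one report at a time then yields both \eqref{eq:global-lipschitz} and \eqref{eq:radial-all}, with your hybrid profiles and the running check that the outcome stays $y$ simply spelling out what the paper states in a line.
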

\begin{proof}
Apply \cref{lem:projection} to the response of agent $i$. Equation~\eqref{eq:replacement} is its fixed-point property. The condition \eqref{eq:projection-normal} remains true when the vector $x_i-y$ is multiplied by any $t\ge0$, proving \eqref{eq:radial-one}. To obtain \eqref{eq:global-lipschitz}, change reports one at a time and apply the $1$-Lipschitz bound at each change. To obtain \eqref{eq:radial-all}, apply \eqref{eq:radial-one} successively and the outcome remains $y$ after every replacement.
\end{proof}

\begin{corollary}[Translation equivariance implies homogeneity]\label{cor:automatic-homogeneity}
A continuous, unanimous, strategyproof mechanism that is translation-equivariant is positively homogeneous.
\end{corollary}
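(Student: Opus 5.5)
The idea is to use the radial invariance \eqref{eq:radial-all} of \cref{cor:replacement} together with translation equivariance, treating the cases $t>0$ and $t=0$ separately. Fix a profile $x\in E^m$ and put $y=f(x)$. For $t>0$ we want $f(tx)=ty$. The plan is to move the profile into a frame centred at $y$, scale there, and then translate back.

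For the first step, apply \eqref{eq:radial-all} with all $t_i=t$. This gives
\[
 f\bigl(y+t(x_1-y),\ldots,y+t(x_m-y)\bigr)=y .
\]
The profile on the left is $tx+(1-t)y$, meaning every report is shifted by the common vector $(1-t)y$. Translation equivariance turns the left side into $f(tx)+(1-t)y$. Hence
\[
 f(tx)+(1-t)y=y,
\]
that is, $f(tx)=ty$, for every $t\ge0$. The case $t=0$ is in fact covered directly, since \eqref{eq:radial-all} allows $t_i=0$. Alternatively, for $t=0$ we need $f(0,\ldots,0)=0$, which is exactly unanimity at the origin.

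No real obstacle is expected here. The only points to check are that \eqref{eq:radial-all} genuinely permits a common scaling factor applied to all agents simultaneously, including $t=0$, and that the identity $y+t(x_i-y)=tx_i+(1-t)y$ matches the translation-equivariance convention $f(x+v)=f(x)+v$ with $v=(1-t)y$. Continuity is used only through \cref{cor:replacement}, which relies on the projection principle.
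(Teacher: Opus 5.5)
Your proof is correct and is essentially the paper's own argument: apply the radial invariance identity \eqref{eq:radial-all} with a common factor $t$ to get $f(tx+(1-t)y)=y$, then use translation equivariance to obtain $f(tx)+(1-t)y=y$. Your remark that $t=0$ is already covered by \eqref{eq:radial-all}, which allows $t_i=0$, is also right.
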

\begin{proof}
Put $y=f(x)$. By \eqref{eq:radial-all}, $f(tx+(1-t)y)=y$ for $t\ge0$, where $(1-t)y$ is added to every report. Translation equivariance gives
$f(tx)+(1-t)y=y$, hence $f(tx)=ty$.
\end{proof}
Positive homogeneity for two-agent translation-equivariant strategyproof mechanisms was established by Lin~\cite{Lin2020}, who also proposed an arbitrary-population extension. The argument above establishes the Euclidean case under continuity for every finite population.

We call a mechanism \emph{normalized} when it is
translation-equivariant and positively homogeneous. Under our baseline
assumptions, the preceding corollary shows that translation equivariance
already implies positive homogeneity. We retain both identities in the
definition for ease of reference later in the proof.

\subsection{Coalition menus}

The projection principle treats one report at a time. Later, when we
construct the smaller mechanism, we will set several reports equal and
move them together. We therefore need to understand the outcomes that
can be obtained along such a diagonal.

The next lemma shows more than the diagonal incentive inequality. It
shows that diagonal reports attain every outcome that the coalition
could produce using arbitrary reports. Consequently, the diagonal
response is projection onto the coalition's full menu. We want to emphasize that the following arguments
use only individual strategyproofness and no form of group
strategyproofness is assumed. Fix a continuous, strategyproof mechanism $f:E^m\to E$, a nonempty set of agents \(S\subseteq[m]\), and the reports
\(z\) outside \(S\). Define $T_{S,z}(a)=f(a^S,z)$
and let $ C_S(z)
    =
    \{f(r_S,z):r_S\in E^S\}$
be the set of outcomes that the agents in \(S\) can generate by
reporting arbitrary points. Then,

\begin{lemma}\label{lem:coalition}
The range of \(T_{S,z}\) is exactly \(C_S(z)\). Moreover,
\(C_S(z)\) is closed and convex, and $T_{S,z}=P_{C_S(z)}$.
\end{lemma}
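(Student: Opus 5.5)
The plan is to reduce everything to \cref{lem:projection} applied to the one-input map $T=T_{S,z}$. The central point is a comparison inequality: for every true point $a\in E$ and every outcome $c\in C_S(z)$,
\[
 \norm{a-T(a)}\le\norm{a-c}.
\]
This says that the diagonal report does at least as well for an agent located at $a$ as any outcome the coalition can reach, even though no group strategyproofness is assumed.

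\textbf{Step 1: the comparison inequality by a replacement chain.} Write $S=\{i_1,\ldots,i_k\}$ and fix $c=f(r_S,z)$ for some $r_S\in E^S$. For $j=0,\ldots,k$, let $x^{(j)}$ be the profile in which agents $i_1,\ldots,i_j$ report $a$, agents $i_{j+1},\ldots,i_k$ report their entries of $r_S$, and the agents outside $S$ report $z$. Put $y_j=f(x^{(j)})$, so $y_0=c$ and $y_k=T(a)$. The profiles $x^{(j-1)}$ and $x^{(j)}$ differ only in the report of agent $i_j$. Apply individual strategyproofness to $i_j$ with true ideal point $a$ and outside profile $x^{(j)}_{-i_j}$, comparing truthful report $a$ with report $r_{i_j}$. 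This gives $\norm{a-y_j}\le\norm{a-y_{j-1}}$. Chaining these $k$ inequalities yields the comparison inequality.

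\textbf{Step 2: apply the projection principle.} Trivially $T(E)\subseteq C_S(z)$, since the diagonal is one particular coalition report. Step 1 therefore gives $\norm{a-T(a)}\le\norm{a-T(b)}$ for all $a,b\in E$, which is exactly \eqref{eq:one-input-sp}. The map $T$ is continuous because $f$ is. \Cref{lem:projection} then shows that $D:=T(E)$ is closed and convex and that $T=P_D$.

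\textbf{Step 3: identify the range.} It remains to show $C_S(z)\subseteq D$. Take $c\in C_S(z)$ and apply Step 1 with $a=c$. This gives $\norm{c-T(c)}\le\norm{c-c}=0$, so $T(c)=c\in D$. Hence $D=C_S(z)$, and the lemma follows.

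The only step with real content is Step 1. The one thing to check there is that each intermediate comparison is a legitimate unilateral deviation. It is: at each stage only one coalition member's report changes, and every other report, including those already moved to $a$, is held fixed as that agent's outside profile. Once the inequality is in hand, closedness, convexity and the projection identity all come from \cref{lem:projection}.
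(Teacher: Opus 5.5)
Your proof is correct and follows essentially the same route as the paper: a chain of single-agent strategyproofness comparisons, followed by \cref{lem:projection}. The only difference is that your chain starts from an arbitrary coalition report $r_S$ rather than from a second diagonal report $b^S$. As a result, the reverse inclusion $C_S(z)\subseteq T_{S,z}(E)$ follows from the same inequality at $a=c$, whereas the paper obtains it separately by outcome replacement (\cref{cor:replacement}), which is the same fixed-point argument.
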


\begin{proof}
There are two points to establish. First, the diagonal map must satisfy
the same incentive inequality as a one-agent response. Second, its
range must coincide with the full coalition menu.

Enumerate $S=\{i_1,\ldots,i_k\}$ and fix \(a,b\in E\). For \(r=0,\ldots,k\), let \(x^{(r)}\) be the
profile in which agents \(i_1,\ldots,i_r\) report \(b\), the remaining
agents in \(S\) report \(a\), and the agents outside \(S\) report \(z\).
Thus $x^{(0)}=(a^S,z)$
    and
    $x^{(k)}=(b^S,z)$.
For each \(r=1,\ldots,k\), the profiles \(x^{(r-1)}\) and \(x^{(r)}\)
differ only in the report of agent \(i_r\). Apply strategyproofness to
that agent with true point \(a\). Reporting \(a\) produces
\(x^{(r-1)}\), whereas reporting \(b\) produces \(x^{(r)}\). Hence $\norm{a-f(x^{(r-1)})}
    \le
    \norm{a-f(x^{(r)})}$.
Chaining these inequalities gives $\norm{a-T_{S,z}(a)}
    \le
    \norm{a-T_{S,z}(b)}$.
Notice that this argument applies individual strategyproofness at each
intermediate profile. It does not treat the agents in \(S\) as a
single strategic player.

It remains to identify the range. Every diagonal profile is a
particular coalition profile, so $T_{S,z}(E)\subseteq C_S(z)$.
For the reverse inclusion, take any $y=f(r_S,z)\in C_S(z)$.
Starting from the profile \((r_S,z)\), replace the reports in \(S\),
one at a time, by the current outcome \(y\). Outcome replacement shows
that the outcome remains \(y\) after each replacement. Once all reports
in \(S\) have been replaced, we obtain $f(y^S,z)=y$.
Thus \(y=T_{S,z}(y)\), so \(y\) also belongs to the range of the
diagonal map. Therefore $T_{S,z}(E)=C_S(z)$.

Finally, \(T_{S,z}\) is continuous as a restriction of \(f\). It
satisfies the incentive inequality above and has range \(C_S(z)\), so
\cref{lem:projection} applies. Hence \(C_S(z)\) is closed and convex,
and \(T_{S,z}=P_{C_S(z)}\).
\end{proof}

The lemma lets us treat a common report by several agents as a single
projected input. Its conclusion is geometric, not coalitional: it does
not say that arbitrary joint deviations are unprofitable.

Before turning to the absorbing reduction, we record what bounded
unilateral menus mean once the mechanism has been normalized. Besides
giving an intuitive interpretation of the boundedness assumption, the
result will be used below to show that the smaller mechanism preserves
unanimity.

\begin{remark}\label{rem:near-unanimity}
Let \(m\ge2\), and let \(f\in\F_{m,d}\) be normalized. Then the
following two conditions are equivalent:

\[
\text{Every unilateral menu of \(f\) is bounded} \Leftrightarrow 
f(a,b^{m-1})=b
  \quad\text{for every }a,b\in E.
\]

To prove the forward direction, consider the unilateral menu at the
zero outside profile, $K=\{f(a,0^{m-1}):a\in E\}$.
It is bounded by assumption. Positive homogeneity makes \(K\) a cone,
and unanimity gives \(0\in K\). A bounded cone containing the origin
cannot contain a nonzero point, since all positive multiples of that
point would also belong to the cone. Therefore \(K=\{0\}\).

Translation equivariance now gives
\begin{equation}\label{eq:near-unanimity}
    f(a,b^{m-1})
    =
    b+f(a-b,0^{m-1})
    =
    b.
\end{equation}
Thus one dissenting report cannot move the outcome away from the common
report of all the other agents.

Conversely, suppose that \eqref{eq:near-unanimity} holds. Taking
\(b=0\) gives $f(a,0^{m-1})=0$ for $a\in E$.
Fix outside reports \(z=(z_1,\ldots,z_{m-1})\). The global Lipschitz
estimate then yields
\[
\begin{aligned}
    \norm{f(a,z_1,\ldots,z_{m-1})}
    =
    \norm{
        f(a,z_1,\ldots,z_{m-1})
        -
        f(a,0^{m-1})
    }\le
    \sum_{j=1}^{m-1}\norm{z_j}.
\end{aligned}
\]
The right-hand side is independent of \(a\), so the corresponding
unilateral menu is bounded. By anonymity, the same conclusion holds
for every agent.
\end{remark}

These observations provide the two ingredients needed in the next
section. Bounded menus become compact because they are already closed,
which allows us to construct an absorbing point; the identity
\eqref{eq:near-unanimity} will then help show that the resulting
smaller mechanism remains unanimous.

\section{The absorbing reduction and ternary cores}\label{sec:absorption}
The reduction in this section removes two reports while retaining the incentive properties. The key point is that a symmetric binary rule with compact unilateral menus has a canonical absorbing point. Symmetry here means interchange of the two agents; it is unrelated to rotation invariance of the ambient space.

\subsection{An absorbing point for a binary restriction}
\begin{lemma}[Binary absorption]\label{lem:binary-absorption}
Let $h:E^2\to E$ be symmetric, continuous, and strategyproof in each argument. Suppose that $C_a=\{h(a,b):b\in E\}$
is compact for every fixed $a\in E$. Then there is a unique $p\in E$ such that
$h(a,p)=p$ holds for all $a\in E$.
\end{lemma}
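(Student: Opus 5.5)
The plan is to translate the absorbing condition into a statement about the menus $C_a$ and then use compactness.

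By \cref{lem:projection}, applied to the continuous strategyproof response $b\mapsto h(a,b)$, each $C_a$ is closed and convex and $h(a,\cdot)=P_{C_a}$. By symmetry, $h(a,b)=h(b,a)=P_{C_b}a$. Hence $h(a,b)\in C_a\cap C_b$ for all $a,b$.

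\textbf{Reformulation.} A point $p$ is absorbing exactly when $h(a,p)=p$ for all $a$, that is, $P_{C_a}p=p$ for all $a$. This is equivalent to
\[
p\in\bigcap_{a\in E}C_a,
\]
and, reading $h(a,p)=P_{C_p}a$, also to $C_p=\{p\}$.

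\textbf{Uniqueness.} If $p$ and $q$ are both absorbing, then $h(p,q)=q$ because $q$ absorbs, and $h(q,p)=p$ because $p$ absorbs. Symmetry then gives $p=q$.

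\textbf{Existence.} The family $\{C_a\}$ consists of compact sets, so it suffices to show it has the finite intersection property. The tool I would aim for is a nesting lemma:
\[
y\in C_a\ \Longrightarrow\ C_y\subseteq C_a .
\]
Granting this, finite intersections follow by iteration. Put $y_2=h(a_1,a_2)\in C_{a_1}\cap C_{a_2}$, so $C_{y_2}\subseteq C_{a_1}\cap C_{a_2}$. Next put $y_3=h(y_2,a_3)\in C_{y_2}\cap C_{a_3}$, so $C_{y_3}\subseteq C_{a_1}\cap C_{a_2}\cap C_{a_3}$. Continuing in this way, every finite intersection contains a nonempty menu. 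Compactness then gives a point $p$ in $\bigcap_a C_a$, which is absorbing by the reformulation.

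\textbf{Main obstacle: the nesting lemma.} Write $y=h(a,b)$ and fix $c$. We must show $h(y,c)=P_{C_c}y$ lies in $C_a$.

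My first attempt uses the radial/replacement facts of \cref{cor:replacement} for the binary rule. We have $h(y,b)=y$, so $y\in C_b$, and $h(a,y)=y$ by outcome replacement. I would then compare $h(y,c)$ with $h(a,c)$ using the projection inequalities of \cref{lem:projection-tools}(i), applied to the two convex sets $C_c$ and $C_a$. The specific check is the normal-cone inequality for $P_{C_a}$ at the point $P_{C_c}y$: the goal is to show that this point is fixed by $P_{C_a}$, i.e.\ that $h(a,P_{C_c}y)=P_{C_c}y$. To do so I would exploit $h(a,y)=y$ together with the Lipschitz/firm-nonexpansive bound (\cref{lem:projection-tools}(ii)) along the segment from $y$ to $P_{C_c}y$.

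If the nesting lemma resists, the fallback is a different route to nonemptiness of the intersection. One option is a Helly-type argument: pairwise intersections are already nonempty, but Helly would need $(d+1)$-wise intersections. The other is a Brouwer fixed-point argument on the compact convex set $C_{a_0}$ applied to a continuous self-map built from $h$, designed so that any fixed point $p$ satisfies $C_p=\{p\}$.
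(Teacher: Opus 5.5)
Your overall structure is sound and matches the paper's. Uniqueness follows by symmetry. Existence follows from compactness plus the finite intersection property. Your nesting lemma ``$y\in C_a\Rightarrow C_y\subseteq C_a$'' is exactly the transitivity of the paper's relation $a\preceq b\iff h(a,b)=b$: $y\in C_a$ means $h(a,y)=y$, and $w\in C_y$ means $h(y,w)=w$. Your iteration giving $C_{y_k}\subseteq C_{a_1}\cap\cdots\cap C_{a_k}$ is fine, as is the final step from $p\in\bigcap_a C_a$ to $h(a,p)=P_{C_a}p=p$.

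The gap is that the nesting lemma carries all the content of the statement, and you do not prove it. The route you sketch does not close it:
\begin{itemize}
\item With $h(a,y)=y$, firm nonexpansiveness of $P_{C_a}$ at the inputs $y$ and $w=h(y,c)$ only gives $\norm{h(a,w)-y}^2\le\ip{w-y}{h(a,w)-y}$. This places $h(a,w)$ in the ball with diameter $[y,w]$, which does not force $h(a,w)=w$.
\item Working through $C_c$ via $w=P_{C_c}y$ uses the wrong menu. You have no reason to know $h(a,w)\in C_c$, and that would itself be a nesting-type statement.
\end{itemize}
Neither fallback is developed. Helly needs more than pairwise intersections, and the Brouwer map is unspecified.

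The missing idea is to use the menu of the new point $w$. Since $w\in C_y$, you have $h(y,w)=w$, so by symmetry $w=P_{C_w}y$. Also $w':=h(a,w)=h(w,a)$ lies in $C_w$. Test the projection inequality of \cref{lem:projection-tools}(i) twice: for $w=P_{C_w}y$ against $w'\in C_w$, and for $w'=P_{C_a}w$ against $y\in C_a$. With $u=w'-w$ and $v=y-w$ this gives
\[
\ip{v}{u}\le 0,\qquad \norm{u}^2\le\ip{u}{v}.
\]
Hence $u=0$, that is, $h(a,w)=w$, which is the nesting lemma.

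The second inequality is precisely your diameter-ball bound. The first is the supporting halfspace at $w$ coming from $C_w$, and it meets that ball only at $w$. This is the paper's transitivity computation.
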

\begin{proof}
Let $D=\{h(a,b):a,b\in E\}$ be the full range. If $y=h(a,b)$, replacing both reports successively by $y$ gives $h(y,y)=y$. Define a relation on $D$ by
\[
 a\preceq b\quad\Longleftrightarrow\quad h(a,b)=b.
\]
It is reflexive by the preceding observation. If both $a\preceq b$ and $b\preceq a$, symmetry gives $a=h(b,a)=h(a,b)=b$, so it is antisymmetric.

We next check transitivity using only projection geometry. Suppose $h(a,b)=b$ and $h(b,c)=c$, and put $w=h(a,c)$. The response with first report $a$ is projection onto $C_a$, so $w=P_{C_a}c$ and $b\in C_a$. The response with second report $c$ is projection onto $D_c=\{h(x,c):x\in E\}$, so $c=P_{D_c}b$ and $w\in D_c$. Consequently, $\ip{c-w}{b-w}\le0$,
 and $\ip{b-c}{w-c}\le0$.
With $u=w-c$ and $v=b-c$, these are
$\norm{u}^2\le\ip{u}{v}$ and $\ip{u}{v}\le0$. Hence $u=0$, or $w=c$. The relation is therefore a partial order. 

For $a\in D$, the fixed-point characterization of a unilateral menu gives
$C_a=\{y\in D:a\preceq y\}$.
For any $a,b\in D$, the outcome $h(a,b)$ is an upper bound of both: replacing either report by that outcome proves both required relations. By repeated use of transitivity, every finite subset of $D$ has an upper bound.

Fix $a_0\in D$. The sets $C_{a_0}\cap C_a$, indexed by $a\in D$, are closed subsets of the compact set $C_{a_0}$. Every finite collection has nonempty intersection, because its finitely many indices together with $a_0$ have a common upper bound. Compactness therefore gives $p\in\bigcap_{a\in D}C_a$.
In the partial order, $p$ is a greatest element. If $y\in C_p$, then $p\preceq y$, while greatestness gives $y\preceq p$. Thus $C_p=\{p\}$. This menu is defined using \emph{all} possible second reports in $E$, not just reports in $D$, so $h(p,a)=p$ for every $a\in E$. Symmetry gives the stated identity. If $p$ and $q$ both absorb every report, then $h(p,q)=p=q$, proving uniqueness.
\end{proof}

The lemma applies after the reports of all but two agents have been
fixed. For each outside profile \(z\), it therefore selects a unique
point \(p_z\) with the property $f(a,p_z,z)=p_z$ for every $a\in E$.
Because this point is unique, it is canonically determined by the
\(n-2\) outside reports. This suggests defining a new map
\(g(z)=p_z\). In this way, the two free reports are replaced by one
absorbing outcome, and the original \(n\)-report mechanism gives rise
to a mechanism with \(n-2\) reports. The next proposition verifies
that this pointwise construction varies continuously with \(z\) and
preserves the incentive and symmetry properties needed for the
population induction.

\subsection{The smaller mechanism}
\begin{proposition}[The absorbing mechanism]\label{prop:absorber}
Let $n\ge3$, and let $f:E^n\to E$ be anonymous, continuous, and strategyproof with compact unilateral menus. There is a unique map $g:E^{n-2}\to E$ satisfying
\begin{equation}\label{eq:absorber}
 f(a,g(z),z)=g(z)\qquad(a\in E,\ z\in E^{n-2}).
\end{equation}
The map $g$ is anonymous, continuous, and strategyproof. If $f$ is also normalized and unanimous, then so is $g$.
\end{proposition}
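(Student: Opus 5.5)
Existence and uniqueness come straight from \cref{lem:binary-absorption}. Fix $z\in E^{n-2}$ and put $h_z(a,b)=f(a,b,z)$. Anonymity makes $h_z$ symmetric, and it is continuous and strategyproof in each argument. The set $\{h_z(a,b):b\in E\}$ is a unilateral menu of $f$, so it is compact. The lemma therefore yields a unique absorbing point $p_z$, and we set $g(z)=p_z$; uniqueness of $p_z$ for each $z$ gives uniqueness of $g$. Anonymity of $g$ follows from uniqueness as well: permuting $z$ permutes outside reports of $f$, which leaves $f$ unchanged, so the permuted profile has the same absorbing point.

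For the incentive property I would use a characterization of $g(z)$ through the coalition menu of the two free agents. Let $C(z)=C_{\{1,2\}}(z)$, a closed convex set by \cref{lem:coalition}. I claim $g(z)$ is the greatest element of $D$ in the partial order from the proof of \cref{lem:binary-absorption}, and equivalently $g(z)=\bigcap_{a}C_a(z)$. The main tool is outcome replacement, which shows that $g(z)$ is determined by fixed-point relations.

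To prove strategyproofness of $g$ in agent $3$ (the first coordinate of $z$), fix the remaining reports $w$ and write $z=(c,w)$. Consider the diagonal-type map $c\mapsto f(c,g(c,w),c,w)$. Since $g(c,w)$ absorbs, this map equals $g(c,w)$. On the other hand, for any $c'$ I expect $f(c,g(c',w),c,w)$ to be the projection of $c$ onto a menu that contains $g(c',w)$. The intended route is to compare $\norm{c-g(c,w)}$ with $\norm{c-g(c',w)}$ by applying strategyproofness of $f$ to agents $1$ and $3$ together, using the chaining argument of \cref{lem:coalition}.

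Concretely, take the profile $(c,g(c',w),c',w)$, whose outcome is $g(c',w)$. Replace agent $3$'s report $c'$ by $c$; the new outcome lies at distance at most $\norm{c-g(c',w)}$ from $c$, by strategyproofness for an agent with true point $c$. Then replace agent $1$'s report by $c$ in the same way. It remains to show the resulting outcome $f(c,g(c',w),c,w)$ is no closer to $c$ than $g(c,w)$. I would try to show that the menu of agent $2$ at outside reports $(c,c,w)$ contains $g(c,w)$ and is the set onto which the other outcomes project. This is the step I expect to be the main obstacle: relating the absorbing point to a projection in the $z$-variable. My first attempt would be to prove $g(c,w)=P_{K}(c)$, where $K=\{g(c',w):c'\in E\}$. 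I would show this by verifying the normal-cone inequality \eqref{eq:projection-normal} through the transitivity computation of \cref{lem:binary-absorption}, and then apply \cref{lem:projection} once continuity is known.

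Continuity of $g$ comes from a compactness and closed-graph argument. Take $z_k\to z$. The points $g(z_k)$ stay bounded: by \eqref{eq:global-lipschitz}, $g(z_k)=f(a,g(z_k),z_k)$ lies within $d_1(z_k,z)$ of the menu $\{f(a,b,z):b\}$, which is compact. Passing to the limit in $f(a,g(z_k),z_k)=g(z_k)$, continuity of $f$ shows every cluster point absorbs at $z$, and uniqueness then gives convergence. For the normalized case, translating and scaling \eqref{eq:absorber} shows that $g(z+v)-v$ and $g(tz)/t$ are absorbing at $z$ for $t>0$. The case $t=0$ reduces to unanimity. For unanimity, take $z=b^{n-2}$: by \cref{rem:near-unanimity}, $f(a,b,b^{n-2})=b$, so $b$ absorbs and $g(b^{n-2})=b$.
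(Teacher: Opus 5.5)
Your existence, uniqueness, anonymity, continuity and normalization arguments are sound. The continuity proof via boundedness, a closed-graph limit and uniqueness of the absorber is a valid alternative to the paper's Lipschitz estimate.

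\textbf{The gap: strategyproofness of $g$ is never established.} After moving from $(c,g(c',w),c',w)$ to $(c,g(c',w),c,w)$, you still need
\[
\norm{c-g(c,w)}\le\norm{c-f(c,g(c',w),c,w)},
\]
and nothing you cite gives it. (Your second replacement is vacuous, since agent~1 already reports $c$.) Agent~2's menu at outside reports $(c,c,w)$ does contain both points. But projection only says that $f(c,c,c,w)$ is at least as close to $c$ as each of them, and $f(c,c,c,w)$ is generally not $g(c,w)$. For the five-agent scalar median, $f(c,c,c,w_1,w_2)=c$ while $g(c,w_1,w_2)=\med(c,w_1,w_2)$.

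Your fallback route does not close this gap either. \Cref{lem:projection} takes the one-input incentive inequality as a hypothesis, so invoking it "once continuity is known" cannot produce that inequality. You also do not say how the transitivity computation of \cref{lem:binary-absorption}, which takes place at a single outside profile, would yield a normal-cone inequality relating absorbers at two different outside profiles.

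\textbf{The missing idea: let the two free agents report the two absorbing points themselves.} Put $p=g(c,w)$ and $p'=g(c',w)$. Absorption together with symmetry of the first two positions gives $f(p,p',c,w)=p$ and $f(p,p',c',w)=p'$. These are two outcomes of agent~3 alone, with the other reports $(p,p',w)$ held fixed. Strategyproofness of $f$ for that agent with true point $c$ then gives $\norm{c-p}\le\norm{c-p'}$, which is exactly the incentive inequality for $g$. In your chain, this amounts to starting from $(g(c,w),g(c',w),c',w)$ rather than $(c,g(c',w),c',w)$.

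Since that response of agent~3 is a projection, firm nonexpansiveness also gives $\norm{p-p'}^2\le\ip{c-c'}{p-p'}$, so $g$ is separately $1$-Lipschitz. This is how the paper obtains continuity without a separate compactness argument.
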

\begin{proof}
For each outside profile $z$, consider $h_z(a,b)=f(a,b,z)$. It is symmetric by anonymity, and its unilateral menus are menus of $f$. Apply \cref{lem:binary-absorption}; its unique absorbing point defines $g(z)$.

The inheritance claims require care because the defining point depends on $z$. Suppose $z$ and $z'$ differ in just one report, say report $i$ among the outside agents, and put $p=g(z)$ and $p'=g(z')$. By \eqref{eq:absorber} and symmetry of the first two positions,
\[
 f(p,p',z)=p,\qquad f(p,p',z')=p'.
\]
We may hold $p,p'$ fixed and regard these as two responses of the original outside agent. Strategyproofness of $f$ gives $\norm{z_i-p}\le\norm{z_i-p'}$,
which is precisely the required incentive inequality for $g$. The same original response is a projection, and firm nonexpansiveness gives $\norm{p-p'}^2\le\ip{z_i-z_i'}{p-p'}$.
Therefore $g$ is separately $1$-Lipschitz and hence jointly continuous, by changing its reports one at a time. A permutation of $z$ leaves the binary rule $h_z$ unchanged, so uniqueness of its absorber gives anonymity of $g$.

Now suppose $f$ is normalized and unanimous. By \cref{rem:near-unanimity}, $f(a,b^{n-1})=b$. Take $z=b^{n-2}$ and $a=b$ in \eqref{eq:absorber}. The left side has $n-1$ reports equal to $b$, so it equals $b$, proving $g(b^{n-2})=b$.

For a common translation $v$, the point $g(z)+v$ absorbs $h_{z+v}$, because
\[
 f(a,g(z)+v,z+v)=f(a-v,g(z),z)+v=g(z)+v.
\]
Uniqueness yields $g(z+v)=g(z)+v$. For $t>0$, similarly,
$f(a,tg(z),tz)=t f(a/t,g(z),z)=tg(z)$,
so $g(tz)=t g(z)$. At $t=0$, use unanimity $g(0)=0$.
\end{proof}

The smaller mechanism need not have bounded menus. For example, a one-report unanimous mechanism is the identity, whose range is all of $E$. Thus the population induction cannot simply repeat a compact-menu theorem. We next establish a different property of $g$: its unilateral menus have a rigid shape even when unbounded.

\subsection{Realizing a menu as a ternary core}
Fix an outside profile $w\in E^{n-3}$ and define
\[
 H_w(a,b,c)=f(a,b,c,w),\qquad
 T_w(c)=g(c,w),\qquad C_w=T_w(E).
\]
The three variable reports of $H_w$ are symmetric. Continuity and strategyproofness pass from $f$ to $H_w$, but unanimity on all of $E$ need not pass.

\begin{proposition}[Core realization]\label{prop:core-realization}
Every unilateral menu $C_w$ of $g$ is closed and convex, and
\begin{equation}\label{eq:core-realization}
 H_w(a,P_{C_w}c,c)=P_{C_w}c\qquad(a,c\in E).
\end{equation}
\end{proposition}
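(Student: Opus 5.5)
Closedness and convexity of $C_w$ follow directly from what we already know about $g$. By \cref{prop:absorber}, $g$ is continuous and strategyproof. Hence $T_w(c)=g(c,w)$ is a continuous one-input map satisfying the inequality \eqref{eq:one-input-sp}. \Cref{lem:projection} then shows that $C_w=T_w(E)$ is closed and convex and that $T_w=P_{C_w}$.

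The identity \eqref{eq:core-realization} follows by substituting this into the absorbing identity. Write $p=P_{C_w}c=T_w(c)=g(c,w)$. Applying \eqref{eq:absorber} with outside profile $z=(c,w)\in E^{n-2}$ gives $f(a,g(c,w),c,w)=g(c,w)$ for every $a\in E$, that is, $f(a,p,c,w)=p$. By the definition of $H_w$ and anonymity of $f$, which justifies the ordering of the three variable positions, this is exactly $H_w(a,P_{C_w}c,c)=P_{C_w}c$.

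The substantive content is therefore carried by \cref{prop:absorber} and \cref{lem:projection}, and no step here looks like a real obstacle. The only point to check is positional: in \eqref{eq:absorber} the absorbing point occupies a position adjacent to the free report $a$, with $c$ lying among the outside reports $z$. Anonymity of $f$ lets us permute positions freely, so the identity can be read as a statement about the symmetric three-report rule $H_w$ with reports $(a,p,c)$. If $n=3$, then $w$ is the empty profile and $T_w=g$ is a one-report map. The same argument applies, and $C_w=E$ whenever $g$ is unanimous, but unanimity is not needed for the statement.
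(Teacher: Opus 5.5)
Your proposal is correct and follows the paper's proof: \cref{prop:absorber} and \cref{lem:projection} give $T_w=P_{C_w}$ with $C_w$ closed and convex, and substituting $z=(c,w)$ into the absorbing identity yields \eqref{eq:core-realization}. Your appeal to anonymity is harmless but unnecessary. With $z=(c,w)$, the identity reads $f(a,g(c,w),c,w)=g(c,w)$, and the left-hand side is already $H_w(a,g(c,w),c)$ by definition, with every report in the right position.
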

\begin{proof}
By \cref{prop:absorber}, $T_w$ is a continuous strategyproof unilateral response of $g$. Thus $T_w=P_{C_w}$ by \cref{lem:projection}. Substituting $z=(c,w)$ into \eqref{eq:absorber} gives $H_w(a,T_w(c),c)=T_w(c)$.
\end{proof}

\begin{definition}[Ternary core]\label{def:core}
Let $H:E^3\to E$ be symmetric, continuous, and strategyproof in each argument. A nonempty closed convex set $C\subseteq E$ is a \emph{core} of $H$ if
\begin{equation}\label{eq:core}
 H(a,P_Cc,c)=P_Cc\qquad(a,c\in E).
\end{equation}
\end{definition}
The definition abstracts exactly \cref{prop:core-realization}. The core need not be bounded, need not be the full range of $H$, and may be lower-dimensional. The fact that $H$ is defined and strategyproof on the entire ambient space will be crucial.

\begin{lemma}[Core closure and repeated reports]\label{lem:core-closure}
If $C$ is a core of $H$, then an outcome belongs to $C$ whenever at least one report belongs to $C$. Moreover, for $p\in C$ and $a\in E$,
\begin{equation}\label{eq:core-repeated}
 H(p,p,a)=p,\qquad H(a,a,p)=P_Ca.
\end{equation}
\end{lemma}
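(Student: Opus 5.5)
The plan rests on the core identity \eqref{eq:core} together with the outcome-replacement and projection tools of \cref{cor:replacement} and \cref{lem:projection}. We prove the two identities in \eqref{eq:core-repeated} first, because the closure claim follows from them quickly.

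\textbf{Step 1: $H(p,p,a)=p$ for $p\in C$.} Fix $p\in C$ and $a\in E$. Take $c=p$ in \eqref{eq:core}. Since $P_Cp=p$, this gives $H(a,p,p)=p$ for every $a$, and symmetry of $H$ gives $H(p,p,a)=p$. In words, two reports at a point of the core absorb the third report.

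\textbf{Step 2: $H(a,a,p)=P_Ca$ for $p\in C$.} Fix $p\in C$ and consider the two-report coalition response $a\mapsto H(a,a,p)$. By \cref{lem:coalition} (used with $m=3$, $S$ the first two positions and outside report $p$), this map is the projection onto the coalition menu $K_p=\{H(r,s,p):r,s\in E\}$, which is closed and convex. It therefore suffices to prove $K_p=C$.
\begin{itemize}[label={}]
\item \emph{$C\subseteq K_p$.} For $q\in C$, Step 1 gives $H(q,q,p)=q$, so $q\in K_p$.
\item \emph{$K_p\subseteq C$.} Equivalently, $H(a,a,p)\in C$ for every $a$. Put $q=P_Ca$. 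By \eqref{eq:core} with $c=a$ and first report $p$, we get $H(p,q,a)=q$. Now use the response of the second agent with the other reports fixed at $(p,\cdot,a)$. By \cref{lem:projection} it is a projection, and its menu contains $q$.
\end{itemize}
This second inclusion is where the argument needs more care. Instead of attacking it directly, we first compute $H(a,a,p)$ by an incentive comparison. Set $y=H(a,a,p)$. Strategyproofness, applied to the agent reporting $p$ at the profile $(a,a,\cdot)$, together with $H(a,a,a)$, constrains $y$ only weakly. So we use the diagonal projection property instead: $y=P_{K_p}a$, and $q=P_Ca\in K_p$, hence $\norm{a-y}\le\norm{a-q}$. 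For the reverse inequality we need $y\in C$. We obtain this from the radial invariance \eqref{eq:radial-all}. Start from $H(p,q,a)=q$ and move the third report along the ray $q+t(a-q)$, which keeps the outcome at $q$. By symmetry and replacement, $H(q,q,a)=q$ as well. Then compare the profiles $(a,a,p)$ and $(q,q,p)$ through the coalition $S$ with true point $a$. Because the coalition response is a projection onto $K_p$ and $q\in K_p$ with $\norm{a-q}=\dist(a,C)$, it remains to show that no point of $K_p\setminus C$ is closer to $a$. I expect this to be the main obstacle. I would first try the following. Write $y=H(a,a,p)$. By replacement, $H(y,y,p)=y$ and also $H(a,y,p)=y$. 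Apply \eqref{eq:core} with $c=y$: we get $H(\cdot,P_Cy,y)=P_Cy$. Then use the normal-cone inequality of \cref{lem:projection-tools}(i) for two different unilateral menus, as in the transitivity argument of \cref{lem:binary-absorption}, to obtain two opposite inner-product inequalities that force $y=P_Cy$.

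\textbf{Step 3: closure.} If some report, say $p$, lies in $C$, write the profile as $(a,b,p)$ up to symmetry. The response of the agent reporting $a$, with $(b,p)$ fixed, is the projection onto its menu. By Step 2 with $a=b$, this menu contains $H(b,b,p)=P_Cb\in C$. To finish, I would show that the whole menu $\{H(x,b,p):x\in E\}$ lies in $C$ by the same paired projection-inequality argument, comparing with the point $H(p,b,p)$, which equals $p$ by Step 1. Alternatively, I would take $K_p=C$ from Step 2 and note that the menu at $(\cdot,b,p)$ is contained in the coalition menu $K_p$. This second route gives the closure claim immediately once Step 2 is done.
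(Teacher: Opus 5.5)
The correct parts of your proposal are Step 1, the reduction in Step 2 of the second identity to $K_p=C$ via \cref{lem:coalition}, the inclusion $C\subseteq K_p$, and the second route in Step 3. The gap is the inclusion $K_p\subseteq C$, i.e.\ $H(a,a,p)\in C$. This is the actual content of the lemma, and you leave it open.

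Your intermediate attempts do not advance it. The incentive comparison and the radial-invariance detour only re-derive $H(q,q,a)=q$, which is a case of Step 1. They end with ``no point of $K_p\setminus C$ is closer to $a$'', which is the goal restated. Your final idea is offered as something you would try, without saying which two profiles, which agent, or which true point are compared.

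That final idea is the right one, and it closes in a few lines:
\begin{itemize}
\item Put $y=H(a,a,p)$ and $w=P_Cy$.
\item Replacing the second report by the outcome $y$ and using symmetry gives $H(a,p,y)=y$.
\item By \eqref{eq:core} with $c=y$, $H(a,w,y)=w$.
\item These two profiles differ only in the middle report. Strategyproofness of that agent with true point $p$ gives $\norm{p-y}\le\norm{p-w}$.
\item The projection inequality for $w=P_Cy$, tested at $p\in C$, gives $\ip{y-w}{p-w}\le0$, hence $\norm{p-y}^2\ge\norm{p-w}^2+\norm{y-w}^2$.
\item Together these force $y=w\in C$.
\end{itemize}
The second set in this pairing is $C$ itself, which need not be a unilateral menu of $H$. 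So ``two different unilateral menus'' is not quite the right description of what is compared.

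With this filled in, your proof matches the paper's except for the order of steps. The paper runs the same two-inequality argument at an arbitrary profile $(a,b,c)$ with $b\in C$, replacing $c$ by the outcome $y$. It therefore proves the closure claim first. The identity $H(a,a,p)=P_Ca$ then follows: the diagonal response with third report $p$ takes values in $C$ and fixes $C$ pointwise, so its range is $C$, and \cref{lem:coalition} applies.

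Your order (diagonal first, then closure through $H(a,b,p)\in K_p=C$) is also valid. However, the argument above works verbatim at $(a,b,p)$ by replacing $b$ with $y$, so routing through the diagonal is unnecessary.
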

\begin{proof}
By symmetry it suffices to fix $b\in C$ and consider $y=H(a,b,c)$. Set $p=P_Cy$. Outcome replacement and the core identity give
\[
 H(a,b,y)=y,\qquad H(a,p,y)=p.
\]
Strategyproofness for the middle agent with true point $b$ yields $\norm{b-y}\le\norm{b-p}$. On the other hand, the projection inequality at $p=P_Cy$, tested against $b\in C$, gives
\[
 \norm{b-y}^2
 =\norm{b-p}^2+\norm{y-p}^2-2\ip{b-p}{y-p}
 \ge\norm{b-p}^2+\norm{y-p}^2.
\]
The two bounds imply $y=p\in C$.

For $p\in C$, put $c=p$ in \eqref{eq:core}; this gives $H(a,p,p)=p$. Symmetry proves the first identity in \eqref{eq:core-repeated}. With the third report fixed at $p$, the diagonal coalition response $a\mapsto H(a,a,p)$ takes values in $C$ by the first part of the lemma. It fixes every $a\in C$ by the first repeated-report identity, so its range is exactly $C$. Apply \cref{lem:coalition} to get the second identity.
\end{proof}

\section{The geometry of ternary cores}\label{sec:core-geometry}
We now solve the geometric problem that arose from the absorbing reduction. There are three distinct tasks: verify a median structure on the core, identify a single orthonormal coordinate system for its operation, and finally prove that its boundary is box-shaped.

\begin{theorem}[Core-box theorem]\label{thm:core-box}
Let $H:E^3\to E$ be symmetric, continuous, and strategyproof in each argument, and let $C$ be a core of $H$. Then $C$ is an orthogonal box. In a corresponding orthonormal affine frame $\B$ of $\aff C$,
\[
 H(a,b,c)=\CM_{\B}(a,b,c)\qquad(a,b,c\in C).
\]
\end{theorem}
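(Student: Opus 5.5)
We follow the three-stage outline announced at the start of this section. First we show that the restriction of $H$ to $C^3$ is a median operation. Then we identify that operation with a coordinate-wise median in one orthonormal frame of $\aff C$. Finally we use the ambient strategyproofness of $H$ to force $C$ itself to be a box.

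\emph{Stage 1: median structure.} By \cref{lem:core-closure}, $H$ maps $C^3$ into $C$. It also satisfies the majority identities $H(p,p,a)=p$, which remain valid when $a\in C$. For $a,b\in C$ we define the interval
\[
I(a,b)=\{H(a,b,c):c\in C\}.
\]
Fixing the third report $c$, the map $b\mapsto H(a,b,c)$ is a projection onto a closed convex menu by \cref{lem:projection}. Together with the replacement identity of \cref{cor:replacement}, this shows that $I(a,b)$ is a closed convex set and that $H(a,b,c)=P_{I(a,b)}c$. We then verify the interval axioms of a median algebra: $I(a,a)=\{a\}$, symmetry, $c\in I(a,b)\Rightarrow I(a,c)\subseteq I(a,b)$, and $I(a,b)\cap I(b,c)\cap I(c,a)=\{H(a,b,c)\}$. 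For the last axiom, the membership of $H(a,b,c)$ in each interval follows from symmetry. Uniqueness uses the same two-inequality trick as in the transitivity step of \cref{lem:binary-absorption}: a strategyproofness inequality is combined with the projection inequality \eqref{eq:projection-normal} to force equality.

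\emph{Stage 2: a single frame.} Since each $I(a,b)$ is convex, contains $a$ and $b$, and is compatible with Euclidean projection, we aim to show that $I(a,b)$ is a box with corners $a,b$. The direction space of that box should not depend on $(a,b)$. The idea is to take $a$ in the relative interior of $C$ and let $b$ vary over small perturbations. The firm nonexpansiveness of the maps $c\mapsto P_{I(a,b)}c$, combined with the median-algebra identity $H(a,b,c)\in I(a,c)$, should show that the linear directions of the segments and faces of $I(a,b)$ form a family of mutually orthogonal subspaces that is closed under intersection. A continuity and connectedness argument in $b$ would then freeze this family into one orthonormal frame $\B$. This gives $I(a,b)=\boxB(a,b)$, and hence $H=\CM_{\B}$ on $C^3$, because projection onto a box is clamping, and clamping $c$ to $[a_j,b_j]$ in each coordinate is exactly the scalar median.

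\emph{Stage 3: box shape.} Once $H=\CM_{\B}$ on $C$, $C$ is closed under coordinate medians, but it could still be an oblique convex set. Suppose some boundary point $p$ of $C$ has an outer normal $\nu$ that is not a coordinate direction of $\B$. We then place reports outside $C$: take $a=p+t\nu$ with $b,c\in C$ near $p$, chosen so that $\CM_{\B}(b,c,\cdot)$ would move along a non-normal direction. We compare $H(a,b,c)$, which lies in $C$ by \cref{lem:core-closure}, with the identity $H(a,a,p)=P_Ca$ from \eqref{eq:core-repeated}, and apply strategyproofness of the agent reporting $a$. The resulting inequality should contradict the convexity of $C$ unless $\nu$ is a coordinate direction. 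Hence every supporting normal is a coordinate direction, and $C$ is a product of intervals.

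We expect Stage 2 to be the main obstacle. The challenge there is to pass from a family of convex intervals, each individually compatible with projection, to one global orthonormal frame.
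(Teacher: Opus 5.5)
Your three-stage outline matches the paper's: median algebra, straightening to one frame, then the ambient extension forcing a box. However, Stages 2 and 3 are placeholders rather than arguments, and each contains a concrete error.

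\emph{Stage 2.} The target $I(a,b)=\boxB(a,b)$ for all $a,b\in C$ cannot come from the intrinsic data you propose to use. The oblique halfspace $C=\{x_1\le x_2\}$ with $m=\CM_{\B}|_{C^3}$ is a continuous median operation. Its intervals $C\cap\boxB(a,b)$ are closed and convex and satisfy $m(a,b,x)=P_{I(a,b)}x$, yet $I((0,0),(1,1))\ne\boxB((0,0),(1,1))$. Your target would in fact already make $C$ a box, since a convex set containing the coordinate box spanned by any two of its points is a product of intervals. That contradicts your own opening remark in Stage 3. The correct conclusion at this stage is only $I(a,b)=C\cap\boxB(a,b)$ with $m=\CM_{\B}$.

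More importantly, nothing in firm nonexpansiveness together with $H(a,b,c)\in I(a,c)$ produces faces, orthogonal direction spaces, or a frame; ``should show'' and ``would freeze'' carry the entire step. The missing idea is median-halfspace separation. A median halfspace separating two interior points has Euclidean convex sides with nonempty interior. By \cref{lem:convex-partition}, their common boundary is a hyperplane section $C'=C\cap P$, which is median-convex. Euclidean projection onto $C'$ satisfies the gate identity and is therefore a median morphism, and this permits induction on $\dim C$. The paper then completes the frame on $P$ by the normal of $P$. \Cref{lem:clamp-saturation} shows that the normal output coordinate cannot depend on tangential inputs, and the diameter-ball bound makes it the scalar median. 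Finally, global median-convex coordinate cuts propagate the local frame through $\inter C$ along axis-parallel paths, and then to all of $C$ by continuity.

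\emph{Stage 3.} The statement you try to contradict is false for genuine cores. A box has non-coordinate outward supporting normals at its corners, and bounded box cores do occur: for the five-report coordinate-wise median, the ternary restrictions have cores $\boxB(z_1,z_2)$. The contradiction must be sought only at regular boundary points. You then need the separate fact that a closed convex set is the intersection of its supporting halfspaces at regular points, which is \cref{lem:regular-boundary}(iii), proved via Rademacher's theorem.

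Your mechanism also yields no inequality. The profiles $(a,b,c)$ and $(a,a,p)$ differ in two reports, so strategyproofness does not compare them. Strategyproofness of the agent reporting $a$ against fixed $b,c\in C$ says only $H(a,b,c)=P_{C\cap\boxB(b,c)}a$. That makes sense for every median-closed $C$, so it cannot distinguish an oblique set from a box.

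The paper splits into two cases. A regular normal with at least three nonzero components is excluded by closure under $\CM_{\B}$ alone: three inward directions have a coordinate median that points outward. The genuinely ambient case is exactly two nonzero components. There \cref{lem:cylinder} supplies a second regular point $z\ne y$ nearby with the same remaining coordinates. Two \emph{exterior} reports $a=y+\eps\nu$ and $b=z+\eps\nu_z$, together with an interior report $p$, then give two conflicting values. Restricting $P_Cb=z$ to the menu $D_{ap}\ni z$ with $a,p$ fixed gives $H(a,b,p)=z$. Restricting $P_Ca=y$ to $D_{bp}\ni y$ gives $H(a,b,p)=y$.

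\emph{Stage 1.} The transitivity-type two-inequality trick does not evidently give uniqueness in the triple intersection. The ball constraints it produces are satisfied on a set with interior, e.g.\ near the centroid of an equilateral triangle. The paper instead needs the replacement implication $p,q\in I_{ab}\Rightarrow p,q\in I_{wb}$ for $w=H(a,p,q)$, followed by an antipodality argument in the ball with diameter $[p,q]$. Also, $H(a,b,c)=P_{I(a,b)}c$ comes from fixing $a,b$ and varying $c$, not from the response in $b$.
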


\subsection{Option intervals and a median algebra}\label{sec:option-intervals}
For $a,b\in C$, define $I_{ab}=\{H(a,b,x):x\in E\}$.
By \cref{lem:projection,lem:core-closure}, this is a nonempty closed convex subset of $C$, it contains $a,b$, and
\begin{equation}\label{eq:interval-projection}
 H(a,b,x)=P_{I_{ab}}x.
\end{equation}
We call $I_{ab}$ an \emph{option interval}.

\begin{lemma}[Interval geometry]\label{lem:interval-geometry}
For $a,b\in C$, the following hold.
\begin{enumerate}[label=(\roman*)]
\item The option interval lies in the ball with diameter $[a,b]$:
\begin{equation}\label{eq:diameter-ball}
 I_{ab}\subseteq B\left(\frac{a+b}{2},\frac{\norm{a-b}}2\right),
 \qquad \diam I_{ab}=\norm{a-b}.
\end{equation}
Here $B(q,r)=\{z:\norm{z-q}\le r\}$.
\item If $c\in I_{ab}$, then $I_{ac}\subseteq I_{ab}$.
\item For all $a,b,c\in C$,
\begin{equation}\label{eq:triple-intersection}
 I_{ab}\cap I_{bc}\cap I_{ca}=\{H(a,b,c)\}.
\end{equation}
\end{enumerate}
\end{lemma}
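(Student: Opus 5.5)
For (i), I will fix $y\in I_{ab}$ and write $y=H(a,b,x)$. By \eqref{eq:interval-projection}, outcome replacement gives $H(a,b,y)=y$. I then compare this with a profile in which one of the reports $a$ or $b$ is moved. Take the first agent, with true point $a$ and the other two reports fixed at $b,y$. By \cref{lem:core-closure}, $H(b,b,y)=P_C b=b$ because $b\in C$. Strategyproofness for that agent then gives $\norm{a-y}\le\norm{a-b}$. Symmetrically, $\norm{b-y}\le\norm{a-b}$. This alone is too weak, so I instead use the projection characterization of the first agent's response. With $(b,y)$ fixed, that response is projection onto a closed convex menu containing both $y=H(a,b,y)$ and $b=H(b,b,y)$. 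Since $y$ is the projection of $a$, \eqref{eq:projection-normal} gives $\ip{a-y}{b-y}\le0$. This is exactly the statement that $y$ lies in the closed ball with diameter $[a,b]$ (Thales). Since $a,b\in I_{ab}$, the diameter claim follows.

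For (ii), take $c\in I_{ab}$, so $c=H(a,b,c)$, and take $y=H(a,c,x)\in I_{ac}$. I want to show $H(a,b,y)=y$. My plan is to view $H(a,\cdot,y)$ and $H(a,\cdot,x)$ as one-agent projections onto menus indexed by fixed reports, and to run the same two-inequality argument as in the transitivity step of \cref{lem:binary-absorption}. The relevant inequalities are $\norm{u}^2\le\ip{u}{v}$ and $\ip{u}{v}\le0$, which force $u=0$. The points to compare are $w=H(a,b,y)$ and $y$. The response of the second agent at outside $(a,y)$ projects $b$ to $w$, and its menu contains $y=H(a,c,y)$ since $c$ is a report and $H(a,c,y)=y$ by outcome replacement. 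The response at outside $(a,b)$, a function of the third report, is $P_{I_{ab}}$, and it contains $c$ and $w$. I also need $y\in I_{ab}$. I would try to obtain this by showing $I_{ac}\subseteq$ the set of fixed points of $P_{I_{ab}}$, using the ball property from (i) together with the identity $H(a,b,c)=c$. I expect this to be the main obstacle, and I may need an extra chaining step through $H(a,c,\cdot)$ versus $H(a,b,\cdot)$.

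For (iii), set $m=H(a,b,c)$. By definition $m$ lies in all three intervals, since $m\in I_{ab}$ and symmetry places it in the other two. Conversely, let $y$ lie in all three intervals. Since $y\in I_{ab}$, we get $H(a,b,y)=y$, and likewise $H(b,c,y)=y$ and $H(c,a,y)=y$. Consider the first agent with outside $(b,c)$. The menu $\{H(x,b,c)\}$ contains $m$, and it also contains $y$: replacing $a$ by $y$ in $H(y,b,c)$ gives a profile with outcome in $I_{bc}$, and outcome replacement from $H(b,c,y)=y$ yields $H(y,b,c)=y$. Projection optimality of $m=P(a)$ gives $\ip{a-m}{y-m}\le0$. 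Doing the same for the agents at $b$ and $c$ gives $\ip{b-m}{y-m}\le0$ and $\ip{c-m}{y-m}\le0$. Now I reverse roles, using $y=H(a,b,y)$ with the third agent's menu $I_{ab}\ni m$ projecting $y$ to itself. That is trivial, so instead I will use (ii): $m\in I_{ab}$ gives $I_{am}\subseteq I_{ab}$, and together with the inequalities, applying (i) to the diameter balls of the segments $[a,m]$, $[b,m]$, $[c,m]$ should force $y=m$. Summing suitable inequalities to obtain $\norm{y-m}^2\le0$ is the step I expect to need the most care.
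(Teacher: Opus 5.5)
Your part (i) is correct, and slightly more direct than the paper's proof. The normal inequality for the first agent's response at outside reports $(b,y)$, tested against $b=H(b,b,y)$, gives $\ip{a-y}{b-y}\le0$, which is exactly the Thales ball. The paper instead applies firm nonexpansiveness to the second-report response at the inputs $b$ and $a$.

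Part (ii) is not closed as written. From the response $u\mapsto H(a,u,y)$ you extract "it sends $b$ to $w$ and its menu contains $y$", which does not combine with anything available. You then ask for $y\in I_{ab}$, which is the conclusion itself. Use the other projection of that same response instead. Since $H(a,c,y)=y$, its menu $D$ satisfies $y=P_Dc$, and $w=H(a,b,y)\in D$, so $\ip{c-y}{w-y}\le0$. Since $w=P_{I_{ab}}y$ and $c\in I_{ab}$, you also have $\ip{y-w}{c-w}\le0$. Adding the two gives $\norm{w-y}^2\le0$. This is precisely the transitivity of $u\preceq v\iff H(a,u,v)=v$ from \cref{lem:binary-absorption} (compactness is not used there), applied to $b\preceq c\preceq y$. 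That is how the paper argues.

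Part (iii) has a genuine missing idea. Your inequalities $\ip{a-m}{y-m}\le0$, $\ip{b-m}{y-m}\le0$, $\ip{c-m}{y-m}\le0$, together with every diameter-ball constraint from (i) for $y$ and $m$ relative to the pairs of $a,b,c$, are consistent with $y\ne m$. For example, in $\R^3$ take $m=0$, $y=e_3$, and $a,b,c$ of norm $2$ at mutual angles $2\pi/3$ in $e_3^\perp$; all of these inequalities hold. So no summation of what you list can succeed.

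Your route would close if you knew $y\in I_{am}$. Then (i) gives $\ip{a-y}{m-y}\le0$, and adding $\ip{a-m}{y-m}\le0$ yields $\norm{y-m}^2\le0$. But (ii) only gives $I_{am}\subseteq I_{ab}$, which is the wrong direction. The inclusion you need, $I_{ab}\cap I_{ac}\subseteq I_{am}$, is a median-algebra identity essentially as strong as (iii) itself.

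The paper avoids this in two steps.
\begin{enumerate}[label=(\alph*)]
\item It proves a replacement implication: if $p,q\in I_{ab}$ and $w=H(a,p,q)$, then $p,q\in I_{wb}$. The reasoning is that $p=H(a,p,b)=P_{I_{ap}}b$ and $w\in I_{ap}$. Then (ii) gives $I_{pw}\subseteq I_{ap}$, and \eqref{eq:restrict-minimizer} gives $P_{I_{pw}}b=p$, i.e. $H(w,b,p)=p$.
\item For distinct $p,q$ in the triple intersection, set $w_a=H(a,p,q)$, $w_b=H(b,p,q)$, $w_c=H(c,p,q)$. Applying the implication twice gives $p,q\in I_{w_aw_b}$, and cyclically for the other pairs. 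The diameter equality in (i) then makes each pair among $w_a,w_b,w_c$ at distance at least $\norm{p-q}$. Yet all three lie in $I_{pq}$, a ball of diameter $\norm{p-q}$. They would have to be three pairwise antipodal points of that ball, which is impossible.
\end{enumerate}
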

\begin{proof}
For (i), take $w=H(a,b,x)$. With $a,x$ fixed, the response in the second report sends $b$ to $w$ and $a$ to $a$, the latter by \cref{lem:core-closure}. Firm nonexpansiveness gives
$\norm{w-a}^2\le\ip{b-a}{w-a}$.
Completing the square gives the ball inclusion. The ball has diameter $\norm{a-b}$, and both $a,b$ belong to $I_{ab}$, proving the diameter equality.

For (ii), fix the first report $a$ and apply the transitivity calculation from \cref{lem:binary-absorption} to the symmetric binary rule $(u,v)\mapsto H(a,u,v)$. Compactness was not used in that calculation. If $c\in I_{ab}$, then $H(a,b,c)=c$. If $w\in I_{ac}$, then $H(a,c,w)=w$. Transitivity implies $H(a,b,w)=w$, so $w\in I_{ab}$.

For (iii), the value $H(a,b,c)$ belongs to all three menus by symmetry. We prove uniqueness in two stages. First establish the replacement implication
\begin{equation}\label{eq:interval-replacement}
 p,q\in I_{ab},\quad w=H(a,p,q)
 \quad\Longrightarrow\quad p,q\in I_{wb}.
\end{equation}
Because $p\in I_{ab}$, symmetry and fixed points give
$p=H(a,p,b)=P_{I_{ap}}b$.
Also $w\in I_{ap}=I_{pa}$. By (ii), $I_{pw}\subseteq I_{pa}=I_{ap}$. The point $p$ belongs to $I_{pw}$, so restricting the feasible set preserves the known minimizer: $P_{I_{pw}}b=p$. Thus $H(w,b,p)=p$, proving $p\in I_{wb}$. The argument for $q$ is identical.

Now suppose distinct $p,q$ belong to the triple intersection in \eqref{eq:triple-intersection}. Put
\[
 w_a=H(a,p,q),\qquad w_b=H(b,p,q),\qquad w_c=H(c,p,q).
\]
For the pair $a,b$, apply \eqref{eq:interval-replacement} first to $I_{ab}$ and then to $I_{bw_a}$. This gives $p,q\in I_{w_aw_b}$. The same argument for $b,c$ and $c,a$ gives
$p,q\in I_{w_aw_b}\cap I_{w_bw_c}\cap I_{w_cw_a}$.
By (i), each pair among $w_a,w_b,w_c$ is at distance at least $\norm{p-q}$. But all three points lie in $I_{pq}$, which is contained in a ball of diameter $\norm{p-q}$. Their pairwise distances must therefore all equal this diameter. Equality in the diameter bound for two points of a Euclidean ball forces them to be antipodal: equality is needed both in the triangle inequality through the center and in each radius bound. Three distinct points cannot be pairwise antipodal. This contradiction proves (iii).
\end{proof}

For clarity, we record exactly which median-algebra facts are imported. A \emph{median algebra} is a set with a ternary operation abstracting the scalar median. It can equivalently be specified by intervals satisfying
\[
 I(a,a)=\{a\},\qquad I(a,b)=I(b,a),\qquad
 c\in I(a,b)\Longrightarrow I(a,c)\subseteq I(a,b),
\]
and the condition that $I(a,b)\cap I(b,c)\cap I(c,a)$ is a singleton for every triple. Its unique point defines the median $m(a,b,c)$. This is Sholander's interval characterization; see Bowditch~\cite[Section~2, axioms~(I1)--(I4), pp.~283--284]{Bowditch2016}. In particular, the median belongs to each of the three intervals, and $z\in I(a,b)$ if and only if $m(a,b,z)=z$.

The restriction $m=H|_{C^3}$ has these intervals. Indeed, $I_{ab}\subseteq C$, and every point of $I_{ab}$ is fixed by $x\mapsto H(a,b,x)$. Thus allowing the third report only in $C$ produces the same interval as allowing it in all of $E$. By \cref{lem:interval-geometry}, $C$ is a median algebra.

A subset $A$ of a median algebra is \emph{median-convex} if it contains $I(a,b)$ whenever $a,b\in A$. A \emph{median halfspace} is a nonempty proper subset whose two sides are median-convex. This definition does not initially refer to any Euclidean hyperplane. We use the following two facts:
\begin{enumerate}[label=(M\arabic*)]
\item Intervals are median-convex, and any two nonempty disjoint median-convex sets are separated by a median halfspace~\cite[Theorem~4.8]{CHATTERJI2010882}. In particular, median halfspaces separate distinct points.
\item If $\pi:C\to Q\subseteq C$ satisfies $m(x,\pi(x),y)=\pi(x)$ for all $x\in C$ and $y\in Q$, then $\pi$ is the unique \emph{gate projection} onto $Q$ and preserves medians: $\pi m(a,b,c)=m(\pi a,\pi b,\pi c)$.
\end{enumerate}
Intuitively, a gate is the point in $Q$ that lies between $x$ and every point of $Q$ in the median sense. These facts do not require a finite-rank assumption in their present use.

For completeness, we prove (M2) from halfspace separation. First, the gate identity fixes $Q$: for $q\in Q$, it gives $\pi q=m(q,\pi q,q)=q$. If $p,p'\in Q$ both satisfy the gate identity for $x$, symmetry gives
\[
 p=m(x,p,p')=m(x,p',p)=p',
\]
so the gate is unique. For a median halfspace $A$, write $\chi_A$ for its indicator. Median-convexity of both sides gives
$\chi_A(m(a,b,c))
 =\operatorname{med}(\chi_A(a),\chi_A(b),\chi_A(c))$,
because the side containing at least two inputs contains their interval and hence the median. If both sides of $A$ meet $Q$, choose $y\in Q$ on the side opposite $\pi x$. Since $\pi x=m(x,\pi x,y)\in I(x,y)$, the points $x$ and $\pi x$ lie on the same side: otherwise the side containing $x,y$ would also contain $\pi x$ by median-convexity. Thus $\chi_A\circ\pi=\chi_A$. If only one side meets $Q$, then $\chi_A\circ\pi$ is constant. In either case,
$\chi_A(\pi m(a,b,c))
 =\chi_A(m(\pi a,\pi b,\pi c))$.
Halfspace separation of points therefore proves (M2). This argument uses the halfspace-coordinate representation of median algebras~\cite[Corollary~4.11]{CHATTERJI2010882}.

Every median-convex subset of $C$ is Euclidean convex: its median intervals are Euclidean convex and contain the ordinary segments between their endpoints. The converse is not asserted. Also, the Euclidean closure of a median-convex subset is median-convex. To see this, let $A$ be median-convex, take $a,b\in\overline A$ and $z\in I(a,b)$, and choose sequences $a_n,b_n\in A$ converging to $a,b$. Then $m(a_n,b_n,z)\in A$ and continuity gives $m(a_n,b_n,z)\to m(a,b,z)=z$, so $z\in\overline A$.

\subsection{Straightening the median in one orthonormal frame}\label{sec:straightening}
We first identify the operation without assuming that the core is a box. Two elementary observations make the local-to-global argument explicit. For a subset \(A\) of a Euclidean space, \(\overline A\) denotes
its Euclidean closure.

\begin{lemma}\label{lem:convex-partition}
Let \(C\subseteq\mathbb R^r\) be closed, convex, and
full-dimensional. Suppose that \(A\subseteq C\), and that both
\(A\) and \(C\setminus A\) are convex and have nonempty interior.
Then there are a nonzero vector \(\nu\in\mathbb R^r\) and a scalar
\(t\in\mathbb R\) such that, with $P=\{x:\langle \nu,x\rangle=t\}$,
and
\[
    P^-=\{x:\langle \nu,x\rangle\le t\},
    \qquad
    P^+=\{x:\langle \nu,x\rangle\ge t\},
\]
we have
\[
    \overline A=C\cap P^-,
    \qquad
    \overline{C\setminus A}=C\cap P^+.
\]
Moreover, \(P\) meets \(\operatorname{int}C\).
\end{lemma}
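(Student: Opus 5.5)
The plan is to separate the two disjoint convex sets by a hyperplane and then use the partition structure to identify the closures. Put \(B=C\setminus A\). Since \(A\) and \(B\) are disjoint nonempty convex sets, the finite-dimensional separation theorem gives a nonzero \(\nu\) and a scalar \(t\) with
\[
\langle\nu,x\rangle\le t\le\langle\nu,y\rangle\qquad(x\in A,\ y\in B).
\]
We may take \(t=\sup_A\langle\nu,\cdot\rangle\).

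The first consequence is \(A\subseteq C\cap P^-\) and \(B\subseteq C\cap P^+\). Because \(C\cap P^\pm\) is closed, this gives \(\overline A\subseteq C\cap P^-\) and \(\overline B\subseteq C\cap P^+\).

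For the reverse inclusions, take \(x\in C\) with \(\langle\nu,x\rangle<t\). Then \(x\notin B\), since every point of \(B\) satisfies \(\langle\nu,\cdot\rangle\ge t\). Hence \(x\in A\). So \(C\cap\{\langle\nu,\cdot\rangle<t\}\subseteq A\), and symmetrically \(C\cap\{\langle\nu,\cdot\rangle>t\}\subseteq B\).

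It therefore suffices to show that \(C\cap P^-\) is the closure of \(C\cap\{\langle\nu,\cdot\rangle<t\}\), and likewise for \(P^+\). This follows once we know that \(P\) meets \(\inter C\), and that is the key step.

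Suppose, for contradiction, that \(P\cap\inter C=\emptyset\). The set \(\inter C\) is convex and hence connected, so it lies entirely in one open side of \(P\), say \(\{\langle\nu,\cdot\rangle<t\}\). By the inclusion above, \(\inter C\subseteq A\). Since \(C\) is closed, convex and full-dimensional, \(C=\overline{\inter C}\), so \(C\subseteq P^-\). Then \(B\subseteq C\cap P^+\subseteq P\), which has empty interior. This contradicts the assumption that \(B\) has nonempty interior. The other side is symmetric, using that \(A\) has nonempty interior.

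Now fix \(x_0\in P\cap\inter C\). For \(x\in C\cap P^-\), perturb toward a point of \(\inter C\) strictly below \(P\): such a point exists because \(x_0\) is interior, so \(x_0-\eps\nu\in\inter C\) for small \(\eps>0\). Call it \(u\). Then the points \((1-s)x+su\) with \(s\in(0,1]\) lie in \(C\) by convexity. They satisfy \(\langle\nu,\cdot\rangle<t\), because \(\langle\nu,x\rangle\le t\) and \(\langle\nu,u\rangle<t\). Letting \(s\downarrow0\), they converge to \(x\). Hence \(x\in\overline A\). The \(P^+\) side is identical, using \(x_0+\eps\nu\).

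The main obstacle is the step showing that \(P\) meets \(\inter C\). It is where both nonempty-interior hypotheses are used, and without it the closure identities can fail.
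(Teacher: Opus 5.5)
Your proof is correct. It has the same skeleton as the paper's: separate, show that the open half $C\cap\{\langle\nu,\cdot\rangle<t\}$ lies in $A$, show that $P$ meets $\operatorname{int}C$, then push points of $C\cap P^-$ toward an interior point strictly below $P$. Two steps are carried out differently.

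First, the paper separates the two \emph{open} convex sets $\operatorname{int}A$ and $\operatorname{int}(C\setminus A)$. It then extends the inclusions to $A$ and $C\setminus A$, using that a convex set with nonempty interior lies in the closure of its interior. You instead apply the finite-dimensional separation theorem for arbitrary disjoint nonempty convex sets directly. This skips the extension step but relies on a somewhat deeper result (separation whenever the relative interiors are disjoint, e.g.\ Rockafellar's Theorem~11.3).

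Second, the paper obtains $P\cap\operatorname{int}C\neq\varnothing$ constructively. Each interior lies strictly on its own side of $P$, so a segment joining a point of $\operatorname{int}A$ to a point of $\operatorname{int}(C\setminus A)$ crosses $P$ inside $\operatorname{int}C$. You argue by contradiction: connectedness of $\operatorname{int}C$ and $C=\overline{\operatorname{int}C}$ would force one of the two pieces into $P$.

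The paper's route uses the interior hypotheses up front. In return it gets the interior point for the final perturbation for free, namely $a_0\in\operatorname{int}A$ with $\langle\nu,a_0\rangle<t$. Your route uses them only at the end, to rule out a degenerate supporting hyperplane with one piece flattened onto $\partial C$. That makes it transparent exactly why those hypotheses are needed.
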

\begin{proof}
The separating-hyperplane theorem for the two disjoint open convex interiors gives a nonzero linear functional $\ell$ and a level $t$ with
\[
 A\subseteq\{\ell\le t\},\qquad C\setminus A\subseteq\{\ell\ge t\}.
\]
These inclusions extend from the interiors to the sets because a convex set with nonempty interior is contained in the closure of its interior. Each interior lies strictly on its own side: an open ball cannot be contained in a closed halfspace and have its center on the bounding hyperplane. Therefore the segment between interior points on opposite sides meets $P=\{\ell=t\}$ inside $\inter C$.

Every point of $C$ with $\ell<t$ must belong to $A$, since it cannot belong to the complement. Choose $a_0\in\inter A$, so $\ell(a_0)<t$. For any $z\in C\cap P^-$, the points $(1-s)z+sa_0$, $0<s<1$, belong to $C$ and have $\ell<t$, hence lie in $A$. Letting $s\downarrow0$ proves $C\cap P^-\subseteq\overline A$. The other inclusion and the argument for the other side are immediate or symmetric.
\end{proof}

The next observation rules out hidden dependence between coordinates.
Suppose that only one input coordinate changes. If the corresponding
output coordinate already follows the scalar projection rule, then
firm nonexpansiveness leaves no room for any orthogonal component of
the output to change.

\begin{lemma}\label{lem:clamp-saturation}
Let \(T:E\to E\) be firmly nonexpansive, and fix a unit coordinate
vector \(e_j\). Let \(I\subseteq\mathbb R\) be an interval, and
consider inputs of the form $x(s)=x_\perp+s e_j$,
where $s\in I$ and \(x_\perp\perp e_j\) is fixed.

Suppose that the \(j\)-th coordinate of the output is obtained by
clamping \(s\) to a fixed nonempty closed interval \(J\): $\langle e_j,T(x(s))\rangle
    =
    q(s)
    :=
    P_Js$.
Then the component of \(T(x(s))\) orthogonal to \(e_j\) is independent
of \(s\). Equivalently, changing input coordinate \(j\) cannot change
any other output coordinate.
\end{lemma}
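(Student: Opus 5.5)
The plan is to apply firm nonexpansiveness \eqref{eq:firm} to pairs of inputs on the line $x(s)$ and to split each output difference into its $e_j$-component and its orthogonal part. Fix $s<s'$ in $I$ and write $y=T(x(s))$, $y'=T(x(s'))$. Decompose
\[
 y'-y=\bigl(q(s')-q(s)\bigr)e_j+w,\qquad w\perp e_j .
\]
Since $x(s')-x(s)=(s'-s)e_j$, inequality \eqref{eq:firm} gives
\[
 \bigl(q(s')-q(s)\bigr)^2+\norm{w}^2\le (s'-s)\bigl(q(s')-q(s)\bigr).
\]
Thus the orthogonal drift is bounded by
\[
 \norm{w}^2\le\bigl(q(s')-q(s)\bigr)\bigl((s'-s)-(q(s')-q(s))\bigr).
\]

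Next we use the structure of the clamp $q=P_J$. It is monotone and has slope either $0$ or $1$ on each of at most three pieces: the region below $J$, the interior of $J$, and the region above $J$. On any subinterval of $I$ where $q$ is constant, the first factor vanishes. On any subinterval where $q(s)=s$, the second factor vanishes. In both cases $w=0$, so the orthogonal component of $T(x(s))$ is constant on each piece.

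The pieces overlap at their endpoints. These breakpoints are the endpoints of $J$ lying in $I$, and each is shared by two adjacent closed subintervals. Constancy on each closed piece therefore propagates across $I$, and since $I$ is an interval this chaining covers all of it. Each piece is convex and contains the common breakpoint, so the pairwise argument applies to any two points within one piece.

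The only point needing care is this chaining. One might try to apply the bound directly to two points in different pieces, but there both factors can be positive. The subdivision at the breakpoints of $P_J$ avoids that. The degenerate cases, where $J$ is a singleton, a halfline or the whole line, just reduce the number of pieces. Continuity is not even needed, because we compare actual points, including the breakpoints themselves.
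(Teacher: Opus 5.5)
Your proposal is correct and matches the paper's proof: the same orthogonal decomposition of the output difference, the same firm-nonexpansiveness inequality, and the same analysis of the at most three clamping regions. The only difference is the gluing step. The paper invokes continuity of $T$ at $\ell$ and $u$, whereas you note that the closed regions already share whichever breakpoints lie in $I$, so continuity is not needed; both arguments are valid.
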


\begin{proof}
Write the output as $T(x(s))=q(s)e_j+r(s)$, where $r(s)\in e_j^\perp$.
We will show that \(r(s)\) is constant. Take two parameters \(s,s'\in I\). Since only input coordinate \(j\)
changes, $x(s)-x(s')=(s-s')e_j$.
The corresponding output difference is
\[
    T(x(s))-T(x(s'))
    =
    \bigl(q(s)-q(s')\bigr)e_j
    +
    \bigl(r(s)-r(s')\bigr).
\]
The two terms on the right are orthogonal. Therefore,
\[
\begin{aligned}
    \norm{T(x(s))-T(x(s'))}^2
    &=
    \bigl(q(s)-q(s')\bigr)^2
    +
    \norm{r(s)-r(s')}^2,\\
    \ip{x(s)-x(s')}{T(x(s))-T(x(s'))}
    &=
    (s-s')\bigl(q(s)-q(s')\bigr).
\end{aligned}
\]

Firm nonexpansiveness now gives
\begin{equation}\label{eq:clamp-saturation}
    \bigl(q(s)-q(s')\bigr)^2
    +
    \norm{r(s)-r(s')}^2
    \le
    (s-s')\bigl(q(s)-q(s')\bigr).
\end{equation}

Let \(J=[\ell,u]\), allowing either endpoint to be infinite. The clamp
has at most three regions:
\[
    q(s)=
    \begin{cases}
        \ell, & s\le \ell,\\
        s,    & \ell\le s\le u,\\
        u,    & s\ge u.
    \end{cases}
\]

First suppose that \(s\) and \(s'\) lie in a region where the clamp is
constant. Then $q(s)-q(s')=0$.
Equation \eqref{eq:clamp-saturation} reduces to $\norm{r(s)-r(s')}^2\le0$,
so \(r(s)=r(s')\).

Next suppose that \(s\) and \(s'\) lie in the middle region, where the
clamp is the identity. Then $q(s)-q(s')=s-s'$.
Equation \eqref{eq:clamp-saturation} becomes $(s-s')^2+\norm{r(s)-r(s')}^2
    \le
    (s-s')^2$. Again, $r(s)=r(s')$.
Thus \(r\) is constant on each of the at most three clamping regions.
Firm nonexpansiveness implies that \(T\) is \(1\)-Lipschitz and hence
continuous. The constants on adjacent regions therefore agree at the
clamping points \(\ell\) and \(u\). Consequently, \(r(s)\) is constant
on all of \(I\).
\end{proof}

The lemma says that the scalar clamp uses the entire amount of movement
allowed by firm nonexpansiveness. There is therefore no remaining
``budget'' for the output to move in a perpendicular direction.
The significance is exact, not merely qualitative: the known clamp already uses all the change permitted by the firm nonexpansiveness inequality.

We began with an arbitrary symmetric, continuous, strategyproof
ternary rule \(H:E^3\to E\) and a core \(C\). Lemma~\ref{lem:interval-geometry} and the
discussion in the preceding subsection show that the restriction $m:=H|_{C^3}$
is an abstract median operation on \(C\). At this point no coordinate
representation has been assumed or obtained. The next theorem is a
purely geometric statement: if the median intervals are Euclidean
convex and the median response is Euclidean projection onto those
intervals, then the operation must be the ordinary coordinate-wise
median in one orthonormal frame.

\begin{theorem}[Euclidean straightening]\label{thm:straightening}
Let \(C\) be a nonempty closed convex subset of a finite-dimensional
Euclidean space, and let $m:C^3\to C$
be a continuous median operation. For \(a,b\in C\), write $I(a,b)=\{m(a,b,z):z\in C\}$
for the corresponding median interval. Assume that every \(I(a,b)\)
is closed and Euclidean convex and that
\begin{equation}\label{eq:straightening-assumption}
    m(a,b,x)=P_{I(a,b)}x
    \qquad
    (a,b,x\in C).
\end{equation}
Then there is one orthonormal basis \(\mathcal B\) of the direction
space of \(\operatorname{aff}C\) such that
\[
    m(a,b,c)
    =
    \operatorname{CM}_{\mathcal B}(a,b,c)
    \qquad
    (a,b,c\in C).
\]
\end{theorem}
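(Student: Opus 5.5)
The plan is to use median halfspaces to locate the coordinate hyperplanes, and then read off the coordinates by clamping. Work in $\aff C$, so $C$ is full-dimensional of dimension $r$.

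\textbf{Step 1: halfspaces are flat.} Let $A$ be a median halfspace of $C$. By the closure remark, $\overline A$ and $\overline{C\setminus A}$ are median-convex, hence Euclidean convex. I expect that, after discarding degenerate halfspaces (those with empty interior on one side, handled by a limiting argument), \cref{lem:convex-partition} gives a hyperplane $P=\{\ip{\nu}{x}=t\}$ meeting $\inter C$ and separating $C$ into the two closed sides.

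\textbf{Step 2: halfspace normals are mutually orthogonal.} Take two such halfspaces with normals $\nu,\nu'$ and hyperplanes $P,P'$ that cross inside $\inter C$. Choose $a,b$ near the crossing, on opposite sides of $P$ but on the same side of $P'$. Then $I(a,b)$ lies in the closed side of $P'$ containing $a,b$, since that side is median-convex. The interval $I(a,b)$ is convex, has diameter $\norm{a-b}$, and lies in the ball with diameter $[a,b]$. The gate projection onto $\overline A$, restricted to points near $P$, should be the Euclidean projection onto $C\cap P^-$, using \eqref{eq:straightening-assumption} with $Q$ an interval. By (M2) this gate preserves medians. Hence it must map each halfspace $A'$ into itself, which forces reflection symmetry of the configuration, i.e.\ $\ip{\nu}{\nu'}=0$ whenever $P\cap P'\cap\inter C\neq\emptyset$.

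A cleaner route I would try first uses the projection formula directly. For $x$ on the far side of $P$, $P_{I(a,b)}x$ must satisfy $\chi_A(P_{I(a,b)}x)=\chi_A(m(a,b,x))$. If $\nu$ and $\nu'$ were oblique, projecting a suitable $x$ onto a thin interval $I(a,b)$ straddling $P$ would land on the wrong side of $P'$, violating the median-homomorphism property of $\chi_{A'}$. Once orthogonality holds, nonparallel normals are pairwise orthogonal. Since halfspaces separate points, their normals span, so there are exactly $r$ normal directions $e_1,\dots,e_r$, giving the frame $\B$.

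\textbf{Step 3: identifying $m$.} The median halfspaces are exactly the sets $C\cap\{x_j\le t\}$ (or $<$), for $t$ ranging over the projection of $C$ to coordinate $j$. By separation of points, $z$ lies in $I(a,b)$ if and only if every such halfspace containing both $a$ and $b$ contains $z$. This says $I(a,b)=C\cap\boxB(a,b)$, and the median is then coordinatewise, since $\chi_A\circ m=\med(\chi_A,\chi_A,\chi_A)$ for every threshold. \Cref{lem:clamp-saturation}, applied to $x\mapsto P_{I(a,b)}x$, can serve as a consistency check that no cross-coordinate dependence survives.

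\textbf{Main obstacle.} The main obstacle is Step 2: turning the abstract median structure into Euclidean orthogonality of distinct halfspace normals. This includes handling halfspaces whose boundary hyperplanes do not intersect inside $C$, which I would connect through chains of intersecting hyperplanes, using that the set of normals is closed and full-dimensional.
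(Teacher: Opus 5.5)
There is a genuine gap, and it is the one you flag yourself: Step~2 carries essentially the whole theorem, and it is not established.

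\textbf{The crossing case.} For two halfspaces whose hyperplanes cross in $\inter C$, your first route can be made rigorous, but not through ``(M2) preserves medians, hence halfspaces are preserved.'' A median morphism need not map a halfspace into itself; a constant map is a counterexample. What works is the gate property itself.
\begin{itemize}
\item $P_{\overline A}$ is the gate onto the closed median-convex set $\overline A=C\cap P^-$. This follows from the restriction principle applied to $I(p,y)\subseteq\overline A$, exactly as in the paper's Step~1; no interval $Q$ is involved.
\item Whenever both sides of $A'$ meet $\overline A$ (true near a crossing point), one gets $\chi_{A'}(P_{\overline A}x)=\chi_{A'}(x)$. This is the step inside the paper's proof of (M2).
\item Near the crossing point, $P_{\overline A}$ is orthogonal projection onto $P$. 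Take $x$ with $\ip{\nu}{x}-t=s>0$ and $\ip{\nu'}{x}-t'=\tfrac s2\ip{\nu}{\nu'}$. Then $x$ and its image lie strictly on opposite sides of $P'$ unless $\ip{\nu}{\nu'}=0$.
\end{itemize}
Your ``cleaner route'' fails outright. If $a,b$ lie on one side of $A'$, median-convexity already places all of $I(a,b)$ on that side, so no projection onto $I(a,b)$ can land on the wrong side of $P'$.

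\textbf{The global statement.} The crossing argument says nothing about hyperplanes that meet $\inter C$ but not each other inside it. This includes nearly parallel hyperplanes through the same small ball, which may intersect only outside $C$. Your chain idea rests on an unproven closedness of the set of normals. Nothing in the proposal gives a neighbourhood of a point in which all halfspace hyperplanes use one frame, nor a way to propagate that frame across $\inter C$.

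Even ``the normals span'' needs proof. \Cref{lem:convex-partition} determines only the closures of the two sides. A halfspace whose hyperplane contains both $x$ and $x+\eps w$ could, a priori, separate them through its membership on the hyperplane itself.

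\textbf{Step~3 inherits the same boundary problem.} You assert that every halfspace is $C\cap\{x_j\le t\}$ or $C\cap\{x_j<t\}$, and that such cuts occur at every level. Neither claim is derived.
\begin{itemize}
\item Without all-or-nothing membership on $\{x_j=t\}$, the inclusion $C\cap\boxB(a,b)\subseteq I(a,b)$ fails at ties $z_j=\max(a_j,b_j)$.
\item Without $j$-cuts at a dense set of levels, you cannot place $I(a,b)$ inside $\boxB(a,b)$.
\end{itemize}

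\textbf{How the paper avoids this.} It never classifies halfspaces.
\begin{itemize}
\item It uses one separating halfspace only to obtain a single median-convex section $C\cap P$, and gets the tangential frame by induction on $\dim C$.
\item It proves the local formula near $p_0\in P\cap\inter C$ from the gate morphism, \cref{lem:clamp-saturation}, and the diameter-ball inclusion \eqref{eq:straightening-ball}.
\item Only then does it produce global coordinate cuts, at levels near regular points, by separating the two halves of a small median-convex box.
\item It propagates regularity along axis-parallel paths and finishes by continuity.
\end{itemize}
Some local-to-global mechanism of this kind is the idea missing from your Step~2.
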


\begin{proof}
We work inside \(\operatorname{aff}C\), after choosing an origin
there. Thus we may regard $C\subseteq\mathbb R^r$ where $r=\dim C$,
as a full-dimensional convex set. All interiors and affine
hyperplanes below are taken in this intrinsic Euclidean space.
We first record a consequence of
\eqref{eq:straightening-assumption} that will be used throughout the
proof. Fix \(a,b,x\in C\), and put $w=m(a,b,x)$.
Because a median operation is symmetric, the map obtained by varying
the second input can be written as $u\longmapsto m(a,u,x)
    =
    m(a,x,u)
    =
    P_{I(a,x)}u$.
It is therefore a Euclidean projection and is firmly nonexpansive.
At the inputs \(b\) and \(a\), its outputs are respectively
$m(a,b,x)=w$
    and
   $m(a,a,x)=a$.
Applying \eqref{eq:firm} gives $\norm{w-a}^2
    \le
    \ip{b-a}{w-a}$.
Completing the square, we obtain
\begin{equation}\label{eq:straightening-ball}
    I(a,b)
    \subseteq
    B\left(
        \frac{a+b}{2},
        \frac{\norm{a-b}}{2}
    \right).
\end{equation}
Thus every median interval lies in the Euclidean ball having
\([a,b]\) as a diameter.

We prove the theorem by induction on \(r\). If \(r=0\), then \(C\)
consists of one point and there is nothing to prove. Suppose \(r=1\).
Since \(I(a,b)\) is convex and contains \(a\) and \(b\), it contains
the ordinary segment between them. Conversely,
\eqref{eq:straightening-ball} places it inside that segment. Hence
\(I(a,b)\) is exactly the ordinary interval between \(a\) and \(b\).
By \eqref{eq:straightening-assumption}, \(m(a,b,x)\) is projection
onto this interval, and therefore is the scalar median of \(a,b,x\).

We now assume that \(r\ge2\).

\step{Step 1: find a median-convex hyperplane section.} Choose distinct points $x^-,x^+\in\inter C$.
By the median-halfspace separation property recorded in
\cref{sec:option-intervals}, there is a median halfspace
\(A\subseteq C\) such that $x^-\in A$,
    and
    $x^+\in C\setminus A$.
Both \(A\) and \(C\setminus A\) are Euclidean convex. Indeed, a
median-convex set contains \(I(a,b)\) whenever it contains \(a,b\),
and the Euclidean convexity of \(I(a,b)\) then gives the ordinary
segment between \(a\) and \(b\).

Both sides also have nonempty Euclidean interior. We verify this for
\(A\); the argument for \(C\setminus A\) is identical. Suppose that
\(\inter A=\varnothing\). Since \(A\) is convex, it is contained in
a proper affine hyperplane \(L\). Because \(x^-\in\inter C\), we can
choose a vector \(v\) transverse to \(L\) and \(\varepsilon>0\)
small enough that $x^-+\varepsilon v$,
    and
    $x^--\varepsilon v$
both belong to \(C\). Neither point lies in \(L\), so neither belongs
to \(A\). Both therefore belong to \(C\setminus A\). Convexity of
\(C\setminus A\) would then place their midpoint \(x^-\) in
\(C\setminus A\), contradicting \(x^-\in A\).

By \cref{lem:convex-partition}, there is an affine hyperplane \(P\)
meeting \(\inter C\) such that the closures of \(A\) and
\(C\setminus A\) are the two closed traces of \(P\) on \(C\).
The closure of a median-convex set is median-convex, as observed at
the end of \cref{sec:option-intervals}. Hence $C':=C\cap P$
is median-convex. Since \(P\) meets \(\inter C\), the section \(C'\)
has dimension \(r-1\).

We next show that Euclidean projection onto \(C'\) is also a gate
projection in the median algebra. Fix \(x\in C\), and put $p=P_{C'}x$.
For every \(y\in C'\), median-convexity gives $I(p,y)\subseteq C'$,
and \(p\in I(p,y)\). Since \(p\) minimizes the distance from \(x\)
over \(C'\), the restriction principle
\eqref{eq:restrict-minimizer} gives $P_{I(p,y)}x=p$.
Using symmetry and \eqref{eq:straightening-assumption}, $m(x,p,y)
    =
    m(p,y,x)
    =
    P_{I(p,y)}x
    =
    p$. 
Thus \(P_{C'}\) satisfies the gate identity. By the gate-morphism
property recorded in \cref{sec:option-intervals}, it preserves the
median operation:
\begin{equation}\label{eq:gate-morphism}
    P_{C'}m(a,b,c)
    =
    m(P_{C'}a,P_{C'}b,P_{C'}c).
\end{equation}

The restriction of \(m\) to \((C')^3\) satisfies the hypotheses of
the theorem. To see that its median intervals are unchanged, fix
\(a,b\in C'\). Median-convexity gives $I(a,b)\subseteq C'$.
The interval generated by restricting the third input to \(C'\) is
therefore contained in \(I(a,b)\). Conversely, if \(z\in I(a,b)\),
then \(z\in C'\) and $m(a,b,z)=z$,
so \(z\) is also produced by the restricted operation. Thus the two
intervals coincide. Since the inputs and the feasible interval both
lie in \(P\), Euclidean projection computed inside \(P\) is the same
nearest-point problem as in the ambient space.

The induction hypothesis therefore gives an orthonormal frame $e_1,\ldots,e_{r-1}$ of the direction space of \(P\) in which the restriction of \(m\) to
\(C'\) is the coordinate-wise median.

\step{Step 2: extend the tangential frame to a local full frame.} Choose $p_0\in P\cap\inter C$,
and let \(\nu\) be a unit vector perpendicular to \(P\). Set $e_r:=\nu$ and $\mathcal B=(e_1,\ldots,e_{r-1},e_r)$.
For the rest of this step, coordinates are taken relative to the
origin \(p_0\) and the frame \(\mathcal B\). Since \(p_0\in\inter C\), choose \(\rho>0\) such that $B(p_0,2\rho)\subseteq C$,
and let $V=\{x:\norm{x-p_0}<\rho\}$.
Let \(P_P\) denote orthogonal projection onto the affine hyperplane
\(P\). Since \(p_0\in P\) and orthogonal projection is nonexpansive, $P_P(V)
    \subseteq
    V\cap P
    \subseteq
    C'$.
For \(x\in V\), the point \(P_Px\) is the nearest point to \(x\) in
the entire hyperplane \(P\), and it belongs to \(C'\). It is therefore
also the nearest point to \(x\) in \(C'\). Hence $P_{C'}x=P_Px$ for $x\in V$.

Since $m(p_0,p_0,p_0)=p_0$,
continuity of \(m\) allows us to choose a sufficiently small open
\(\mathcal B\)-coordinate cube \(U\), centered at \(p_0\), such that
$U\subseteq V$, and $m(U^3)\subseteq V$.
For \(a,b,c\in U\), the inputs and their output all lie in \(V\).
Equation \eqref{eq:gate-morphism} therefore becomes $P_Pm(a,b,c)
    =
    m(P_Pa,P_Pb,P_Pc)$.
The three projected inputs belong to \(C'\), where the induction
hypothesis applies. Consequently, for every tangential coordinate
\(j<r\),
\[
\begin{aligned}
    \ip{e_j}{m(a,b,c)-p_0}
    =
    \operatorname{med}\bigl(
        \ip{e_j}{a-p_0},
        \ip{e_j}{b-p_0},
        \ip{e_j}{c-p_0}
    \bigr).
\end{aligned}
\]
Thus all tangential output coordinates are already determined. It
remains to identify the output coordinate in the normal direction
\(\nu\). Fix \(b,c\in U\), and vary only one tangential coordinate of the
first report. Write $a(s)=p_0+a_\perp+s e_j$, $a_\perp\perp e_j$,
where \(j<r\) and \(s\) ranges over an interval for which
\(a(s)\in U\). By symmetry and
\eqref{eq:straightening-assumption}, $a\longmapsto m(a,b,c)
    =
    m(b,c,a)
    =
    P_{I(b,c)}a$ is firmly nonexpansive.

The tangential formula already proved shows that the \(j\)-th output
coordinate is $q(s)
    =
    \operatorname{med}(s,b_j,c_j)$,
where \(b_j=\ip{e_j}{b-p_0}\) and
\(c_j=\ip{e_j}{c-p_0}\). This is the clamp of \(s\) to the interval
between \(b_j\) and \(c_j\). The hypotheses of
\cref{lem:clamp-saturation} are therefore satisfied, so the entire
output component orthogonal to \(e_j\), and in particular the normal
output coordinate, is independent of \(s\).

Repeat this argument for every tangential coordinate of each of the
three reports. Because \(U\) is a coordinate cube, the coordinates
may be changed one at a time without leaving \(U\). We conclude that
the normal output coordinate depends only on the three normal input
coordinates.

For \(a,b,c\in U\), write $\alpha=\ip{\nu}{a-p_0},
    \beta=\ip{\nu}{b-p_0}$ and
    $\gamma=\ip{\nu}{c-p_0},$
and define the corresponding points on the normal line through
\(p_0\): $\widehat a=p_0+\alpha\nu,
    \widehat b=p_0+\beta\nu$, and
    $\widehat c=p_0+\gamma\nu$.
These points remain in \(U\). Replacing the tangential coordinates
of the reports by those of \(p_0\) does not change the normal output,
so it is enough to determine $y:=m(\widehat a,\widehat b,\widehat c)$.
Using \eqref{eq:gate-morphism} and $P_P\widehat a
    =
    P_P\widehat b
    =
    P_P\widehat c
    =
    p_0$,
we get $P_Py
    =
    m(p_0,p_0,p_0)
    =
    p_0$.
Thus $y=p_0+\eta\nu$
for some \(\eta\in\mathbb R\).
The median \(y\) belongs to each of the three pairwise median
intervals. In particular, $y\in I(\widehat a,\widehat b)$.
By \eqref{eq:straightening-ball}, $\left|
        \eta-\frac{\alpha+\beta}{2}
    \right|
    \le
    \frac{|\alpha-\beta|}{2}$.
This is equivalent to saying that \(\eta\) lies between \(\alpha\)
and \(\beta\). Applying the same argument to the other two pairwise
intervals shows that \(\eta\) lies between each pair among
\(\alpha,\beta,\gamma\). The unique scalar with this property is
their median: $\eta
    =
    \operatorname{med}(\alpha,\beta,\gamma)$.

Combining the tangential and normal coordinates, we obtain
\begin{equation}\label{eq:local-median}
    m(a,b,c)
    =
    \operatorname{CM}_{\mathcal B}(a,b,c)
    \qquad
    (a,b,c\in U).
\end{equation}

\step{Step 3: extend local coordinate cuts to all of \(C\).} Choose \(\delta>0\) such that $B(p_0,2\delta)\subseteq U$,
and choose a small full-dimensional closed \(\mathcal B\)-box \(Q\), centered at
\(p_0\), such that $Q\subseteq B(p_0,\delta)$.
If \(a,b\in Q\), then the center of the ball with diameter
\([a,b]\) belongs to \(B(p_0,\delta)\), and its radius is at most
\(\delta\). Hence the entire diameter ball lies in
\(B(p_0,2\delta)\subseteq U\).

We claim that, for every \(a,b\in Q\),
\begin{equation}\label{eq:local-interval-box}
    I(a,b)
    =
    \boxB(a,b).
\end{equation}
For the forward inclusion, let \(z\in I(a,b)\). By
\eqref{eq:straightening-ball}, \(z\in U\). The fixed-point
characterization of a median interval gives $m(a,b,z)=z$.
Together with \eqref{eq:local-median}, this says that every coordinate
of \(z\) lies between the corresponding coordinates of \(a\) and
\(b\). Thus $z\in\boxB(a,b)$.
Conversely, suppose \(z\in\boxB(a,b)\). Since \(Q\) is itself a
coordinate box and \(a,b\in Q\), $\boxB(a,b)\subseteq Q\subseteq U$.
The local formula \eqref{eq:local-median} then gives $m(a,b,z)=z$,
so \(z\in I(a,b)\). This proves
\eqref{eq:local-interval-box}. In particular, \(Q\) is
median-convex as a subset of the full median algebra \(C\).

Fix a coordinate \(j\) and a level \(t\) strictly between the two
\(j\)-faces of \(Q\). Define $A_-:=Q\cap\{x_j\le t\}$ and $A_+:=Q\cap\{x_j>t\}$.
By \eqref{eq:local-interval-box}, both sets are median-convex, and
they are disjoint. By the median-halfspace separation property
recorded in \cref{sec:option-intervals}, there is a median halfspace
\(J\subseteq C\), oriented so that $A_-\subseteq J$, and $A_+\subseteq C\setminus J$. Both \(J\) and \(C\setminus J\) are Euclidean convex, and both have
nonempty interior because they contain the corresponding open
half-boxes of \(Q\). By \cref{lem:convex-partition}, there is an
affine hyperplane \(R\) whose two closed traces on \(C\) are
\(\overline J\) and \(\overline{C\setminus J}\). We now identify \(R\). Let
$S=\inter Q\cap\{x_j=t\}$.
For every \(z\in S\) and every sufficiently small
\(\varepsilon>0\),
\[
    z-\varepsilon e_j\in A_-\subseteq J,
    \qquad
    z+\varepsilon e_j\in A_+\subseteq C\setminus J.
\]
Letting \(\varepsilon\downarrow0\) shows that \(z\) lies in both
closures, and hence in \(R\). Therefore \(R\) contains the relatively
open \((r-1)\)-dimensional set \(S\) of the coordinate hyperplane
\(\{x_j=t\}\). Two affine hyperplanes containing the same relatively
open \((r-1)\)-dimensional set must coincide. Thus $R=\{x_j=t\}$.
The orientation is determined by \(A_-\subseteq J\). Consequently,
\begin{equation}\label{eq:global-cuts}
    C\cap\{x_j\le t\}
    \quad\text{and}\quad
    C\cap\{x_j\ge t\}
    \quad\text{are median-convex.}
\end{equation}
The important point is that these are global subsets of \(C\),
although they were discovered using the local box \(Q\).

\step{Step 4: propagate the same frame through the interior.}
Call \(p\in\inter C\) \emph{\(\mathcal B\)-regular} if there is an
open \(\mathcal B\)-coordinate cube \(U_p\) around \(p\) such that $m(a,b,c)
    =
    \operatorname{CM}_{\mathcal B}(a,b,c)$ for all $a,b,c\in U_p$.
Step 2 shows that \(p_0\) is \(\mathcal B\)-regular.
Suppose \(p\) is \(\mathcal B\)-regular and
\(p'\in\inter C\) differs from \(p\) only in coordinate \(k\).
Since \(\inter C\) is convex, the coordinate segment joining \(p\)
and \(p'\) lies in \(\inter C\).

Apply the construction of Step 3 in a local box around \(p\). For
every \(j\ne k\), it supplies global median-convex cuts of the form $C\cap\{x_j\le t\}$, and $C\cap\{x_j\ge t\}$,
for all \(t\) in some open interval around \(p_j\). Since $p'_j=p_j$ for
    $j\ne k$,
we may choose a small open \(\mathcal B\)-coordinate cube
\(U'\subseteq\inter C\) around \(p'\) whose \(j\)-coordinate range
lies inside this interval for every \(j\ne k\).
Take arbitrary \(a,b,c\in U'\), and fix \(j\ne k\). Relabel the
three points according to their \(j\)-th coordinates so that $a_j\le b_j\le c_j$.
Choose an available cut level \(t>b_j\). The points \(a,b\) belong
to the lower cut $C\cap\{x_j\le t\}$.
Since that cut is median-convex, $I(a,b)\subseteq C\cap\{x_j\le t\}$.
The median \(m(a,b,c)\) belongs to \(I(a,b)\), and therefore $m(a,b,c)_j\le t$.
Letting \(t\downarrow b_j\) gives $m(a,b,c)_j\le b_j$.
Similarly, for an available level \(t<b_j\), the points \(b,c\)
belong to the upper cut $C\cap\{x_j\ge t\}$.
Since \(m(a,b,c)\in I(b,c)\), we get $m(a,b,c)_j\ge t$.
Letting \(t\uparrow b_j\) gives $m(a,b,c)_j\ge b_j$.
Thus $m(a,b,c)_j=b_j$.
We have shown that every output coordinate except possibly coordinate
\(k\) is the scalar median of the corresponding inputs on
\((U')^3\).

Now vary one input coordinate \(j\ne k\) of one report while holding
all other inputs fixed. The corresponding \(j\)-th output coordinate
is a scalar clamp. By \cref{lem:clamp-saturation}, the \(k\)-th output
coordinate is independent of this variation. Repeating this for every
coordinate \(j\ne k\) of every report shows that the \(k\)-th output
depends only on the three \(k\)-th input coordinates.

We may therefore replace all other coordinates of the three reports
by those of \(p'\). The three reports then lie on the \(k\)-coordinate
line through \(p'\), and all output coordinates other than \(k\)
equal the corresponding coordinates of \(p'\). The median belongs to
each pairwise interval, and \eqref{eq:straightening-ball} implies that
its \(k\)-th coordinate lies between each pair of scalar inputs. It
must therefore equal their scalar median. Hence \(p'\) is
\(\mathcal B\)-regular.

It remains to propagate regularity throughout \(\inter C\). Let
\(p,q\in\inter C\). The segment \([p,q]\) is a compact subset of the
open set \(\inter C\), so there is \(\varepsilon>0\) such that its
\(\varepsilon\)-neighborhood is contained in \(\inter C\). Subdivide
\([p,q]\) into sufficiently short segments. If \(v\) is the
displacement of one such segment, its coordinate-by-coordinate path
stays within distance at most $\norm{v}_1\le\sqrt r\,\norm{v}$
of its starting point. Choosing the subdivision sufficiently fine
keeps every such path in the \(\varepsilon\)-neighborhood of
\([p,q]\), and hence in \(\inter C\).

We have therefore connected \(p\) to \(q\) by a finite
\(\mathcal B\)-axis-parallel polygonal path contained in
\(\inter C\). Applying the preceding propagation argument along its
edges, starting with $p=p_0$, shows that every point of \(\inter C\) is
\(\mathcal B\)-regular. In particular, the same frame
\(\mathcal B\) works locally throughout the interior.

\step{Step 5: identify the operation globally.}
Take arbitrary $a,b,c\in\inter C$,
and fix a coordinate \(j\). Relabel the points according to that
coordinate so that $a_j\le b_j\le c_j$.
Thus \(b_j\) is the scalar median. Since \(b\) is
\(\mathcal B\)-regular, the construction of Step 3 supplies global
median-convex coordinate cuts at every level sufficiently close to
\(b_j\).

For \(t>b_j\) sufficiently close to \(b_j\), the points \(a,b\)
belong to the lower cut $C\cap\{x_j\le t\}$.
The median belongs to \(I(a,b)\), and this interval lies in the lower
cut. Hence $m(a,b,c)_j\le t$.
Letting \(t\downarrow b_j\), we get $m(a,b,c)_j\le b_j$.
For \(t<b_j\) sufficiently close to \(b_j\), the points \(b,c\)
belong to the upper cut $C\cap\{x_j\ge t\}$.
Since \(m(a,b,c)\in I(b,c)\), $m(a,b,c)_j\ge t$.
Letting \(t\uparrow b_j\), we obtain $m(a,b,c)_j\ge b_j$.
Therefore $m(a,b,c)_j=b_j$.
Repeating this argument for every coordinate proves $m(a,b,c)
    =
    \operatorname{CM}_{\mathcal B}(a,b,c)$
    for $a,b,c\in\inter C$.
Finally, fix \(p_*\in\inter C\). For \(a\in C\) and \(0<s<1\), set $a_s=(1-s)a+sp_*$.
Then \(a_s\in\inter C\). The reason is that if
\(B(p_*,\rho)\subseteq C\), convexity gives $B(a_s,s\rho)
    =
    (1-s)a+sB(p_*,\rho)
    \subseteq C$.
Define \(b_s,c_s\) in the same way. The identity already proved in
the interior gives $m(a_s,b_s,c_s)
    =
    \operatorname{CM}_{\mathcal B}(a_s,b_s,c_s)$.
Letting \(s\downarrow0\), continuity of \(m\) and of the scalar median
yields $m(a,b,c)
    =
    \operatorname{CM}_{\mathcal B}(a,b,c)$
for all \(a,b,c\in C\).
\end{proof}

\subsection{Why the ambient extension forces a box boundary}\label{sec:boundary}
Straightening identifies the operation, but not yet the shape of $C$. For example, the oblique halfspace $\{x_1\le x_2\}$ is closed under coordinate-wise median: coordinatewise inequalities are preserved by the scalar median. The remainder of the proof uses reports \emph{outside} the core to exclude oblique boundaries.

We make the modest amount of boundary regularity used below explicit. Work in a finite-dimensional Euclidean space $V$ and let $C\subseteq V$ be full-dimensional, closed, and convex. A unit vector $\nu$ is an \emph{outward supporting normal} at $y\in\partial C$ if $\ip{\nu}{z-y}\le0$ for $z\in C$.
A boundary point is \emph{regular} if it has a unique outward unit supporting normal. Regularity is needed only at selected points; no global smoothness or polyhedral assumption is made.

\begin{lemma}[Regular boundary facts]\label{lem:regular-boundary}
For a full-dimensional closed convex set $C$ the following hold.
\begin{enumerate}[label=(\roman*)]
\item If $y$ is regular with normal $\nu$, all unit supporting normals at boundary points sufficiently close to $y$ are arbitrarily close to $\nu$.
\item If $y$ is regular and $\ip{\nu}{d}<0$, then $y+td\in\inter C$ for all sufficiently small $t>0$.
\item $C$ is the intersection of its supporting closed halfspaces at regular boundary points.
\end{enumerate}
\end{lemma}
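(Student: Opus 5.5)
All three parts are classical facts of convex analysis. I would prove each directly from the supporting-hyperplane theorem, compactness of the unit sphere, and the structure of the normal cone. Throughout, $N(y)$ denotes the set of outward unit supporting normals at $y\in\partial C$. It is nonempty by the supporting-hyperplane theorem and closed in the unit sphere.

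For (i), I would argue by contradiction using an upper-semicontinuity argument. Suppose there are boundary points $y_k\to y$ and normals $\nu_k\in N(y_k)$ with $\norm{\nu_k-\nu}\ge\eps$ for a fixed $\eps>0$. By compactness of the unit sphere, pass to a subsequence with $\nu_k\to\mu$. Letting $k\to\infty$ in $\ip{\nu_k}{z-y_k}\le0$ shows $\ip{\mu}{z-y}\le0$ for every $z\in C$, so $\mu\in N(y)=\{\nu\}$. This contradicts $\norm{\mu-\nu}\ge\eps$.

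For (ii), I would again argue by contradiction. If $y+t_kd\notin\inter C$ for some $t_k\downarrow0$, then either $y+t_kd\notin C$ or $y+t_kd\in\partial C$. In both cases I choose a nearby boundary point $y_k$ with a normal $\nu_k$ such that $\ip{\nu_k}{y+t_kd-y_k}\ge0$. If $y+t_kd\notin C$, take $y_k=P_C(y+t_kd)$ and $\nu_k$ the normalized residual, using \cref{lem:projection-tools}(i). Otherwise take $y_k=y+t_kd$ itself with any supporting normal. In both cases $\norm{y_k-y}\le2t_k\norm{d}$, so $y_k\to y$.

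Testing the support inequality for $\nu_k$ against $z=y\in C$ gives $\ip{\nu_k}{y-y_k}\le0$. Combined with the previous inequality, this yields $t_k\ip{\nu_k}{d}\ge0$, hence $\ip{\nu_k}{d}\ge0$. By (i), $\nu_k\to\nu$, so $\ip{\nu}{d}\ge0$, a contradiction.

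Part (iii) is the main step. Since $C$ is contained in each of the halfspaces, it suffices to show that every point $x\notin C$ is excluded by a supporting halfspace at some regular boundary point. I would use the classical fact that non-regular boundary points are rare. The quickest route is through the distance function. The function $\dist(\cdot,C)$ is convex and differentiable off $C$ with gradient $(x-P_Cx)/\norm{x-P_Cx}$. If $y=P_Cx$ has a non-unique normal, I would perturb $x$.

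Concretely, I would pick an interior point $c_0$ and move along the segment from $x$ toward $c_0$ to its last point outside $C$; but this point need not be regular. The cleaner approach is a theorem, which I would either cite or prove: the support function of $C$ is convex, hence differentiable almost everywhere by Rademacher/Alexandrov. Its directions of differentiability $\nu$ are exactly those whose supporting face is a single point. Full-dimensionality and density then give a direction $\nu$ near $(x-P_Cx)/\norm{x-P_Cx}$ that exposes a unique point $y_\nu$. I would still need regularity of $y_\nu$ itself, which is uniqueness of the normal at $y_\nu$, and this is the real subtlety.

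So I would instead use the Reidemeister/Anderson–Klee theorem that singular boundary points of a convex body form a set of $(r-1)$-dimensional measure zero. With it I would argue as follows. The projections $P_C$ of a small open ball $W$ around $x$ disjoint from $C$ cover a boundary patch of positive surface measure, by the Lipschitz graph structure near $P_Cx$. Hence some $x'\in W$ has a regular projection $y'$ with normal $\nu'=(x'-y')/\norm{x'-y'}$. Then $\ip{\nu'}{x-y'}>0$ for $W$ small, so the supporting halfspace at $y'$ excludes $x$. Establishing this positive-measure covering, or citing the theorem, is where I expect the main obstacle.
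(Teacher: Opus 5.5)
\textbf{Parts (i) and (ii).} These are fine. Part (i) is exactly the paper's compactness argument. In (ii) you pass through the boundary points $y_k$ (the metric projection, or the point itself) and then invoke (i). The paper instead separates $y+t_qd$ directly from $\inter C$. The two are equivalent.

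\textbf{Part (iii).} There is a genuine gap at the step you flagged, and the claim there is false, not merely unproven. Metric projections of a small ball $W$ around $x$ need not cover a boundary patch of positive surface measure, and they fail precisely in the case that matters. Take $C=\{z\in\R^2: z_1\le0,\ z_2\le0\}$ and $x=(1,1)$. Every point of the ball of radius $1/2$ about $x$ has positive coordinates, so it projects to the corner: $P_C(W)=\{0\}$. That is a single non-regular point, and no small $W$ does better. Metric projection collapses whole open sets onto singular points, so it cannot be combined with a measure-zero theorem for singular points. Your first attempt had the complementary defect, as you noticed: differentiability of the support function of $C$ detects uniqueness of the maximizer in $C$ (exposed points), not uniqueness of the normal.

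\textbf{The fix.} Use the other dual object, as the paper does. Translate so that $B(0,r)\subseteq C$. Let $T$ be the polar of $C$ and $h_T$ its support function (the gauge of $C$), so that $C=\{h_T\le1\}$.
\begin{itemize}
\item Since $h_T$ is Lipschitz, Rademacher gives a differentiability point $v$ arbitrarily close to $x$. Then $h_T(v)>1$, and the unique maximizer $u\in T$ satisfies $\ip{u}{x}>1$.
\item The boundary point $y=v/h_T(v)$ is regular. Any supporting normal $w$ at $y$ has $\ip{w}{y}\ge r\norm{w}>0$. The rescaling $w/\ip{w}{y}$ lies in $T$ and attains $h_T(v)$, so it equals $u$.
\item The halfspace $\{\ip{u}{\cdot}\le1\}$ supports $C$ at $y$ and excludes $x$.
\end{itemize}

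\textbf{Keeping your measure-theoretic idea.} Replace metric projection by \emph{radial} projection from an interior point $c_0$ with $B(c_0,r)\subseteq C$.
\begin{itemize}
\item Radial projection maps a small ball around $x$ onto a nonempty relatively open subset of $\partial C$, which has positive $(r-1)$-measure.
\item Reidemeister/Anderson--Klee then yields a regular point $y'=c_0+(x'-c_0)/\gamma$ in that patch, where $\gamma>1$ is the gauge value at $x'-c_0$.
\item Since $x'-y'=(\gamma-1)(y'-c_0)$, its unit normal $\nu'$ satisfies $\ip{\nu'}{x'-y'}\ge(\gamma-1)r$. This margin is bounded away from zero near $x$, so it survives replacing $x'$ by $x$.
\end{itemize}
The paper's route is more self-contained: Rademacher applied to $h_T$ produces the regular point directly, without citing the singular-point theorem.
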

\begin{proof}
For (i), otherwise take $y_q\to y$ and unit normals $\nu_q$ that stay a fixed positive distance from $\nu$. A subsequence converges on the compact unit sphere to $\nu'$. Passing to the limit in $\ip{\nu_q}{z-y_q}\le0$ for each $z\in C$ makes $\nu'$ a unit supporting normal at $y$, contrary to uniqueness.

For (ii), suppose instead that there are $t_q\downarrow0$ with $y+t_qd\notin\inter C$. Separate each such point from the open convex set $\inter C$ to obtain a unit vector $w_q$ with
$\ip{w_q}{z-(y+t_qd)}\le0$ for $z\in C$.
A convergent subsequence has limit $\nu$ by regularity at $y$. Taking $z=y$ gives $\ip{w_q}{d}\ge0$, so $\ip{\nu}{d}\ge0$, a contradiction. This proof only uses finite-dimensional supporting-hyperplane separation~\cite{Rockafellar1970}.

For (iii), the claim is immediate if $C=V$. Otherwise translate so that $0\in\inter C$, and choose $r>0$ with $B(0,r)\subseteq C$. Define the polar set and its support function by
\[
 T=\{u\in V:\ip{u}{c}\le1\text{ for all }c\in C\},\qquad
 h_T(x)=\max_{u\in T}\ip{u}{x}.
\]
The set $T$ is closed and contained in $B(0,1/r)$, hence compact. Moreover,
\begin{equation}\label{eq:polar-representation}
 C=\{x:h_T(x)\le1\}.
\end{equation}
For completeness, if $x\notin C$, let $p=P_Cx$ and $w=x-p$. Then $\ip{w}{c}\le\ip{w}{p}$ for $c\in C$, and $\ip{w}{p}\ge r\norm{w}>0$. Thus $u=w/\ip{w}{p}$ belongs to $T$ and satisfies $\ip{u}{x}>1$, proving \eqref{eq:polar-representation}.

The support function $h_T$ is Lipschitz, with constant $M=\max_{u\in T}\norm{u}$. By Rademacher's theorem it is differentiable almost everywhere~\cite{EvansGariepy2015}. At a differentiability point $v$, its maximizer is unique: if $u$ maximizes at $v$, the inequalities $h_T(v+tw)\ge h_T(v)+t\ip{u}{w}$ for both signs of $t$ force $u=\nabla h_T(v)$.

Fix $x\notin C$. Choose a differentiability point $v$ so close to $x$ that $h_T(v)>1$ and, for its unique maximizer $u$,
\[
 \ip{u}{x}=h_T(v)+\ip{u}{x-v}
 \ge h_T(x)-2M\norm{x-v}>1.
\]
Such a choice is possible since $h_T(x)>1$ and differentiability points have full measure. Set $y=v/h_T(v)$. Then $y\in\partial C$ by homogeneity of $h_T$. For any nonzero supporting normal $w$ of $C$ at $y$, the ball $B(0,r)$ implies $\ip{w}{y}>0$. Consequently $\widehat w=w/\ip{w}{y}$ belongs to $T$, and $\ip{\widehat w}{v}=h_T(v)$.
Uniqueness of the maximizer forces $\widehat w=u$. Thus $y$ is regular, and its supporting halfspace $\{z:\ip{u}{z}\le1\}$ excludes $x$. Every exterior point is therefore excluded by a regular supporting halfspace, proving (iii).
\end{proof}

\begin{lemma}[A locally two-coordinate boundary is cylindrical]\label{lem:cylinder}
Fix orthonormal coordinates on $V=\R^r$. Suppose $y\in\partial C$ is regular, with a normal having exactly two nonzero components, indexed by $i,j$. Suppose also that every regular normal of $C$ has at most two nonzero components. Then near $y$ the boundary is a graph $x_i=\psi(x_j)$,
independent of all the other coordinates, where $\psi$ is a one-variable convex or concave locally Lipschitz function. In particular, there are distinct regular points arbitrarily close to $y$ whose coordinates outside $\{i,j\}$ agree with those of $y$.
\end{lemma}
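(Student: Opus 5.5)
The plan is to represent $\partial C$ near $y$ as the graph of a Lipschitz convex (or concave) function of the coordinates other than $x_i$. We then show that this function has vanishing partial derivatives almost everywhere in every coordinate outside $\{i,j\}$, and finally read off regular points from differentiability points of the resulting one-variable function. Throughout, write $\nu$ for the unique outward unit normal at $y$. By hypothesis $\nu_i\neq0$ and $\nu_j\neq0$, and $\nu_k=0$ for $k\notin\{i,j\}$. Replacing $x_i$ by $-x_i$ if necessary (this is what produces the convex/concave alternative), we may assume $\nu_i>0$.

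\textbf{Step 1: local graph representation.} By \cref{lem:regular-boundary}(i), every unit supporting normal $w$ at a boundary point near $y$ satisfies $w_i\ge\nu_i/2>0$. From this, together with \cref{lem:regular-boundary}(ii) applied in the direction $-e_i$, I expect the following picture. There are a small open box $W$ in the coordinates $x_{-i}$ around $y_{-i}$, an interval around $y_i$, and a function $\varphi:W\to\R$ such that, inside the corresponding cylinder, $C$ is exactly $\{x_i\le\varphi(x_{-i})\}$.

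The reason is that each vertical line $x_{-i}=\text{const}$ near $y$ meets $\partial C$ exactly once in the cylinder. At such a crossing point the supporting halfspace has $w_i>0$, so the line passes from $C$ to its complement transversally. The function $\varphi$ is concave, since its hypograph is locally $C$, which is convex. It is Lipschitz with constant at most $\sup_{w}\norm{w_{-i}}/w_i<\infty$, again by the uniform lower bound on $w_i$.

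At a point $x_{-i}\in W$, unit supporting normals at $(\varphi(x_{-i}),x_{-i})$ correspond bijectively, after normalization, to vectors $(1,-g)$ with $g$ a supergradient of $\varphi$ at $x_{-i}$. Hence the boundary point is regular exactly when $\varphi$ is differentiable at $x_{-i}$, and then the normal is proportional to $(1,-\nabla\varphi(x_{-i}))$.

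\textbf{Step 2: independence of the other coordinates.} By Rademacher's theorem, $\varphi$ is differentiable almost everywhere on $W$. At every such point the boundary point is regular, and its normal has at most two nonzero components. Its $i$-component is nonzero, and its $j$-component is close to $\nu_j\neq0$. So the remaining two-component constraint forces $\partial_k\varphi=0$ almost everywhere for every $k\notin\{i,j\}$.

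A Lipschitz function is absolutely continuous on each coordinate line. By Fubini, for almost every line parallel to $e_k$ in the box $W$, the derivative vanishes almost everywhere on that line, so $\varphi$ is constant along it. By continuity this extends to every line. Doing this for each $k\notin\{i,j\}$ in turn shows $\varphi(x_{-i})=\psi(x_j)$ on $W$. The function $\psi$ inherits concavity and the Lipschitz bound; the convex case comes from undoing the reflection in $x_i$.

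\textbf{Step 3: regular points on the slice.} A one-variable concave function is differentiable off a countable set. So there are infinitely many values $t$ arbitrarily close to $y_j$ at which $\psi$ is differentiable. At such $t$, the function $\varphi(x_{-i})=\psi(x_j)$ is differentiable at the point whose $j$-coordinate is $t$ and whose other coordinates equal those of $y$. By Step 1, the boundary point $(\psi(t),t,y_{\text{rest}})$ is then regular. These points are distinct for distinct $t$ and agree with $y$ outside $\{i,j\}$, which gives the final assertion.

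The main obstacle is Step 1: making rigorous both the local graph representation and the exact correspondence between supporting normals and supergradients, in particular that regularity coincides with differentiability of $\varphi$. I would handle this by choosing the cylinder small enough that all supporting normals have $w_i$ bounded below, and then invoking the standard subdifferential description of the hypograph of a concave function.
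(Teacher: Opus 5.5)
Your proposal is correct and follows essentially the same route as the paper. That route is: a local concave graph $x_i=\varphi(x_{-i})$ obtained from \cref{lem:regular-boundary}(i)--(ii); Rademacher's theorem plus the sparsity of regular normals to force $\partial_k\varphi=0$ almost everywhere for $k\notin\{i,j\}$; Fubini with absolute continuity on coordinate lines to get $\varphi=\psi(x_j)$; and differentiability points of $\psi$ near $y_j$ to produce the regular points. Your small variations are harmless refinements: the direct Lipschitz bound from $w_i\ge\nu_i/2$, the two-sided equivalence between regularity and differentiability, and countability of the nondifferentiability set of a concave $\psi$. Just make explicit that you take $t\neq y_j$, so the point is distinct from $y$, and that continuity of $\psi$ makes these points converge to $y$.
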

\begin{proof}
Reverse coordinate $i$ if necessary so that $\nu_i>0$. By \cref{lem:regular-boundary}(i), all nearby supporting unit normals have nonzero $i$- and $j$-components. We first justify the local graph. By part (ii), $y-\eta e_i$ lies in $\inter C$ for small $\eta>0$, whereas $y+\eta e_i$ is outside $C$ by the supporting inequality. After shrinking a rectangle of base coordinates $x_{-i}$ around $y_{-i}$, its bottom slice at height $y_i-\eta$ stays inside $C$, and its top slice at height $y_i+\eta$ stays outside the supporting halfspace at $y$. Each vertical section of the convex set is an interval. Its upper endpoint defines a finite concave function $\varphi(x_{-i})$ on the rectangle, and the local boundary is $x_i=\varphi(x_{-i})$. A finite concave function is locally Lipschitz on the interior of its domain~\cite{Rockafellar1970}. Shrink to a smaller open rectangle where it is Lipschitz and all graph points lie in the chosen neighborhood of $y$. The case $\nu_i<0$ before reversal gives a convex graph instead.

By Rademacher's theorem, $\varphi$ is differentiable almost everywhere in that rectangle. At such a point, the boundary normal is unique and proportional, with the appropriate sign, to $e_i-\sum_{k\ne i}(\partial_k\varphi)e_k$.
Its $i$- and $j$-components are nonzero by the neighborhood choice. The assumed sparsity of regular normals therefore gives $\partial_k\varphi=0$ almost everywhere for every $k\notin\{i,j\}$.

Here is why these almost-everywhere statements yield an actual cylinder. A Lipschitz function is absolutely continuous on each coordinate line segment. Fubini's theorem implies that, for almost every choice of the other base coordinates, its derivative in direction $k$ is zero almost everywhere along that line. The one-dimensional fundamental theorem of calculus makes the function constant on that line. Such choices of the other coordinates are dense, so continuity extends constancy to every parallel line in a smaller rectangle. Apply this successively to each $k\notin\{i,j\}$. The function depends only on $x_j$: $\varphi=\psi(x_j)$.

The one-variable function $\psi$ is differentiable almost everywhere. Choose a differentiability value of $x_j$ distinct from and arbitrarily close to $y_j$, keeping every other base coordinate fixed. The corresponding graph point is regular, distinct from $y$, and has all coordinates outside $\{i,j\}$ fixed. Part (i) of \cref{lem:regular-boundary} also makes its normal approach $\nu$ as the point approaches $y$.
\end{proof}

\begin{proof}[Proof of \cref{thm:core-box}]
By \cref{lem:interval-geometry}, the operation $H|_{C^3}$ satisfies \cref{thm:straightening}. Fix the resulting orthonormal frame $\B$ in $\aff C$. For $p,q\in C$, the fixed-point set of $x\mapsto H(p,q,x)$ is its range $I_{pq}$, contained in $C$. Since $H|_{C^3}=\CM_{\B}$, this gives
\begin{equation}\label{eq:core-interval-box}
 I_{pq}=C\cap\boxB(p,q).
\end{equation}
In particular, $C$ is closed under coordinate-wise median.

Work in the direction space $V$ of $\aff C$ after translation, so that $C$ is full-dimensional there. A point or a one-dimensional closed convex set is already a box, as is $C=V$. Otherwise we show that every regular normal is a coordinate normal. The arguments below only use profiles containing a point of $C$, whose outputs lie in $C$; we do not assume that the full range of $H$ lies in $V$.

\step{Step 1: no regular normal has three nonzero components.}
Suppose a regular point $y$ has normal $\nu$ with at least three nonzero components, relabelled $1,2,3$. Choose $M>2$. Define directions $d^{(1)},d^{(2)},d^{(3)}$ by prescribing the products $\nu_kd_k^{(s)}$ in these coordinates to be $(-M,1,1),(1,-M,1),(1,1,-M)$,
and setting all other components of the directions to zero. Each has $\ip{\nu}{d^{(s)}}=2-M<0$, so by \cref{lem:regular-boundary}(ii), all $y+\eps d^{(s)}$ belong to $C$ for sufficiently small $\eps>0$. In each of the first three coordinates, two of the direction values equal $1/\nu_k$, so their coordinate median has value $1/\nu_k$. Hence the median displacement $q$ satisfies $\ip{\nu}{q}=3>0$. The median of the three points lies outside the supporting halfspace at $y$, contradicting closure of $C$ under coordinate median.

\step{Step 2: no regular normal has exactly two nonzero components.}
Suppose $y$ has a regular normal $\nu$ supported on $\{i,j\}$. We may reverse the signs of these coordinate axes so that $\nu_i,\nu_j>0$; sign reversals leave the coordinate-median operation unchanged. By \cref{lem:cylinder}, choose nearby distinct regular points $z$ with all other coordinates fixed, and normals $\nu_z\to\nu$.

Choose $\delta>0$ so small that $p=y-\delta\nu\in\inter C$, and fix $\eps>0$. Put $\eta_0=\min\{\nu_i,\nu_j\}>0$. Choose $z\ne y$ close enough that for $k=i,j$,
\[
 |z_k-y_k|<\frac{\eta_0}4\min\{\delta,\eps\},\qquad
 |(\nu_z)_k-\nu_k|<\frac{\eta_0}2.
\]
Define $a=y+\eps\nu,\qquad b=z+\eps\nu_z$.
The normal inequalities give $P_Ca=y$ and $P_Cb=z$. In coordinates $i,j$, the displayed bounds give $p_k<z_k<a_k$ and $p_k<y_k<b_k$.
All other coordinates of $p,y,z,a,b$ agree. Therefore clamping onto the coordinate boxes gives $P_{\boxB(p,z)}a=z$ and $P_{\boxB(p,y)}b=y$.
Because $z,y\in C$, restriction of these known minimizers to the sets in \eqref{eq:core-interval-box} yields $H(a,p,z)=z$ and $H(b,p,y)=y$.
Now let $D_{ap}$ be the full unilateral menu with reports $a,p$ fixed. It is a closed convex subset of $C$, by \cref{lem:core-closure}, and it contains $z$. Since $P_Cb=z$, restricting the minimizer to $D_{ap}$ gives $H(a,b,p)=z$. Interchanging the roles, the menu with $b,p$ fixed contains $y$, and $P_Ca=y$ gives $H(a,b,p)=y$. Symmetry identifies the two profiles. This contradicts $y\ne z$.

\step{Step 3: conclude that the core is a product of intervals.}
Every regular normal is now a signed coordinate vector. By \cref{lem:regular-boundary}(iii), $C$ is the intersection of its regular supporting halfspaces. Each imposes an upper or lower bound on one coordinate. Their intersection is therefore a Cartesian product of closed intervals in frame $\B$, allowing infinite endpoints. The intersection is nonempty because it is $C$. If $C$ is lower-dimensional in $E$, extend the frame from $V$ to $E$; the additional factors are singletons. The formula for $H$ on $C^3$ was already supplied by straightening.
\end{proof}

\begin{corollary}[Ternary majority rigidity]\label{cor:ternary-majority}
If $H:E^3\to E$ is symmetric, continuous, and strategyproof, and $H(a,b,b)=b$ for all $a,b\in E$, then $H=\CM_{\B}$ for some orthonormal basis $\B$.
\end{corollary}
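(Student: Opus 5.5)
The plan is to show that the whole space $E$ is a core of $H$ in the sense of \cref{def:core}, and then apply \cref{thm:core-box} with $C=E$. Take $C=E$, so that $P_C$ is the identity. The core identity \eqref{eq:core} then reads $H(a,c,c)=c$ for all $a,c\in E$, which is exactly the hypothesis $H(a,b,b)=b$. The set $E$ is nonempty, closed, and convex. Hence $E$ is a core of the symmetric, continuous, strategyproof rule $H$.

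Next apply \cref{thm:core-box}. It yields an orthonormal affine frame $\B$ of $\aff E=E$ in which $H(a,b,c)=\CM_{\B}(a,b,c)$ for all $a,b,c\in C=E$. Since $C$ is all of $E$, this identity holds on every profile, so $H=\CM_{\B}$ globally. The box conclusion is trivial here, because $E$ is itself a product of full lines.

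The only point needing care is the gap between an affine frame and an orthonormal basis of the vector space $E$. The scalar median commutes with translation, as noted in the preliminaries on affine subspaces. Coordinate medians in an affine frame with origin $o$ therefore agree with coordinate medians in the linear basis given by the same direction vectors. So $\B$ may be taken to be an orthonormal basis of $E$.

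I expect no real obstacle, since everything substantive is carried by \cref{thm:straightening} and the core-box theorem. The one thing to check is that no hypothesis of \cref{thm:core-box} beyond symmetry, continuity, and strategyproofness in each argument is silently used. In particular, unanimity of $H$ follows from the hypothesis with $a=b$, and \cref{lem:core-closure} is consistent with $C=E$.
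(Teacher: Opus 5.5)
Your proposal is correct and follows the paper's own proof: with $C=E$ the projection $P_C$ is the identity, so the core identity reduces to the hypothesis $H(a,c,c)=c$, and \cref{thm:core-box} then gives $H=\CM_{\B}$ on all of $E^3$. Your remark that the choice of origin in the affine frame is immaterial, because the scalar median commutes with translation, is a correct extra detail that the paper leaves implicit.
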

\begin{proof}
The whole space $E$ is a core. Apply \cref{thm:core-box}.
\end{proof}

\begin{corollary}[The absorber has box menus]\label{cor:absorber-boxes}
Every unilateral menu of the absorbing mechanism in \cref{prop:absorber} is an orthogonal box. At this point its frame may depend on the outside reports.
\end{corollary}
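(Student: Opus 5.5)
The plan is to chain \cref{prop:core-realization} directly into \cref{thm:core-box}. Let $g$ be the absorbing mechanism of \cref{prop:absorber}, built from an anonymous, continuous, strategyproof $f:E^n\to E$ with compact unilateral menus. By anonymity of $g$, every unilateral menu of $g$ has the form $C_w=T_w(E)=\{g(c,w):c\in E\}$ for some outside profile $w\in E^{n-3}$; other agents give the same sets after permuting reports. When $n=3$ we take $w=()$, so $H_w$ is $f$ itself and $T_w=g$.

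Fix such a $w$ and consider the ternary restriction $H_w(a,b,c)=f(a,b,c,w)$. It is symmetric in its three arguments by anonymity of $f$. It is continuous, and strategyproof in each argument, because each of its one-argument responses is a unilateral response of $f$ with the remaining reports frozen. By \cref{prop:core-realization}, $C_w$ is nonempty, closed and convex, and it satisfies $H_w(a,P_{C_w}c,c)=P_{C_w}c$ for all $a,c$. This is exactly the core identity \eqref{eq:core}, so $C_w$ is a core of $H_w$ in the sense of \cref{def:core}.

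Applying \cref{thm:core-box} to the pair $(H_w,C_w)$ then shows that $C_w$ is an orthogonal box in some orthonormal affine frame $\B_w$ of $\aff C_w$. That frame extends to $E$ by adding singleton factors, as explained in the preliminaries. The frame is produced separately for each $w$, which is why the corollary only claims dependence on the outside reports.

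There is no real obstacle here. The one point worth checking is that \cref{thm:core-box} needs only symmetry, continuity, and strategyproofness of $H$; it does not need unanimity on all of $E$, which $H_w$ may lack. The theorem is stated exactly in that generality, so the corollary follows.
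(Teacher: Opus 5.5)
Your proposal is correct and follows the paper's proof exactly: it applies \cref{thm:core-box} to the ternary restriction $H_w$ and the core $C_w$ supplied by \cref{prop:core-realization}. Your additional checks — the reduction to outside profiles $w\in E^{n-3}$ via anonymity, the case $n=3$ with $w=()$, and the observation that the core-box theorem does not need unanimity — are accurate.
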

\begin{proof}
Apply \cref{thm:core-box} to the ternary restriction and core in \cref{prop:core-realization}.
\end{proof}

\section{Reconstructing a mechanism from its absorber}\label{sec:reconstruction}
The absorbing identity controls profiles containing one special report. To recover the mechanism at arbitrary profiles, we first show what can be learned by projecting an outcome onto a core box. We then use this clipping identity to lift fixed coordinate ranks.

\subsection{Clipping through a full-dimensional core}
Throughout this subsection, $H$ has a full-dimensional core
$C=\prod_{j=1}^d[\ell_j,u_j]$ ($\ell_j<u_j$),
in a specified orthonormal frame $\B$, and $H|_{C^3}=\CM_{\B}$. Endpoints may be infinite, but all reports and outcomes are finite. Put $\pi=P_C$. The hypothesis about the operation in the specified frame is explicit: a box with full-line factors need not determine its frame uniquely.

\begin{lemma}[A menu with one core report]\label{lem:one-core-menu}
For $a\in E$ and $p\in C$, the menu with $a,p$ fixed is $D(a,p)=C\cap\boxB(a,p)$.
Consequently,
\begin{equation}\label{eq:one-core-formula}
 H(a,b,p)=\CM_{\B}(\pi a,\pi b,p)
 \qquad(a,b\in E,\ p\in C).
\end{equation}
\end{lemma}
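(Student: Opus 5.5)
The plan is to identify the menu $D(a,p)=\{H(a,x,p):x\in E\}$ by proving two inclusions, and then to read off \eqref{eq:one-core-formula} from the projection principle. By \cref{lem:core-closure}, $D(a,p)\subseteq C$, since one report, $p$, lies in $C$. By \cref{lem:projection}, $D(a,p)$ is closed and convex, and $H(a,b,p)=P_{D(a,p)}b$. Once we know $D(a,p)=C\cap\boxB(a,p)$, which is itself a box in frame $\B$ (a product of the intervals $[\ell_j,u_j]\cap[\min(a_j,p_j),\max(a_j,p_j)]$, nonempty because it contains $\pi a$ coordinatewise), the projection is coordinatewise clamping. A direct scalar check then gives $P_{[\ell,u]\cap[a_j\wedge p_j,a_j\vee p_j]}b_j=\med(P_{[\ell,u]}a_j,P_{[\ell,u]}b_j,p_j)$ when $p_j\in[\ell,u]$. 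That check yields \eqref{eq:one-core-formula}.

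\textbf{Reduction to $a\in C$.} First I show that $H(a,x,p)=H(\pi a,x,p)$ for every $x$; then $D(a,p)=D(\pi a,p)$. Note that $C\cap\boxB(a,p)=C\cap\boxB(\pi a,p)$ coordinatewise, because $p_j\in[\ell_j,u_j]$. To get the equality of outcomes, fix $x,p$ and regard $a$ as the varying report. The resulting response is projection onto $D(x,p)\subseteq C$. Since $D(x,p)\subseteq C$ and $P_Ca=\pi a$, the minimizer we want is $P_{D(x,p)}a$. This does not immediately equal $P_{D(x,p)}\pi a$ for a general convex $D\subseteq C$. For a box $D\subseteq C$ in the same frame it does hold, because clamping composes. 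So the argument must be organized so that boxness of $D(x,p)$ is known first.

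\textbf{Order of the argument.} I would proceed in three steps. (1) For $a,p\in C$, the menu $D(a,p)$ equals $I_{ap}=C\cap\boxB(a,p)$ by \eqref{eq:core-interval-box}, which is available because $H|_{C^3}=\CM_\B$ and $C$ is a box. (2) For general $a$ and $p\in C$, the response $x\mapsto H(a,x,p)$ equals $H(x,a,p)$ by symmetry. With $x\in C$, this is the projection of $a$ onto $D(x,p)=C\cap\boxB(x,p)$, a $\B$-box inside $C$. Clamping gives $P_{D(x,p)}a=P_{D(x,p)}\pi a=\CM_\B(\pi a,x,p)$. Hence the set $\{H(a,x,p):x\in C\}$ equals $\{\CM_\B(\pi a,x,p):x\in C\}=C\cap\boxB(\pi a,p)$. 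This shows $C\cap\boxB(a,p)\subseteq D(a,p)$. (3) For the reverse inclusion, take $y=H(a,x,p)$ with arbitrary $x$. Outcome replacement (\cref{cor:replacement}) gives $y=H(a,y,p)$ with $y\in C$, and step (2) computes this as $\CM_\B(\pi a,y,p)$. So each coordinate $y_j$ lies between $(\pi a)_j$ and $p_j$, that is, $y\in C\cap\boxB(\pi a,p)$.

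\textbf{Main obstacle.} The main obstacle is step (2): justifying that the projection of the outside report $a$ onto $D(x,p)$ equals the projection of $\pi a$. This relies on $D(x,p)$ being a box in the same frame $\B$ and contained in $C$, which is exactly what \eqref{eq:core-interval-box} supplies when $x\in C$. Coordinate clamps onto nested intervals then compose correctly.
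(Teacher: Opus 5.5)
Your argument is correct and is essentially the paper's. Both identify the menu through the fixed-point characterization (outcome replacement plus core closure). Both use symmetry to rewrite $H(a,y,p)$ for $y\in C$ as the projection of $a$ onto $I_{yp}=\boxB(y,p)$ from \eqref{eq:core-interval-box}, and then read off the corner condition and the final formula by coordinatewise clamping.

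The one difference is your detour through $\pi a$ and the composition of nested clamps, including the preliminary ``reduction to $a\in C$'' paragraph. It is harmless but unnecessary: clamping $a_j$ to the interval between $p_j$ and $y_j$ returns $y_j$ exactly when $y_j$ lies between $a_j$ and $p_j$, and this gives $D(a,p)=C\cap\boxB(a,p)$ directly.
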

\begin{proof}
The menu lies in $C$ by \cref{lem:core-closure}. A point $y\in C$ belongs to the menu exactly when it is fixed by its projection response, that is, when $H(a,p,y)=y$. Since $p,y\in C$ and $C$ is a box, \eqref{eq:core-interval-box} gives $I_{py}=\boxB(p,y)$. Thus $H(a,p,y)=P_{\boxB(p,y)}a$.
Coordinate-wise clamping equals the corner $y$ precisely when each $y_j$ lies between $a_j$ and $p_j$. This proves the menu formula, including coordinates where two values coincide.

Now project the remaining report $b$ onto $C\cap\boxB(a,p)$. In one coordinate, the intersection of $[\ell,u]$ with the interval between $a$ and $p\in[\ell,u]$ is the interval between $P_{[\ell,u]}a$ and $p$. Clamping $b$ to this interval is
$\med(P_{[\ell,u]}a,P_{[\ell,u]}b,p)$. Taking the product over coordinates proves \eqref{eq:one-core-formula}.
\end{proof}

\begin{proposition}[Clipping identity]\label{prop:clipping}
For arbitrary $a,b,c\in E$,
\begin{equation}\label{eq:clipping}
 \pi H(a,b,c)=\CM_{\B}(\pi a,\pi b,\pi c).
\end{equation}
\end{proposition}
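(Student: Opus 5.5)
The plan is to replace the three reports by their projections one at a time, keeping track of the projected outcome. Precisely, I would prove the one-report identity
\[
 \pi H(a,b,c)=\pi H(a,b,\pi c)\qquad(a,b,c\in E).
\]
Applying it successively to $c$, $b$ and $a$, with symmetry of $H$ moving each report into the third slot, gives $\pi H(a,b,c)=\pi H(\pi a,\pi b,\pi c)$. By \cref{lem:core-closure}, $H(\pi a,\pi b,\pi c)\in C$, so the outer $\pi$ is harmless, and the hypothesis $H|_{C^3}=\CM_{\B}$ gives \eqref{eq:clipping}.

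For the one-report identity, fix $a,b$ and let $D=D(a,b)$ be the unilateral menu of the third agent. By \cref{lem:projection}, $H(a,b,\cdot)=P_D$. The first step is to identify $D\cap C$. A point $z\in C$ lies in $D$ iff $H(a,b,z)=z$. By \eqref{eq:one-core-formula} this means $\CM_{\B}(\pi a,\pi b,z)=z$, i.e.\ $z\in K:=\boxB(\pi a,\pi b)$. Hence $D\cap C=K\subseteq C$. Moreover, for $x\in C$ the response is $P_Dx=\CM_{\B}(\pi a,\pi b,x)=P_Kx$. Thus $m:=H(a,b,\pi c)=P_K\pi c$, and the goal becomes $\pi P_Dc=P_K\pi c$.

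To prove this I would work coordinatewise in $\B$, where both $\pi$ and $P_K$ are clamps. Put $y=P_Dc$ and $p=\pi y$. By outcome replacement \eqref{eq:replacement}, $H(a,b,y)=y$, and the third report can be moved radially \eqref{eq:radial-one}. I would then test $y$ against the core report $p$ through a second agent's menu. By \cref{lem:one-core-menu}, the menu of agent $2$ with $a$ and $p$ fixed is $C\cap\boxB(a,p)$, and the menu with $a$ and $y$ fixed contains $y$. Comparing these two menus via \eqref{eq:restrict-minimizer}, as in Step~2 of the proof of \cref{thm:core-box}, should show $p\in K$, so that $p\in D$. Once $p\in D$, the normal inequality \eqref{eq:projection-normal} at $y=P_Dc$ tested against $p$ and $m$, together with \eqref{eq:projection-normal} at $m=P_K\pi c$ tested against $p\in K$, gives the required comparison. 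Coordinatewise, $c-\pi c$ points outward from $C$ exactly in the clamped coordinates, and $y-p$ points outward in the same sense. I then expect a sum-of-squares estimate of the form
\[
 \norm{p-m}^2\le\ip{c-\pi c}{\,p-m\,}-\ip{y-p}{\,p-m\,}\le 0 ,
\]
using firm nonexpansiveness \eqref{eq:firm} for $P_D$ between $c$ and $\pi c$.

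The main obstacle is this sign bookkeeping. One has to show that in every coordinate where $c_j$ is clamped, the corresponding output coordinate cannot be pulled inward past $m_j$. Equivalently, the outward components of $c-\pi c$ and of $y-p$ must pair nonpositively with $p-m$. If the direct inequality does not close, my fallback is to apply \cref{lem:clamp-saturation} to the firmly nonexpansive map $P_D$ along the segment from $\pi c$ to $c$, one coordinate at a time. Then $c$ moves only in one clamped coordinate $j$, with the other reports fixed, and I would aim to show that the $j$th output coordinate of $\pi\circ P_D$ is constant in the outer region $c_j\ge u_j$. Saturation would then freeze the other coordinates, and iterating over the clamped coordinates would yield $\pi P_Dc=\pi P_D\pi c$.
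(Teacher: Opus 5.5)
Your reduction to the one-report identity $\pi H(a,b,c)=\pi H(a,b,\pi c)$ is valid. So is your identification of $D\cap C$ with $K$ and of $P_D$ with $P_K$ on $C$; this is the paper's relation \eqref{eq:core-test-projections}. The gap is in the closing step. Everything you use there is a property of the single convex set $D$, given only $D\cap C=K$ and $P_D|_C=P_K|_C$: the normal inequality at $y=P_Dc$, the normal inequality at $m=P_K\pi c$, and firm nonexpansiveness of $P_D$ between $c$ and $\pi c$. These data do not imply $\pi P_Dc=P_K\pi c$. In $\R^2$ take $C=\R\times(-\infty,0]$, $\pi a=(0,-1)$ and $\pi b=(1,0)$, so $K=[0,1]\times[-1,0]$, and
\[
 D=\{z:\ 0\le z_1\le 1,\ -1\le z_2\le z_1\}.
\]
Then $D\cap C=K$ and $P_Dx=P_Kx$ for every $x\in C$. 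For $c=(1/2,2)$, however, you get $y=P_Dc=(1,1)$, $p=\pi y=(1,0)\in K$, and $m=P_K\pi c=(1/2,0)\neq p$. Your displayed estimate would then read $1/4\le 0$. The fallback fails for the same reason. \Cref{lem:clamp-saturation} needs the $j$th coordinate of $P_D$ itself, not of $\pi\circ P_D$, to be a clamp, and you do not know this. In this example, raising $c_2$ from $0$ to $2$ with $c_1=1/2$ moves the first coordinate of $\pi P_Dc$ from $1/2$ to $1$. Your intermediate claim $p\in K$ is true (for instance, test the normal inequality at $P_Dp=P_Kp$ against $y\in D$), but it is not enough.

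What is missing is information from the other two pairs of reports. The paper first confines each menu $D_{ab}$ to slabs cut out by the faces of $\boxB(\pi a,\pi b)$ that lie strictly inside the core. If the upper $j$-endpoint $v_j$ of that box is below $u_j$, testing $P_{D_{ab}}(p+\eps e_j)=p$ at a face point $p$ gives $z_j\le v_j$ on all of $D_{ab}$; the lower bound is symmetric. It then uses that $H(a,b,c)$ lies in all three menus $D_{ab},D_{bc},D_{ca}$. Sort the clipped $j$-values as $\alpha\le\beta\le\gamma$. The menu of the $(\alpha,\beta)$-pair bounds the output above by $\beta$ when $\beta<u_j$. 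The menu of the $(\beta,\gamma)$-pair bounds it below by $\beta$ when $\beta>\ell_j$. The endpoint cases are absorbed by clipping. In the example, a genuine $H$ would have its $(a,c)$-menu inside $\{z_1\le 1/2\}$, which is exactly what excludes $y=(1,1)$; your argument never consults that menu. Once you bring in $D(a,c)$ and $D(b,c)$ you are doing the paper's proof, and the successive replacement of reports is no longer needed.
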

\begin{proof}
Let $D_{ab}$ be the full unilateral menu with $a,b$ fixed, and let
$E_{ab}=\boxB(\pi a,\pi b)\subseteq C$. For a test report $p\in C$, \cref{lem:one-core-menu} gives
\begin{equation}\label{eq:core-test-projections}
 P_{D_{ab}}p=P_{E_{ab}}p.
\end{equation}
Although the two menus may differ outside the core, equality on these test points imposes bounds on the entire menu $D_{ab}$.

Fix coordinate $j$, and write $v_j$ for the upper endpoint of the $j$th factor of $E_{ab}$. If $v_j<u_j$, choose $p$ on that face of $E_{ab}$, keeping its other coordinates inside $E_{ab}$, and choose $\eps>0$ with $p+\eps e_j\in C$. Equation~\eqref{eq:core-test-projections} gives $P_{D_{ab}}(p+\eps e_j)=p$. The projection normal inequality then implies
$\ip{\eps e_j}{z-p}\le0\quad(z\in D_{ab})$, hence $z_j\le v_j$.
The symmetric argument at a lower face shows that if its lower endpoint is greater than $\ell_j$, every point of $D_{ab}$ is bounded below by that endpoint.

Order the three clipped input values in coordinate $j$ as $\alpha\le\beta\le\gamma$. Use the pair of reports with clipped values $\alpha,\beta$: if $\beta<u_j$, the upper-face argument bounds the output by $\beta$. Use the pair with values $\beta,\gamma$: if $\beta>\ell_j$, the lower-face argument bounds it below by $\beta$. If $\beta$ is strictly inside the core interval, these two bounds give $H(a,b,c)_j=\beta$. If $\beta=\ell_j$, the upper bound is available because $\ell_j<u_j$; it gives $H(a,b,c)_j\le\ell_j$, whose clipped value is $\ell_j$. The case $\beta=u_j$ is symmetric. Infinite endpoints cannot equal the finite value $\beta$, so no additional endpoint case is needed. This proves \eqref{eq:clipping} in every coordinate.
\end{proof}

\subsection{Lifting ranks in the same frame}
\begin{proposition}[Same-frame rank lifting]\label{prop:rank-lifting}
Let $N\ge5$ be odd, and let $f:E^N\to E$ be anonymous, continuous, and strategyproof. Suppose $g:E^{N-2}\to E$ is an absorber satisfying \eqref{eq:absorber}, and in one fixed frame $\B$,
\[
 g_j(z)=z_{(k_j),j},\qquad 1\le k_j\le N-2.
\]
Then in that same frame, $f_j(x)=x_{(k_j+1),j}$
for every profile $x$ and every coordinate $j$.
\end{proposition}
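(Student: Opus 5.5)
The plan is to recover $f$ from its three-report restrictions, using the core realization and the clipping identity. Fix a profile $x\in E^N$ and split it as $x=(a,b,c,w)$ with $w\in E^{N-3}$. The choice of which three reports to free will be made per coordinate. Put $H_w(a,b,c)=f(a,b,c,w)$. By \cref{prop:core-realization}, $C_w=\{g(c,w):c\in E\}$ is a core of $H_w$.

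Because $g$ is the fixed-rank rule in $\B$, this menu is explicit. In coordinate $j$, $g_j(c,w)$ is the $k_j$-th order statistic of $(c_j,w_{1,j},\dots,w_{N-3,j})$. As $c_j$ varies, this is the clamp of $c_j$ to $[w_{(k_j-1),j},\,w_{(k_j),j}]$, with the conventions $w_{(0),j}=-\infty$ and $w_{(N-2),j}=+\infty$. Hence $C_w=\prod_j[\ell_j,u_j]$ is a $\B$-box, and on $C_w^3$ the rule $H_w$ equals $\CM_{\B}$ by \cref{thm:core-box}. Here I need the frame produced by \cref{thm:core-box} to agree with $\B$ whenever it is not determined by the box. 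I would check this directly: for $p\in C_w$ we have $H_w(p,p,c)=P_{C_w}c$ by \cref{lem:core-closure}, and the fixed-point intervals computed from the known $g$ identify the median coordinatewise in $\B$. This frame-matching is the first delicate point.

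When $C_w$ is full-dimensional, \cref{prop:clipping} gives
\[
P_{[\ell_j,u_j]}f_j(x)=\med\bigl(P_{[\ell_j,u_j]}a_j,P_{[\ell_j,u_j]}b_j,P_{[\ell_j,u_j]}c_j\bigr).
\]
To extract $f_j(x)$, I would choose the three free reports so that the $(k_j+1)$-th order statistic of $x_{\cdot,j}$ lies strictly inside $(\ell_j,u_j)$. Concretely, take $a,b,c$ to be the reports of ranks $k_j,\,k_j+1,\,k_j+2$ in coordinate $j$. Then $w$ contributes exactly $k_j-1$ values below and $N-k_j-2$ values above, so $\ell_j=x_{(k_j-1),j}$ and $u_j=x_{(k_j+3),j}$. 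Since $N\ge5$ and $1\le k_j\le N-2$, both ranks $k_j$ and $k_j+2$ are at most $N$, so this choice is always available.

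With this choice, the clipped median equals $x_{(k_j+1),j}$. If that value lies strictly between $\ell_j$ and $u_j$, clipping loses nothing, so $f_j(x)=x_{(k_j+1),j}$. The endpoint and tie cases, together with lower-dimensional cores caused by coinciding $w$-values, I would handle by density: for profiles with all coordinate values distinct the box is full-dimensional and the interior condition holds. Both sides are continuous, so the identity extends to every profile. The choice of $a,b,c$ depends on $j$, but anonymity makes the conclusion independent of labeling, and each coordinate is treated separately.

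The main obstacle I expect is justifying use of \cref{prop:clipping}, whose hypothesis is stated in a specified frame. The point is to confirm that $H_w|_{C_w^3}=\CM_{\B}$ in the given frame $\B$, rather than in some other frame supplied by \cref{thm:core-box}. This matters when some factors of $C_w$ are full lines, for example when $k_j=1$ or $k_j=N-2$ and the interval is unbounded on one side. For generic $w$ all factors are proper intervals with at least one finite end, so I would first prove the result for generic profiles and then finish by continuity.
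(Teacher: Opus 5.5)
Your proposal is essentially the paper's proof: generic outside reports, the explicit clamp box $C_w=\prod_j[w_{(k_j-1),j},w_{(k_j),j}]$ as a core, freeing the agents of $j$-ranks $k_j,k_j+1,k_j+2$ so that the clipped median $x_{(k_j+1),j}$ is strictly interior, the clipping identity, and density plus continuity.

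The only point to tighten is the frame matching. Your first suggestion does not settle it: by \cref{lem:core-closure} one has $H_w(p,p,c)=p$ and $H_w(c,c,p)=P_{C_w}c$, and neither identity says anything about $H_w$ on $C_w^3$. The correct justification is the one your genericity remark points to. Note that full-line factors never occur when $N\ge5$; the real danger is degenerate factors from ties. For generic $w$ each factor has positive length and a finite endpoint, so $C_w$ has a $(d-1)$-dimensional face normal to every $\B$-axis. Its box frame is therefore unique up to signs and permutations, and these do not change $\CM_{\B}$.
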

\begin{proof}
Put $M=N-2$. First fix $M-1=N-3$ outside reports $z$ whose coordinates are pairwise distinct in each coordinate of $\B$. The ternary restriction $H(a,b,c)=f(a,b,c,z)$ has absorbing response $c\mapsto g(c,z)$. Its $j$th coordinate is clamping to
\begin{equation}\label{eq:neighbor-interval}
 [z_{(k_j-1),j},z_{(k_j),j}],
 \qquad z_{(0),j}=-\infty,\quad z_{(M),j}=+\infty.
\end{equation}
Indeed, inserting one value into $M-1$ sorted values makes the $k_j$th order statistic equal to that inserted value exactly between the two neighboring outside values, and otherwise equal to the nearer endpoint. Independent choice of the coordinates of $c$ shows that the full range is the product of these intervals. Denote this range by $C_z$; it is a full-dimensional core box for $H$.

We must justify applying \cref{prop:clipping} in frame $\B$, rather than merely in some frame supplied by \cref{thm:core-box}. Every interval in \eqref{eq:neighbor-interval} has positive length and at least one finite endpoint. Consequently the box has a boundary face of dimension $d-1$ perpendicular to each coordinate axis. These face-normal directions are intrinsic to the box: any other orthogonal-box representation must use each of them as a coordinate axis. Thus the core theorem's frame agrees with $\B$ up to signs and permutation, neither of which changes the coordinate-median operation. The clipping identity is therefore valid in the prescribed frame.

Now take a full profile $x$ whose coordinates are pairwise distinct in $\B$, and fix $j$. Leave variable the three agents with $j$-ranks
\[
 k_j,\qquad k_j+1,\qquad k_j+2,
\]
and use all remaining reports as the outside profile $z$. The lower endpoint in \eqref{eq:neighbor-interval} is the full-profile value of rank $k_j-1$, when present; the upper endpoint is the value of rank $k_j+3$, when present. At an extreme, the missing endpoint is infinite. All three selected values lie strictly between these endpoints. Their median is $x_{(k_j+1),j}$, also strictly inside the core interval.

Apply \cref{prop:clipping} to this ternary restriction. It yields $(P_{C_z}f(x))_j=x_{(k_j+1),j}$.
Clamping a finite scalar to an interval can equal a strictly interior point only if the original scalar equals that point. Hence $f_j(x)=x_{(k_j+1),j}$. Repeat for every $j$. Profiles with pairwise distinct coordinates are dense, and both $f$ and the coordinate order-statistic rule are continuous, so the formula extends to profiles with ties.
\end{proof}

\begin{remark}[Why three agents are a separate base case]\label{rem:three-agent-frame}
For $N=3$, the absorber has one report, so unanimity makes it the identity. The identity admits every orthonormal frame and cannot specify which frame the original ternary rule uses. The appropriate conclusion is existence of some frame, from \cref{cor:ternary-majority}. For $N\ge5$, the generic boxes above have a finite face in every coordinate direction; that is what makes a same-frame conclusion possible.
\end{remark}

\section{Global box-menu structure and compact-menu rigidity}\label{sec:classification}
The previous section reconstructs a mechanism once its absorber is known. We now classify mechanisms whose menus are orthogonal boxes, allowing different menus to have different initial frames. The proof has two induction directions: bounded menus allow the population to decrease by two, whereas a common unbounded direction allows the spatial dimension to decrease by one.

\subsection{Menu stability and common recession directions}
For a nonempty closed convex set $C$, its \emph{recession cone} is $\rec C=\{v:c+tv\in C\text{ for all }c\in C,\ t\ge0\}$.
It records the directions in which the set extends indefinitely. For nonempty sets $C,D$, their Hausdorff distance, possibly infinite, is
\[
 d_H(C,D)=\max\left\{\sup_{c\in C}\dist(c,D),
                           \sup_{d\in D}\dist(d,C)\right\}.
\]
The following quantitative stability estimate also explains why a single recession cone is available. A common asymptotic cone already appears in the earlier two-agent option-set analysis presented by Bordes, Laffond, and Le Breton~\cite{BordesLaffondLeBreton2011}; we include the proof needed here.

\begin{lemma}[Common recession cone]\label{lem:common-cone}
Let $F:E^m\to E$ be normalized and admissible. Fix an agent and let $C(z)$ be its unilateral menu at outside profile $z$. Then
$d_H(C(z),C(z'))\le d_1(z,z')$.
All unilateral menus, for all agents and outside profiles, have the same recession cone $K=\range\bigl(a\mapsto F(a,0^{m-1})\bigr)$.
\end{lemma}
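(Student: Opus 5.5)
The plan is to prove the Hausdorff estimate first, directly from the global Lipschitz bound \eqref{eq:global-lipschitz}, and then read off the recession cone from it together with normalization.

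\textbf{Step 1 (Hausdorff estimate).} Take $c\in C(z)$, say $c=F(a,z)$. The same report $a$ at outside profile $z'$ gives $F(a,z')\in C(z')$. By \cref{cor:replacement},
\[
\norm{F(a,z)-F(a,z')}\le d_1(z,z').
\]
Hence $\dist(c,C(z'))\le d_1(z,z')$. Exchanging the roles of $z$ and $z'$ gives the other half of the definition of $d_H$, which proves the estimate. No projection structure is needed for this step.

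\textbf{Step 2 (sets at finite Hausdorff distance share recession cones).} Let $C,D$ be nonempty closed convex sets with $d_H(C,D)\le r<\infty$, and let $v\in\rec C$. Fix $d\in D$ and pick $c\in C$ with $\norm{c-d}\le r$. For each $t\ge0$ the point $c+tv$ lies in $C$, so there is $d_t\in D$ with $\norm{d_t-(c+tv)}\le r$. Then $(d_t-d)/t\to v$ as $t\to\infty$, because the error is at most $2r/t$. By the standard characterization of the recession cone of a closed convex set as the set of limits of $(d_t-d)/t$ with $d_t\in D$ and $t\to\infty$, we get $v\in\rec D$. Alternatively, convexity gives $d+s(d_t-d)/t\in D$ for $t\ge s$, and closedness lets us pass to the limit. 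By symmetry, $\rec C=\rec D$.

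Since $d_1(z,z')<\infty$ for any two outside profiles, all menus of a fixed agent have the same recession cone. By anonymity, the menus of different agents at the same outside profile coincide as sets, so the cone does not depend on the agent either. Each menu is closed and convex by \cref{lem:projection}, so Step 2 applies.

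\textbf{Step 3 (identifying $K$).} Take the menu at the zero outside profile, $K_0=\{F(a,0^{m-1}):a\in E\}$. It is closed and convex by \cref{lem:projection}. It is a cone by positive homogeneity: $F(ta,0^{m-1})=tF(a,0^{m-1})$ for $t\ge0$. It contains $0$ by unanimity. A closed convex cone equals its own recession cone:
\begin{itemize}
\item if $v\in K_0$ and $c\in K_0$, then $c+tv\in K_0$, because a convex cone is closed under addition;
\item conversely, if $v\in\rec K_0$, then $0+tv\in K_0$, and $t=1$ gives $v\in K_0$.
\end{itemize}
Therefore the common cone equals $K=\range\bigl(a\mapsto F(a,0^{m-1})\bigr)$.

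\textbf{Main obstacle.} The only point needing care is Step 2 with unbounded sets and possibly empty interiors. The convex-combination argument handles both cases without extra hypotheses, since it uses only convexity, closedness, and the finite Hausdorff bound.
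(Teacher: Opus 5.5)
Your proof is correct and follows the paper's argument. You get the Hausdorff bound from the global Lipschitz estimate \eqref{eq:global-lipschitz}; you compare $F(a,z)$ with $F(a,z')$ directly, while the paper phrases the same step through projections and their fixed points. You then show by a convex-combination limit that closed convex sets at finite Hausdorff distance share a recession cone, and identify that cone with the closed convex cone $C(0)$ using homogeneity, unanimity, and anonymity.
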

\begin{proof}
Each response is projection onto its menu. By \eqref{eq:global-lipschitz}, for $a\in E$, $\norm{P_{C(z)}a-P_{C(z')}a}\le d_1(z,z')$.
For $a\in C(z)$ the first projection is $a$, so its distance to $C(z')$ is bounded by the right side. Interchanging $z,z'$ gives the Hausdorff estimate.

We check that nonempty closed convex sets at finite Hausdorff distance have the same recession cone. Let $v\in\rec C$, choose $c_0\in C$, and fix arbitrary $d_0\in D$. For large $T$, choose $d_T\in D$ within a uniformly bounded distance of $c_0+Tv$. For fixed $t>0$ and $T>t$, convexity gives $(1-t/T)d_0+(t/T)d_T\in D$.
As $T\to\infty$, these points converge to $d_0+tv$. Closedness gives $d_0+tv\in D$. Since $d_0,t$ were arbitrary, $v\in\rec D$. The reverse inclusion is symmetric.

At outside profile $0$, homogeneity and unanimity make $C(0)$ a closed convex cone containing $0$. Such a cone equals its own recession cone. Thus the common cone is $K=C(0)$. Anonymity identifies this cone across agents as well.
\end{proof}

\subsection{The scalar endpoint}
The next lemma is the normalized order-statistic specialization of the classical generalized-median theory~\cite{moulin1980strategy,border1983straightforward}. We give a direct proof for the precise fixed Euclidean preference family used here. Neither an odd population nor an approximation objective is needed.

\begin{lemma}[Normalized scalar rules]\label{lem:scalar}
Let $s:\R^m\to\R$ be normalized and admissible. There is a fixed $k\in\{1,\ldots,m\}$ such that $s(t)=t_{(k)}$ for every $t\in\R^m$.
\end{lemma}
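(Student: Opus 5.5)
We prove \cref{lem:scalar} by working in $E=\R$ and exploiting the fact that a normalized admissible scalar rule is pinned down by finitely many values. The plan has three steps: show that $s$ is monotone and a projection in each coordinate, identify the value at each $0$--$1$ profile, and then use normalization together with the projection structure to recover $s$ everywhere.

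\textbf{Step 1: coordinatewise structure.} By \cref{lem:projection}, each unilateral response $t_i\mapsto s(t_i,t_{-i})$ is projection onto a closed interval, that is, a clamp. In particular, $s$ is nondecreasing in each argument. By \eqref{eq:global-lipschitz}, $s$ is also $1$-Lipschitz in $d_1$.

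\textbf{Step 2: values at $0$--$1$ profiles.} Consider profiles $t\in\{0,1\}^m$. Using \eqref{eq:radial-all} with $y=s(t)$, we show that $s(t)\in\{0,1\}$.
\begin{itemize}
\item Suppose $0<y<1$. Scaling each report away from $y$ by a factor $\lambda\ge1$ keeps the outcome at $y$. The rescaled profile is $y+\lambda(t-y)$.
\item Translation equivariance and homogeneity then give $s(y+\lambda(t-y))=y+\lambda(s(t)-y)=y$, which is consistent. So this identity alone is not enough.
\item Instead, use the clamp structure. Move one report of value $1$ down to $0$. The response is a clamp on an interval that contains $y$ and is traced as that report moves over $[0,1]$.
\item A cleaner route is homogeneity plus translation applied to the two-valued profile. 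Since $s(\lambda t)=\lambda s(t)$ and $s(t+v)=s(t)+v$, the value $\theta(S)=s(1_S,0_{S^c})\in[0,1]$ is well defined for each coalition $S$.
\item By \eqref{eq:radial-all}, with $t_i=y+\lambda(t_i-y)$, replacing the $0$'s by $y-\lambda y$ and the $1$'s by $y+\lambda(1-y)$ leaves the outcome at $y$. This is simply the affine image of the profile, consistent with normalization.
\end{itemize}
So the obstacle to proving $\theta(S)\in\{0,1\}$ must come from strategyproofness combined with the fact that profiles with more than two values are arbitrary.

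\textbf{Step 3: reduction via the clamp, and the expected obstacle.} Fix a profile $t$ with distinct entries, ordered $t_{(1)}<\dots<t_{(m)}$. By \eqref{eq:radial-all}, every report can be pushed radially away from $y=s(t)$ without changing the outcome. Reports above $y$ can be sent to $y+R$ and reports below $y$ to $y-R$. Any report equal to $y$ stays at $y$, and by continuity and perturbation we may assume at most one such report. Normalization then makes $s(t)$ a function of the pattern of which agents lie above or below $y$.

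By anonymity, this reduces to the two-valued numbers $\theta_k=s(1^{k},0^{m-k})$. These are nondecreasing in $k$ by monotonicity, and $\theta_0=0$, $\theta_m=1$ by unanimity. Pushing reports radially shows that if $y=s(t)$ lies strictly between the values of a two-valued profile, then the outcome is forced to stay at $y$ as the gap scales. Now take a profile with $k$ ones and $m-k$ zeros whose outcome $\theta_k$ lies in $(0,1)$. Move one $0$-report to $\theta_k$: replacement \eqref{eq:replacement} keeps the outcome $\theta_k$. Then apply translation and homogeneity to the three-valued profile $(1^k,\theta_k,0^{m-k-1})$, and compare with monotonicity in the moved agent's report. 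Its response clamp must be constant equal to $\theta_k$ on $[0,1]$, while the agent reporting exactly $\theta_k$ cannot move the outcome.

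Iterating over the agents should force $\theta_k\in\{0,1\}$; this is the step I expect to be the main obstacle. Given that, let $k$ be the threshold at which $\theta$ jumps from $0$ to $1$, so that $\theta_j=1$ exactly when at least $m-k+1$ agents report $1$. Then for a general profile, the outcome $y$ must coincide with some report: otherwise the push-out pattern has a strict gap, and the two-valued value, being $0$ or $1$, would place the outcome at $y\pm R$ rather than at $y$. Consistency of the pattern of agents above and below $y$ with the threshold identifies $y=t_{(k)}$. Finally, continuity extends the formula from profiles with distinct entries to all profiles.
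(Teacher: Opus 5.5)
\textbf{Gap: the step $\theta_k\in\{0,1\}$ is never proved.} This is the step you flag yourself. You do not show that $\theta_k=s(1^k,0^{m-k})\in\{0,1\}$. The threshold $k$, the claim that the outcome equals some report, and the identification $y=t_{(k)}$ all rest on it.

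The route you sketch does not supply this step. In the profile $(1^k,\theta_k,0^{m-k-1})$, the moved agent's response sends both $0$ and $\theta_k$ to $\theta_k$. It is therefore the clamp onto some interval $[\theta_k,u]$. Such a clamp is constant only on $(-\infty,\theta_k]$. On $[\theta_k,1]$ it is $c\mapsto\min\{c,u\}$, and at $c=1$ it reaches $\theta_{k+1}$. So neither ``the clamp is constant equal to $\theta_k$ on $[0,1]$'' nor ``the agent reporting $\theta_k$ cannot move the outcome'' follows, and no contradiction appears.

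\textbf{The paper's fix.} Apply \cref{lem:coalition} to the diagonal map $a\mapsto s(a^r,0^{m-r})$ for $1\le r\le m-1$. This map is projection onto a closed convex set $K_r\subseteq\R$. Homogeneity makes $K_r$ a cone and unanimity puts $0$ in it. Hence $K_r$ is one of $\{0\}$, $[0,\infty)$, $(-\infty,0]$, $\R$, and so $s(1^r,0^{m-r})=P_{K_r}(1)\in\{0,1\}$.

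\textbf{A fix using your own tool.} You correctly noticed that a \emph{common} radial factor is consistent with normalization. However, \eqref{eq:radial-all} allows a \emph{different} factor for each agent. Suppose $y=\theta_k\in(0,1)$. Keep the $0$-reports fixed and push the $1$-reports, which lie above $y$, out to $2$. Then $s(2^k,0^{m-k})=y$, while homogeneity gives $2y$, so $y=0$, a contradiction. The symmetric push to $y\pm R$ from your Step 3 would not suffice here, because it is consistent with $\theta_k=\tfrac12$.

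\textbf{Finishing.} Once $\theta_k\in\{0,1\}$ is known, monotonicity gives the threshold. Normalization then gives $s(a^r,b^{m-r})$ for all $a<b$. Choose $A<t_{(1)}$ and $B>t_{(m)}$. The sandwich $s(A^{k-1},t_{(k)}^{m-k+1})\le s(t)\le s(t_{(k)}^k,B^{m-k})$ yields $s(t)=t_{(k)}$ directly. You do not need the push-out analysis of general profiles or the density argument.
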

\begin{proof}
Each unilateral response is projection onto a closed interval, hence is nondecreasing. Thus $s$ is nondecreasing in each report. For $1\le r\le m-1$, define $K_r=\range\bigl(a\mapsto s(a^r,0^{m-r})\bigr)$.
By coalition projection and homogeneity, this is a closed convex cone in $\R$ containing $0$. It is therefore one of $\{0\}$, $[0,\infty)$, $(-\infty,0]$, or $\R$. Translation gives
$s(0^r,1^{m-r})=1+P_{K_r}(-1)\in\{0,1\}$.
At $r=0,m$, unanimity gives the values $1,0$. As $r$ increases, these values are nonincreasing by coordinatewise monotonicity. Let $k$ be the first index $r$ where the value is $0$. Translation and positive homogeneity now imply that, for any $a<b$,
\[
 s(a^r,b^{m-r})=
 \begin{cases}b,&r<k,\\ a,&r\ge k.\end{cases}
\]
The cases with only one repeated value follow from unanimity.

Sort an arbitrary profile as $t_{(1)}\le\cdots\le t_{(m)}$ and choose $A<t_{(1)}$ and $B>t_{(m)}$. By anonymity we may work with this sorted order. Coordinatewise monotonicity bounds $s(t)$ above and below by
\[
 s(t)\le s(t_{(k)}^k,B^{m-k})=t_{(k)},\qquad
 s(t)\ge s(A^{k-1},t_{(k)}^{m-k+1})=t_{(k)}.
\]
This includes $k=1$ and $k=m$, with the natural empty-block convention. Hence $s(t)=t_{(k)}$.
\end{proof}

\subsection{A direction shared by every box menu}
A cone $K$ is \emph{pointed} if $K\cap(-K)=\{0\}$. A ray $\R_{\ge0}e\subseteq K$ is \emph{extreme} if any decomposition of a nonzero point on that ray as $u+v$ with $u,v\in K$ has both summands on the same ray. For the pointed orthogonal-box cones used below, the extreme rays are precisely their nonzero coordinate halflines.

\begin{lemma}[Coalition cones and their duals]\label{lem:coalition-cones}
Let $F:E^m\to E$ be normalized and admissible, with $m\ge2$. For $1\le r\le m-1$, put
\[
 K_r=\range\bigl(a\mapsto F(a^r,0^{m-r})\bigr),\qquad
 K_r^*=\{u:\ip{u}{v}\ge0\text{ for all }v\in K_r\}.
\]
Then $K_r\subseteq K_{r+1}$ whenever both are defined, and
$K_{m-r}=K_r^*$.
In particular, the common unilateral recession cone $K=K_1$ is pointed.
\end{lemma}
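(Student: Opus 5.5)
By \cref{lem:coalition}, the diagonal response $a\mapsto F(a^r,0^{m-r})$ is projection onto the coalition menu, so $K_r$ is closed and convex. It is a cone because positive homogeneity gives $F(ta^r,0^{m-r})=tF(a^r,0^{m-r})$. It contains $0$ by unanimity. Thus $F(a^r,0^{m-r})=P_{K_r}a$.

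\textbf{Monotonicity.} Take $y\in K_r$ and write $y=F(a^r,0^{m-r})$. Outcome replacement \eqref{eq:replacement} lets us replace the $r$ copies of $a$ by $y$, so $F(y^r,0^{m-r})=y$. The radial invariance \eqref{eq:radial-all} with $y$ as the base point keeps the outcome at $y$ when any report is moved along the ray from $y$ through it. Moving a zero report along the ray from $y$ through $0$ does not reach $y$. Instead I would use replacement directly: replacing one further zero report by the current outcome $y$ keeps the outcome equal to $y$. This gives $F(y^{r+1},0^{m-r-1})=y$, so $y\in K_{r+1}$.

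\textbf{Duality.} I would use translation equivariance together with anonymity. For $a\in E$,
\[
F(0^r,a^{m-r})=a+F\bigl((-a)^r,0^{m-r}\bigr)=a+P_{K_r}(-a)=a-P_{-K_r}a .
\]
By Moreau's decomposition for the closed convex cone $-K_r$, we have $a-P_{-K_r}a=P_{(-K_r)^\circ}a$. The polar cone is $(-K_r)^\circ=\{u:\ip{u}{-v}\le0\ \forall v\in K_r\}=K_r^*$. Hence the map $a\mapsto F(0^r,a^{m-r})$ is projection onto $K_r^*$. By anonymity, the left side is the diagonal response of a coalition of size $m-r$ against $r$ zero reports, and its range is $K_{m-r}$. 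Projections onto closed sets have those sets as their ranges, so $K_{m-r}=K_r^*$. If one prefers not to invoke Moreau, the identity $a=P_Ka+P_{K^\circ}a$ follows directly from the characterization \eqref{eq:projection-normal}, applied with test points $0$ and $2P_Ka$.

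\textbf{Pointedness.} Apply the two facts with $r=1$. When $m\ge2$, monotonicity gives $K_1\subseteq K_{m-1}=K_1^*$, since $K_{m-1}$ is reached from $K_1$ by repeated inclusions. Now suppose $v\in K_1\cap(-K_1)$. Since $v\in K_1$ and $-v\in K_1\subseteq K_1^*$, we get $\ip{-v}{v}\ge0$, so $v=0$. That $K=K_1$ is the common recession cone was shown in \cref{lem:common-cone}.

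The main care point is the monotonicity step. It must use only outcome replacement, which is valid because $y$ is the outcome at a profile in which the extra agent reports $0$. The duality step also requires correct sign bookkeeping for the translated profile.
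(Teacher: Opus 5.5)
Your proposal is correct and follows the paper's proof step for step: closed convex cone from \cref{lem:coalition} plus homogeneity, $K_r\subseteq K_{r+1}$ by one more outcome replacement, duality from translation equivariance, anonymity and Moreau's decomposition by comparing ranges of the resulting projections, and pointedness from $K_1\subseteq K_{m-1}=K_1^*$. The only cosmetic difference is bookkeeping: you translate in the opposite direction and use the polar of $-K_r$ to identify $a\mapsto F(0^r,a^{m-r})$ directly as $P_{K_r^*}$, whereas the paper matches $P_{K_r}a=a+P_{K_{m-r}}(-a)$ against the positive-dual form $P_{K_r}a=a+P_{K_r^*}(-a)$.
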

\begin{proof}
Each $K_r$ is a closed convex cone by \cref{lem:coalition} and homogeneity. If $y\in K_r$, the fixed-point property gives $F(y^r,0^{m-r})=y$. Replacing one additional zero report by the outcome $y$ proves $y\in K_{r+1}$.

Translation equivariance and anonymity give, for all $a\in E$,
\begin{equation}\label{eq:coalition-duality-response}
 P_{K_r}a=F(a^r,0^{m-r})
 =a+F(0^r,(-a)^{m-r})=a+P_{K_{m-r}}(-a).
\end{equation}
Moreau decomposition, written with the \emph{positive dual} used here, says
\begin{equation}\label{eq:moreau}
 P_Ka=a+P_{K^*}(-a)
\end{equation}
for a closed convex cone $K$. To verify the sign convention directly, put $p=P_Ka$. Testing \eqref{eq:projection-normal} against $0$ and $2p$ gives $\ip{a-p}{p}=0$; testing it against arbitrary points of $K$ gives $p-a\in K^*$. For $u\in K^*$,
\[
 \ip{-a-(p-a)}{u-(p-a)}=-\ip{p}{u}\le0.
\]
Thus $P_{K^*}(-a)=p-a$, proving \eqref{eq:moreau}.

Compare \eqref{eq:coalition-duality-response} and \eqref{eq:moreau}. The two projections agree at every argument, so their ranges agree: $K_{m-r}=K_r^*$. Hence
$K=K_1\subseteq K_{m-1}=K_1^*$. If $v,-v\in K$, then $v\in K^*$ gives $\ip{v}{-v}\ge0$, forcing $v=0$.
\end{proof}

\subsection{The population-and-dimension induction}
\begin{theorem}[Box-menu characterization]\label{thm:box-characterization}
Let $m\ge1$ be odd. Suppose $F:E^m\to E$ is normalized and admissible, and every unilateral menu is an orthogonal box, with a frame that may initially depend on the menu. Then
$F=Q_{\B,k}$
for a single orthonormal basis $\B$ and fixed ranks $k_j\in\{1,\ldots,m\}$.
\end{theorem}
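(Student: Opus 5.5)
We argue by a joint induction on the pair $(m,\dim E)$, ordered lexicographically by $m+\dim E$ or simply by strong induction on $m$ with an inner induction on $d$. The base cases are $m=1$, where unanimity forces $F$ to be the identity and any frame with $k_j=1$ works, and $d=1$ for arbitrary odd $m$, which is exactly \cref{lem:scalar}. For the inductive step we split according to the common recession cone $K=K_1$ supplied by \cref{lem:common-cone}.

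\emph{Case 1: $K=\{0\}$.} Then every unilateral menu is a closed convex set with trivial recession cone, hence compact. For $m\ge3$, \cref{prop:absorber} produces a normalized admissible absorber $g:E^{m-2}\to E$. By \cref{cor:absorber-boxes} every unilateral menu of $g$ is an orthogonal box, so the induction hypothesis gives $g=Q_{\B,k'}$ in one frame $\B$ with ranks $k'_j\in\{1,\ldots,m-2\}$. If $m\ge5$, \cref{prop:rank-lifting} lifts this to $F=Q_{\B,k'+1}$ in the same frame. If $m=3$, then \cref{rem:near-unanimity} gives $F(a,b,b)=b$, and \cref{cor:ternary-majority} yields $F=\CM_{\B}$ directly.

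\emph{Case 2: $K\ne\{0\}$.} By \cref{lem:coalition-cones}, $K$ is pointed. Moreover $K=C(0)$ is a unilateral menu, hence an orthogonal box cone, that is, a product of copies of $\{0\}$ and halflines in some frame. Choose an extreme ray $\R_{\ge0}e$ of $K$, a coordinate halfline of that box. Since every menu has recession cone $K$ and is itself an orthogonal box, each menu's frame must contain $e$ (up to sign) as a coordinate axis. Indeed, an orthogonal box whose recession cone contains a nonzero vector $e$ that is extreme in the recession cone has the halfline direction $e$ among its axis directions. We then try to split $F$ along $e$. The goal is to show that $\ip{e}{F(x)}$ depends only on the scalars $\ip{e}{x_i}$, and that the component of $F(x)$ in $e^\perp$ depends only on the projections of the reports to $e^\perp$.

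The tool is \cref{lem:clamp-saturation}. Each unilateral response is projection onto a box having $e$ as an axis, so its $e$-coordinate is a clamp of the $e$-coordinate of the moving report, and the orthogonal output cannot move when only the $e$-coordinate of a report changes. Conversely, changing a report within $e^\perp$ leaves the $e$-output coordinate unchanged, because projection onto a box with axis $e$ separates off that coordinate. The $e$-coordinate rule is then a normalized admissible scalar rule, hence an order statistic by \cref{lem:scalar}. The $e^\perp$-rule is a normalized admissible mechanism on $E'=e^\perp$ whose menus are the $e^\perp$-faces of the original boxes, hence boxes. The inner induction on dimension gives a single frame $\B'$ of $e^\perp$ with fixed ranks, and $\B=(\B',e)$ is the required frame.

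The main obstacle is Case 2, and in particular two points there. The first is proving that the extreme direction $e$ is genuinely an axis of \emph{every} menu box, with a consistent sign, and not merely of $K$. The second is verifying that the $e$-coordinate of a menu projection is a clamp whose interval is independent of the $e^\perp$-coordinates of the moving report, which is needed before \cref{lem:clamp-saturation} applies. For the first point, I would use that a box's recession cone is the product of the recession cones of its factors, so the halfline directions of each box are exactly the extreme rays of $K$. For the second, I would use that the box splits as a product along its own axes.
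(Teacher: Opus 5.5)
Your proposal is correct and follows the paper's proof essentially step for step. It uses the lexicographic induction on $(m,d)$ with base cases $m=1$ and \cref{lem:scalar}; the bounded branch through the absorber, \cref{cor:absorber-boxes}, rank lifting for $m\ge5$, and \cref{cor:ternary-majority} for $m=3$; and the unbounded branch splitting $F$ along an extreme ray $e$ of the pointed box cone $K$, which is an axis of every menu because the halfline factors of each box are exactly the extreme rays of its recession cone.

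The differences are cosmetic. You obtain compactness from the trivial recession cone and $F(a,b,b)=b$ from \cref{rem:near-unanimity}, where the paper uses a direct Lipschitz bound and the identity absorber. Also, \cref{lem:clamp-saturation} is unnecessary, since the product projection $P_C(te+z)=P_J(t)e+P_Dz$ already gives the splitting one report at a time.
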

\begin{proof}
Induct lexicographically on $(m,d)$: first on the odd population $m$, and then, for fixed $m$, on the spatial dimension $d$. This allows any already proved dimension at the smaller population $m-2$. For $m=1$, unanimity gives the identity. For $d=1$, use \cref{lem:scalar}. Let $m\ge3$ and $d\ge2$, and let $K$ be the common cone from \cref{lem:common-cone}.

\step{Case 1: $K=\{0\}$; reduce the population.}
Every unilateral menu is bounded. To see this directly, if $y$ is attainable with outside reports $z$, outcome replacement gives $F(y,z)=y$, while $F(y,0^{m-1})=0$ because its range is $K$. Therefore $\norm{y}\le\sum_j\norm{z_j}$
by \eqref{eq:global-lipschitz}. The menus are closed, hence compact.

By \cref{prop:absorber}, the mechanism has a normalized admissible absorber $G:E^{m-2}\to E$. Every unilateral menu of $G$ is an orthogonal box by \cref{cor:absorber-boxes}. If $m=3$, unanimity makes $G(c)=c$, and the absorbing identity gives $F(a,b,b)=b$. Apply \cref{cor:ternary-majority}; no frame is inherited from the one-report identity. If $m\ge5$, the outer induction hypothesis characterizes $G$ in a fixed frame, and \cref{prop:rank-lifting} reconstructs $F$ in that same frame.

\step{Case 2: $K\ne\{0\}$; reduce the dimension.}
By \cref{lem:coalition-cones}, $K$ is pointed. It is also an orthogonal box, because it is the unilateral menu at zero. A pointed orthogonal-box cone is a product of zero factors and coordinate halflines: containing $0$ and being a cone force each interval factor to be $\{0\}$, a halfline, or the full line, and pointedness excludes the last possibility. Since $K\ne\{0\}$, choose an extreme ray $\R_{\ge0}e$ with $\norm{e}=1$.

Every unilateral menu is an orthogonal box with recession cone $K$. In any box representation, unbounded interval factors contribute their coordinate rays to the recession cone. Pointedness excludes full-line factors, and the extreme rays are exactly these unbounded coordinate directions. Thus $e$ is a coordinate axis of \emph{every} unilateral menu, even if the remaining axes are described differently. Each such menu splits as $C=Je+D$, where $D\subseteq e^\perp$ and $J$ is an interval and $D$ is an orthogonal box in $e^\perp$. Its projection consequently splits as $P_C(te+z)=P_J(t)e+P_Dz$, where $z\in e^\perp$.

We next pass from these unilateral splittings to a global splitting of $F$. For any one agent, changing only that agent's $e^\perp$ component does not affect the $e$ component of the outcome, because the corresponding response is the displayed product projection. Change the agents' perpendicular components one at a time to zero. The scalar output is unchanged at each step, so it depends only on the scalar input tuple. Applying the symmetric argument to scalar changes shows that the perpendicular output depends only on the perpendicular input tuple. Therefore there are maps $s:\R^m\to\R$ and $F_\perp:(e^\perp)^m\to e^\perp$ such that
\begin{equation}\label{eq:global-splitting}
 F(x)=s(\ip{e}{x_1},\ldots,\ip{e}{x_m})e
       +F_\perp(P_{e^\perp}x_1,\ldots,P_{e^\perp}x_m).
\end{equation}
They can be defined by restricting respectively to the line $\R e$ and the subspace $e^\perp$. Unanimity implies $s(0)=0$ and $F_\perp(0)=0$.

Both factors inherit continuity, anonymity, unanimity, translation equivariance, and homogeneity. They inherit strategyproofness by restricting true points and reports to the corresponding subspace in the original incentive inequality; the other component of the outcome is then zero. The unilateral menus of $F_\perp$ are the perpendicular factors $D$ of the original menus: fix all scalar input coordinates at zero, and vary the perpendicular report through $e^\perp$. Projection onto $D$ attains all of $D$. Thus those menus are orthogonal boxes.

By \cref{lem:scalar}, $s$ is a fixed scalar order statistic. By the inner induction hypothesis in dimension $d-1$, $F_\perp$ is a fixed-frame coordinate-wise order-statistic rule. Adjoining $e$ to that perpendicular frame proves the result.
\end{proof}

This induction has not used \cref{thm:rigidity}. Its bounded case uses only absorption, the core-box theorem, and rank lifting; its unbounded case uses the common recession direction. We can now apply it to the absorber of a mechanism whose original menus were assumed only to be bounded.

\subsection{Completion of the structural theorem}
\begin{proof}[Proof of \cref{thm:rigidity}]
By \cref{cor:automatic-homogeneity}, the assumed translation equivariance gives positive homogeneity, so $f$ is normalized. The unilateral menus of $f$ are closed by \cref{lem:projection} and bounded by assumption, hence compact. By \cref{prop:absorber}, there is a normalized admissible $g:E^{n-2}\to E$ with $f(a,g(z),z)=g(z)$.
By \cref{cor:absorber-boxes}, every unilateral menu of $g$ is an orthogonal box.

If $n=3$, then $g$ is the identity, so $f(a,b,b)=b$. By \cref{cor:ternary-majority}, $f$ is coordinate-wise median in some frame, with rank $2$ in every coordinate. If $n\ge5$, \cref{thm:box-characterization} writes $g$ as a fixed-frame rule with ranks in $\{1,\ldots,n-2\}$. By \cref{prop:rank-lifting}, $f$ uses the same frame with each rank increased by one. Its ranks therefore lie in $\{2,\ldots,n-1\}$.
\end{proof}

\begin{remark}[Converse and an axiomatic reformulation]\label{rem:converse}
For a fixed-frame order-statistic rule on $n$ reports, the response to one report in coordinate $j$ is clamping to $[z_{(k_j-1),j},z_{(k_j),j}]$, where the outside profile has $n-1$ entries and the missing extreme endpoints are $-\infty,+\infty$. Hence the rule is a projection onto an orthogonal box in each report and is strategyproof. It is anonymous, continuous, unanimous, and normalized directly from its definition. Its unilateral menus are bounded exactly when $2\le k_j\le n-1$ for every $j$.

Combining \cref{cor:automatic-homogeneity,rem:near-unanimity} with \cref{thm:rigidity} gives an equivalent formulation: a continuous, anonymous, translation-equivariant, strategyproof rule on an odd population that satisfies $f(a,b^{n-1})=b$ must be a fixed-frame interior-rank rule. The agreement condition already implies unanimity. This is a reformulation of the same structural theorem, not a characterization without its bounded-influence content.
\end{remark}

\section{From rigidity to median optimality}
\label{sec:approximation}
The structure theorem is complete. We now use it for total Euclidean
distance.

\begin{theorem}[Normalization without loss of approximation]\label{thm:normalization}
For every \(n\ge2\), every finite \(d\ge1\), and every
\(f\in\F_{n,d}\), there exists a normalized mechanism
\(\bar f\in\F_{n,d}\) such that $\AR(\bar f)\le \AR(f)$.
\end{theorem}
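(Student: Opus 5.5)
The plan is to obtain $\bar f$ as a limit of rescaled and recentred copies of $f$. For $v\in E$ and $t>0$ put
\[
 f_{v,t}(x)=\tfrac1t\bigl(f(v+tx_1,\ldots,v+tx_n)-v\bigr).
\]
\textbf{Step 1 (the orbit stays in the class).} Each $f_{v,t}$ is continuous, anonymous, unanimous and strategyproof: the incentive inequality is invariant under the common similarity $y\mapsto v+ty$. Moreover $\AR(f_{v,t})=\AR(f)$, because $\SC$ and $\OPT$ both scale by $t$ and are translation invariant, and the map $x\mapsto v+tx$ is a bijection of $E^n$. By \eqref{eq:global-lipschitz} every $f_{v,t}$ is $1$-Lipschitz for $d_1$, and unanimity gives $f_{v,t}(0^n)=0$. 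Hence the family is equicontinuous and pointwise bounded, and Arzel\`a--Ascoli gives locally uniform subsequential limits.

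\textbf{Step 2 (limits keep the properties and do not raise the ratio).} Continuity, anonymity and unanimity pass to locally uniform limits directly. Strategyproofness passes as well, since it is a closed system of non-strict inequalities. For the ratio, fix a profile $x$ with $\OPT(x)>0$. Then $\SC(x,f_k(x))\to\SC(x,\bar f(x))$, so
\[
\frac{\SC(x,\bar f(x))}{\OPT(x)}\le\liminf_k\AR(f_k)=\AR(f).
\]
Profiles with $\OPT(x)=0$ are diagonal, and unanimity handles them. Thus every element $\bar f$ of the closure $\mathcal O$ of the orbit lies in $\F_{n,d}$ and satisfies $\AR(\bar f)\le\AR(f)$.

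\textbf{Step 3 (finding a normalized element of $\mathcal O$).} The space $\mathcal O$ is compact and invariant under the action of the affine similarity group $\{y\mapsto v+ty\}$, and it suffices to find a fixed point of this action. By \cref{cor:automatic-homogeneity}, translation invariance alone is enough, since positive homogeneity then follows. I would first take a minimal nonempty closed invariant subset $\mathcal M\subseteq\mathcal O$, which exists by Zorn's lemma and compactness. I would then try to show that $\mathcal M$ is a single point. The intended tool is the radial invariance \eqref{eq:radial-all}. For $g\in\mathcal M$ and a profile $x$ with $y=g(x)$, it says that $g$ is constant on the cone of profiles $y+t_i(x_i-y)$. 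Consequently every blow-up of $g$ centred at the diagonal point $y$ agrees with $g$ at the recentred profile $x-y$. I would use this to compare $g_{v,t}$ with $g$ on larger and larger sets of profiles, aiming at $g_{v,t}=g$ for all $(v,t)$. A natural first attempt is a blow-down $t\to\infty$ followed by a second blow-up, in the spirit of ``tangents of tangents are tangents''. The radial identity would then force the resulting tangent to be independent of the centring.

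\textbf{Main obstacle.} Step 3 is where I expect the difficulty. A compact orbit closure of a non-compact group action need not contain a fixed point. Averaging, as in a Markov--Kakutani argument, is unavailable, because convex combinations destroy strategyproofness. The proof must therefore extract exact equivariance from the geometry of strategyproof mechanisms, namely projection responses and radial invariance, rather than from abstract dynamics. If the direct argument fails, my fallback is to normalize in two stages. First pass to a translation-equivariant limit by recentring at diagonal points $v=y=f(x)$ along a sequence chosen to stabilize the menus; \cref{lem:common-cone} suggests that menus vary Hausdorff-continuously. Then invoke \cref{cor:automatic-homogeneity} to get positive homogeneity for free.
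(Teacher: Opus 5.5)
Your Steps 1 and 2 are correct. The paper uses the same facts for translated copies of $f$: they stay in $\F_{n,d}$ with the same ratio, they are uniformly $1$-Lipschitz for $d_1$, and the ratio can only drop under locally uniform limits. The genuine gap is Step 3, which is the entire content of the theorem and which you leave open.

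Compactness of the orbit closure gives subsequential limits, but these need not be equivariant. For an arbitrary sequence $v_k\to\infty$, a limit of $f(\cdot+v_k)-v_k$ is translation-equivariant only if it is insensitive to bounded offsets $v_k\mapsto v_k+v$. Nothing in your argument controls $f$ at infinity in that way, and minimal invariant sets of a noncompact group action need not be points.

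The radial identity \eqref{eq:radial-all} does not supply this control either. It only says that $g$ is constant on the cone of profiles $(y+t_i(x_i-y))_i$, $t_i\ge0$, through $x$. So a blow-up of $g$ at $y=g(x)$ agrees with the recentred $g$ only on that cone, where both vanish. It says nothing at profiles off the cone, which is what $g_{v,t}=g$ would require.

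The missing idea is quantitative: common translations are $1$-Lipschitz, $\norm{f(x+v)-f(x+w)}\le\norm{v-w}$. This is much stronger than the bound $n\norm{v-w}$ from \eqref{eq:global-lipschitz}. The paper proves it as follows.
\begin{enumerate}[label=(\alph*)]
\item Differentiate $f(R_tx)=f(x)$, where $(R_tx)_i=tx_i+(1-t)f(x)$, in a common-translation direction. With $S=\sum_iA_i$ the sum of the symmetric positive semidefinite derivatives of the unilateral projections, this gives $S(R_tx)\bigl[tI+(1-t)S(x)\bigr]=S(x)$ at almost every $x$, simultaneously for all rational $t>0$.
\item If $S(x)$ had an eigenvalue $\lambda>1$, then $S(R_tx)$ would have the negative eigenvalue $\lambda/(t+(1-t)\lambda)$ once $t>\lambda/(\lambda-1)$. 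Hence $0\preceq S\preceq I$ almost everywhere.
\item A Fubini and density argument turns this almost-everywhere bound into the global inequality.
\end{enumerate}

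Consequently $F_x(v)=v-f(x+v)$ is a monotone operator. It is bounded by $\sum_i\norm{x_i}$, by comparison with the unanimous profile $(v,\ldots,v)$.

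Now suppose $u$ exposes a unique point $p_x$ of $\overline{\conv}\,F_x(E)$. Dividing $\ip{F_x(tu+v)-F_x(z)}{tu+v-z}\ge0$ by $t$ shows that every cluster point of $F_x(tu+v)$ maximizes $\ip{u}{\cdot}$ on that set. Therefore $F_x(tu+v)\to p_x$ for every fixed offset $v$, and the limit is independent of $v$.

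Choose one $u$ that is generic for a countable dense set of profiles. Equicontinuity then gives a genuine locally uniform limit $\bar f(x)=\lim_{t\to\infty}\bigl(f(x+tu)-tu\bigr)$. Offset-independence yields $\bar f(x+v)=\bar f(x)+v$ at once, and \cref{cor:automatic-homogeneity} then gives homogeneity.

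Scalings play no role in this argument. What you need is a full limit along one translation ray, not a subsequential point of the orbit closure, and monotonicity is what provides it.
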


The proof is analytic and independent of the absorbing reduction,
median algebra, and box-menu classification, so it is deferred to
Appendix~\ref{app:normalization}. Once this reduction is available, the
approximation argument has only two additional ingredients: a menu
bound and a reflection comparison.

\subsection{An approximation guarantee bounds every unilateral menu}
\begin{lemma}[Approximation bounds unilateral influence]\label{lem:approx-menu}
Let $n\ge2$, $f\in\F_{n,d}$, and $R=\AR(f)<n-1$. If $y$ is attainable while the other reports are $z_1,\ldots,z_{n-1}$, then
\begin{equation}\label{eq:approx-menu-bound}
 \norm{y}\le\frac{R+1}{n-1-R}\sum_{j=1}^{n-1}\norm{z_j}.
\end{equation}
Every unilateral menu is therefore compact. Normalization is not required.
\end{lemma}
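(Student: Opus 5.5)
Let $z=(z_1,\ldots,z_{n-1})$ be the outside reports and let $y$ be an attainable outcome, say $y=f(a,z)$ for some report $a$ of the remaining agent. The plan is to test the approximation guarantee at one profile in which $y$ is forced to be the outcome, but in which $y$ is expensive compared with the optimum.

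\textbf{Step 1: pick the test profile.} By outcome replacement, \eqref{eq:replacement}, we have $f(y,z)=y$. By radial invariance, \eqref{eq:radial-one}, the outcome stays $y$ when the free report moves along the ray from $y$ through itself, so the free report's own position barely matters. I will therefore use the profile $x=(y,z_1,\ldots,z_{n-1})$, which has outcome $y$.

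\textbf{Step 2: bound the mechanism's cost from below.} The social cost of the outcome satisfies
\[
 \SC(x,y)=\sum_{j=1}^{n-1}\norm{z_j-y}\ge (n-1)\norm{y}-\sum_j\norm{z_j}.
\]

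\textbf{Step 3: bound the optimum from above.} Place the facility at the origin. This gives
\[
 \OPT(x)\le \norm{y}+\sum_j\norm{z_j}.
\]

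\textbf{Step 4: combine.} Since $\AR(f)<\infty$, the inequality $\SC\le R\,\OPT$ holds at every profile, including degenerate ones, by the convention in \cref{sec:model}. It gives
\[
 (n-1)\norm{y}-S\le R\norm{y}+RS, \qquad S=\sum_j\norm{z_j}.
\]
Because $R<n-1$, we can rearrange this to $(n-1-R)\norm{y}\le (R+1)S$, which is exactly \eqref{eq:approx-menu-bound}. Anonymity handles every choice of agent.

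\textbf{Step 5: compactness.} The bound shows that each unilateral menu is bounded. By \cref{lem:projection} each menu is closed, so each menu is compact. No normalization is used anywhere.

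The only delicate point is Step 1. I need $y$ to appear both as a report and as the outcome, so that the remaining $n-1$ reports pay distance to $y$; that is precisely what outcome replacement provides. The rest is arithmetic.
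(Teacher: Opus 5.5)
Your proof is correct and follows the paper's argument step for step: outcome replacement gives $f(y,z)=y$, the triangle inequality bounds $\SC$ from below, placing the facility at the origin bounds $\OPT$ from above, and rearranging with $R<n-1$ gives the estimate, with closedness of menus from \cref{lem:projection}. The appeal to radial invariance in Step 1 is not needed, since the ``ray from $y$ through itself'' is degenerate; your use of the model-section convention $\SC\le\AR(f)\,\OPT$ for zero-optimum profiles is equivalent to the paper's explicit remark that such profiles are unanimous.
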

\begin{proof}
Outcome replacement gives $f(y,z)=y$. Put $D=\sum_j\norm{z_j}$. The social cost at this outcome is
\[
 \SC((y,z),y)=\sum_j\norm{z_j-y}\ge(n-1)\norm{y}-D,
\]
while placing the facility at $0$ has cost $\norm{y}+D$. The approximation inequality gives
\[
 (n-1)\norm{y}-D\le R\OPT(y,z)\le R(\norm{y}+D).
\]
This inequality also applies when the optimum is zero: such a profile is unanimous, so its outcome cost is zero by unanimity. Rearranging using $R<n-1$ proves \eqref{eq:approx-menu-bound}. Closedness of menus follows from \cref{lem:projection}.
\end{proof}
The estimate is elementary, but its role is decisive: a sufficiently good objective value forces the objective-independent bounded-influence hypothesis.

\subsection{Reflection comparison with the median}
The next lemma is the finite-dimensional extension of the reflection comparison of Goel and Hann-Caruthers~\cite[Lemma~6]{GoelHC23}. Their four reflected planar outputs become $2^d$ corners. We include the argument to identify exactly how it interacts with rigidity.

\begin{lemma}[Reflection domination]\label{lem:reflection}
For odd $n\ge3$ and any fixed-frame rule $Q_{\B,k}$,
$\AR(Q_{\B,k})\ge R_{n,d}$.
\end{lemma}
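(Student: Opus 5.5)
The plan is to compare $Q_{\B,k}$ with its $2^d$ coordinate reflections. For a sign vector $\sigma\in\{\pm1\}^d$, let $S_\sigma$ be the orthogonal map that reverses the coordinates $j$ with $\sigma_j=-1$ in the frame $\B$. Define $Q^\sigma=S_\sigma\circ Q_{\B,k}\circ S_\sigma$, applying $S_\sigma$ to every report. In coordinate $j$, reversing the sign turns the $k_j$th smallest value into the $k_j$th largest, that is, rank $n+1-k_j$. Hence $Q^\sigma=Q_{\B,k^\sigma}$ with $k^\sigma_j=k_j$ if $\sigma_j=1$ and $k^\sigma_j=n+1-k_j$ if $\sigma_j=-1$. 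Since $S_\sigma$ is an isometry and bijects $E^n$, both $\SC$ and $\OPT$ are invariant under it, so $\AR(Q^\sigma)=\AR(Q_{\B,k})$ for every $\sigma$.

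Next I fix a profile $x$ and work coordinatewise. In coordinate $j$, the two values $x_{(k_j),j}$ and $x_{(n+1-k_j),j}$ lie on opposite sides of the median $x_{((n+1)/2),j}$, or both equal it when $k_j=(n+1)/2$. So the median coordinate is a convex combination $\lambda_j x_{(k_j),j}+(1-\lambda_j)x_{(n+1-k_j),j}$ with $\lambda_j\in[0,1]$. The $2^d$ outputs $Q^\sigma(x)$ are exactly the corners of the box $\prod_j[\min,\max]$ of these pairs. Taking the product weights $\prod_j(\lambda_j\text{ or }1-\lambda_j)$ writes $\CM_{\B}(x)=\sum_\sigma w_\sigma Q^\sigma(x)$ with $w_\sigma\ge0$ and $\sum_\sigma w_\sigma=1$.

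Since $y\mapsto\SC(x,y)$ is convex, $\SC(x,\CM_{\B}(x))\le\sum_\sigma w_\sigma\SC(x,Q^\sigma(x))\le\max_\sigma\SC(x,Q^\sigma(x))$. Therefore some $\sigma$ satisfies $\SC(x,Q^\sigma(x))/\OPT(x)\ge\SC(x,\CM_{\B}(x))/\OPT(x)$. When $\OPT(x)=0$ the profile is unanimous and every rule gives ratio $1$, so I assume $\OPT(x)>0$. This yields $\max_\sigma\AR(Q^\sigma)\ge\SC(x,\CM_{\B}(x))/\OPT(x)$ for every $x$. Taking the supremum over $x$ and using the equality of all $\AR(Q^\sigma)$ gives $\AR(Q_{\B,k})\ge R_{n,d}$, because $R_{n,d}$ is independent of the frame.

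The only step needing care is the rank-reflection identity together with the betweenness of the median. That is routine order-statistic bookkeeping in which odd $n$ guarantees a unique middle rank. So I expect no real obstacle; the substance lies entirely in the earlier rigidity theorem that reduces competitors to the form $Q_{\B,k}$.
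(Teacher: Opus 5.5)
Your proposal is correct and follows essentially the same route as the paper's proof. Both arguments conjugate by the $2^d$ coordinate reflections to obtain ranks $k_j$ or $n+1-k_j$, place $\CM_{\B}(x)$ in the convex hull of the reflected outputs, and combine convexity of $y\mapsto\SC(x,y)$ with reflection invariance of the approximation ratio; your explicit product weights simply spell out the convex-hull membership that the paper states via the corner-box observation.
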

\begin{proof}
For $\eps\in\{-1,1\}^d$, let $D_\eps$ be the orthogonal map defined by $D_\eps e_j=\eps_je_j$, and set $Q^\eps(x)=D_\eps Q_{\B,k}(D_\eps x_1,\ldots,D_\eps x_n)$.
Every $Q^\eps$ has the same approximation ratio as $Q_{\B,k}$ because the transformation preserves distances and bijects the set of profiles. In coordinate $j$, the selected rank is $k_j$ or $n+1-k_j$, depending on the sign. The median rank $(n+1)/2$ lies between these two ranks. Thus, for every fixed profile,
$\CM_{\B}(x)\in\conv\{Q^\eps(x):\eps\in\{-1,1\}^d\}$.
Indeed, the reflected outputs are all corners, possibly repeated, of the coordinate box spanned by the two selected values in each coordinate, and the median lies in that box.

The map $y\mapsto\SC(x,y)$ is convex, being a sum of norms. Therefore, for $\OPT(x)>0$,
\[
 \SC(x,\CM_{\B}(x))
 \le\max_\eps\SC(x,Q^\eps(x))
 \le\AR(Q_{\B,k})\OPT(x).
\]
Take the supremum over profiles.
\end{proof}

\subsection{Completion of the approximation theorem}
The only numerical input is the bound of Gravin and Jia~\cite[Theorem~1, $q=2$]{jia1}:
\begin{equation}\label{eq:median-bound}
 R_{n,d}\le\rho:=\sqrt{6\sqrt3-8}<2.
\end{equation}
Here $q$ is the exponent of the \emph{spatial} norm in their theorem. In our problem the spatial norm is Euclidean and the social cost is the sum of those distances. No exact formula for $R_{n,d}$ is needed.

\begin{proof}[Proof of \cref{thm:optimality}]
Suppose, toward a contradiction, that $f\in\F_{n,d}$ satisfies $\AR(f)<R_{n,d}$. By \cref{thm:normalization}, choose normalized admissible $\bar f$ such that $\AR(\bar f)\le\AR(f)<R_{n,d}\le\rho<2\le n-1$.
By \cref{lem:approx-menu}, every unilateral menu of $\bar f$ is compact. By \cref{thm:rigidity}, $\bar f=Q_{\B,k}$ for one orthonormal frame and fixed interior ranks. But \cref{lem:reflection} gives $\AR(\bar f)\ge R_{n,d}$, a contradiction.

Finally, $\CM_{\B}$ itself is admissible: each unilateral response is Euclidean projection onto its coordinate box of attainable medians, and anonymity, continuity, and unanimity are immediate from the scalar median. Thus the lower bound is attained.
\end{proof}

\section*{Disclosure of AI assistance} We acknowledge GPT-6 Pro for identifying the median-algebra tool that enabled us to complete the proof. We also thank GPT-6 Pro and GPT-6 Astra for assistance with drafting and revising the manuscript. The authors take full responsibility for the mathematical claims, citations, and final presentation.

\clearpage
\appendix
\setlength{\parskip}{1pt}
\setlength{\abovedisplayskip}{8pt plus 2pt minus 4pt}
\setlength{\belowdisplayskip}{8pt plus 2pt minus 4pt}
\setlength{\abovedisplayshortskip}{2pt plus 2pt}
\setlength{\belowdisplayshortskip}{5pt plus 2pt minus 2pt}
\section{Normalization without a prior characterization}\label{app:normalization}
This appendix proves \cref{thm:normalization}. It uses the projection tools in \cref{sec:prelim}, but no absorbing reduction, median algebra, core-box theorem, or menu classification. The central analytic issue is to choose \emph{one} translation direction along which all translated profiles have compatible limits.

The construction has four stages. First we show that translating all reports by $v$ moves the output by at most $\norm{v}$, even though the original rule need not be translation-equivariant. Second, the displacement between the translation and the output is a bounded monotone map. Third, an exposed-point argument selects a common direction for a dense set of profiles and yields a locally uniform limit. Finally, the limit is translation-equivariant and therefore homogeneous.

\subsection{Finite-dimensional analytic tools}\label{app:analytic-tools}
We use Rademacher's theorem: a Lipschitz map between finite-dimensional Euclidean spaces is differentiable outside a set of Lebesgue measure zero. We also use Fubini's theorem and the fact that a one-dimensional Lipschitz function is absolutely continuous, so the difference of its endpoint values equals the integral of its derivative along the interval. These standard facts can be found in~\cite{EvansGariepy2015}. ``Almost everywhere'' below always refers to the appropriate finite-dimensional Lebesgue measure.

Two precautions will matter. A bound holding almost everywhere in the whole profile space need not immediately hold on \emph{every} fixed translation slice; we will use Fubini and then continuity. Also, we need an identity for countably many scaling parameters at once; we will explicitly remove one common null set.

\begin{lemma}[Differentials of convex projections]\label{lem:projection-derivative}
If projection $P_C:E\to E$ onto a nonempty closed convex set is differentiable at $a$, then its derivative is a symmetric matrix $A$ with $0\preceq A\preceq I$. Here $A\preceq B$ means $\ip{v}{Av}\le\ip{v}{Bv}$ for every $v$.
\end{lemma}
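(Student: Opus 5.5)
The plan is to derive both properties of $A=DP_C(a)$ from firm nonexpansiveness \eqref{eq:firm}. Since $P_C$ is differentiable at $a$, for every $v\in E$ and $t\to0$ we have $P_C(a+tv)-P_Ca=tAv+o(t)$. Substituting the pair $a+tv$, $a$ into \eqref{eq:firm}, dividing by $t^2$, and letting $t\to0$ gives
\[
 \norm{Av}^2\le\ip{v}{Av}\qquad(v\in E).
\]
Two consequences follow at once. First, $\ip{v}{Av}\ge0$, which is positivity of the quadratic form. Second, by Cauchy--Schwarz, $\norm{Av}\le\norm{v}$.

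The key step is symmetry. Positivity of the quadratic form alone does not force $A$ to be symmetric; rotations by a small angle are the obvious counterexamples. So I will use the projection structure itself. My first choice is to write $P_C=\nabla\phi$ with
\[
 \phi(x)=\tfrac12\norm{x}^2-\tfrac12\dist(x,C)^2.
\]
The function $\tfrac12\dist(\cdot,C)^2$ is differentiable with gradient $x-P_Cx$, which is a standard consequence of \eqref{eq:projection-normal}. Hence $\phi$ is $C^1$ with gradient $P_C$, and it is convex, since it equals $\sup_{c\in C}\bigl(\ip{x}{c}-\tfrac12\norm{c}^2\bigr)$.

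Thus $P_C$ is the gradient of a convex $C^{1,1}$ function, and $A$ is the second derivative of $\phi$ at $a$, taken in the sense of a Fréchet derivative of the gradient. I then need that such a derivative of a gradient is symmetric. The cleanest route is through path integrals. The line integral of the $1$-form $\ip{\nabla\phi}{dx}$ around any closed loop vanishes, so $\ip{\nabla\phi(a+su+tv)}{\cdot}$ integrated around the small square with sides $\eps u$ and $\eps v$ is zero. Expanding $\nabla\phi$ to first order at $a$ with $o(\eps)$ error, the loop integral equals $\eps^2\bigl(\ip{Au}{v}-\ip{Av}{u}\bigr)+o(\eps^2)$. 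Hence $\ip{Au}{v}=\ip{u}{Av}$. This step needs differentiability only at the single point $a$, which is exactly the hypothesis.

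With symmetry established, $0\preceq A$ is the positivity already obtained. For $A\preceq I$, symmetry lets me diagonalize $A$. The inequality $\norm{Av}^2\le\ip{v}{Av}$ applied to an eigenvector with eigenvalue $\lambda$ gives $\lambda^2\le\lambda$, so $\lambda\in[0,1]$, and therefore $0\preceq A\preceq I$.

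I expect the main obstacle to be the symmetry step: verifying $\nabla\phi=P_C$, and making sure the loop-integral expansion uses only pointwise differentiability. That expansion is fine because $\nabla\phi(a+w)=\nabla\phi(a)+Aw+o(\norm{w})$, uniformly on the square.
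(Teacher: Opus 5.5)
Your proposal is correct and follows the paper's proof essentially step for step. Both use the convex potential $\frac12\norm{x}^2-\frac12\dist(x,C)^2$ with gradient $P_C$, get symmetry from the vanishing loop integral around a small rectangle, and obtain $\lambda^2\le\lambda$ for eigenvalues from firm nonexpansiveness. The one small difference is that the paper checks $\nabla\Phi_C=P_C$ directly, via the sandwich $\ip{p}{h}\le\Phi_C(a+h)-\Phi_C(a)\le\ip{p'}{h}$ between the two maximizers, rather than citing differentiability of the squared distance as standard.
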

\begin{proof}
The projection is the gradient of the classical convex potential
\[
 \Phi_C(a)=\frac12\norm{a}^2-\frac12\dist(a,C)^2
 =\sup_{z\in C}\left(\ip{a}{z}-\frac12\norm{z}^2\right).
\]
The supremum is attained uniquely at $P_Ca$; completing the square verifies this. It is a supremum of affine functions of $a$, so $\Phi_C$ is convex. To check its gradient without invoking a differentiability theorem for maxima, put $p=P_Ca$ and $p'=P_C(a+h)$. Comparing the two maximizers gives
$\ip{p}{h}\le\Phi_C(a+h)-\Phi_C(a)\le\ip{p'}{h}$.
Since $\norm{p'-p}\le\norm{h}$, the difference between the bounds is at most $\norm{h}^2$. Thus $\nabla\Phi_C(a)=P_Ca$, and the gradient is continuous.

Where this gradient is differentiable, its derivative is symmetric. One direct verification is to integrate it around a small rectangle with side vectors $tu,tv$. The integral is zero because it is a gradient. Its first-order expansion is
$ t^2(\ip{Au}{v}-\ip{Av}{u})+o(t^2)$,
which gives symmetry after division by $t^2$ and passage to the limit. Finally, apply firm nonexpansiveness to $a+tv,a$, divide by $t^2$, and let $t\to0$: $\norm{Av}^2\le\ip{v}{Av}$.
For an eigenvector of the symmetric matrix $A$, this says $\lambda^2\le\lambda$, hence $0\le\lambda\le1$.
\end{proof}

\begin{lemma}[An almost-everywhere derivative bound is a global bound]\label{lem:ae-lipschitz}
Let $G:\R^r\to\R^s$ be Lipschitz. If the operator norm of $DG$ is at most $L$ almost everywhere, then $G$ is $L$-Lipschitz everywhere.
\end{lemma}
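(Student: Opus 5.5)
The plan is to reduce the claim to a one-dimensional statement along segments and then use the fundamental theorem of calculus, with Fubini to handle the fact that a fixed segment may lie entirely inside the exceptional null set. Fix $x,y\in\R^r$ and $\eps>0$. By Rademacher's theorem, $G$ is differentiable outside a null set $N$. By hypothesis, $\norm{DG}\le L$ outside a further null set $N'$. The first step is to move the segment slightly so that almost every point on it avoids $N\cup N'$.

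For $w\in\R^r$, consider the translated segment $\sigma_w(t)=x+w+t(y-x)$, $t\in[0,1]$. The map $(w,t)\mapsto x+w+t(y-x)$ from $B(0,\eps)\times[0,1]$ to $\R^r$ is, for each fixed $t$, a translation in $w$. Hence the preimage of the null set $N\cup N'$ has measure zero in $\R^r\times[0,1]$: apply Fubini over $t$ first, since each $t$-slice is a translate of a null set. Fubini in the other order then shows that for almost every $w\in B(0,\eps)$, the set of $t$ with $\sigma_w(t)\in N\cup N'$ has one-dimensional measure zero. Pick such a $w$.

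Along this segment, $\phi(t)=G(\sigma_w(t))$ is Lipschitz and hence absolutely continuous. At almost every $t$ the point $\sigma_w(t)$ is a differentiability point of $G$, so the chain rule gives $\phi'(t)=DG(\sigma_w(t))(y-x)$, whose norm is at most $L\norm{y-x}$. Therefore
\[
\norm{G(y+w)-G(x+w)}=\Bigl\lVert\int_0^1\phi'(t)\,dt\Bigr\rVert\le L\norm{y-x}.
\]
Since almost every $w$ in the ball qualifies, we can take such $w\to0$. Continuity of $G$ then gives $\norm{G(y)-G(x)}\le L\norm{y-x}$.

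The only delicate point is this Fubini step. Everything else is routine. In particular, the vector-valued integral bound is just the norm of a Bochner integral, or equivalently a componentwise integral followed by pairing with the unit vector in the direction of $G(y)-G(x)$.
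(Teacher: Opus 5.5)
Your proof is correct and follows essentially the same route as the paper: you shift to nearby parallel segments on which, by Fubini, the exceptional set has one-dimensional measure zero, integrate the derivative of the absolutely continuous restriction, and pass to the limit using continuity. Parametrizing by an $r$-dimensional translation $w$ and slicing in $t$ is simply a more explicit version of the paper's ``almost every parallel line'' step, and it also covers the case $r=1$, which the paper treats separately.
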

\begin{proof}
Fix two points and consider lines parallel to the segment joining them. For almost every parallel line, Fubini's theorem makes the set where $G$ is not differentiable or where $\lVert DG\rVert_{\mathrm{op}}>L$ have one-dimensional measure zero on that line. The restriction of $G$ is absolutely continuous, and its derivative has norm at most $L$ almost everywhere. Integrating along that line gives the $L$-Lipschitz inequality between any two of its points. Approximate the chosen segment by segments on such parallel lines and use continuity to pass to the limit. When $r=1$, apply absolute continuity directly. Since the original pair was arbitrary, the bound holds globally.
\end{proof}

For a nonempty compact convex set $K\subseteq E$, the support function
$h_K(u)=\max_{p\in K}\ip{u}{p}$ is Lipschitz. At a differentiability point $u$, the maximizer is unique: every maximizer $p$ gives the supporting inequalities $h_K(u+tv)\ge h_K(u)+t\ip{p}{v}$, and differentiability for both signs of $t$ forces $p=\nabla h_K(u)$. Thus almost every direction exposes a unique maximizing point of $K$. This conclusion applies to singleton and lower-dimensional sets as well.

\subsection{The derivative under common translation}\label{app:translation-derivative}
Let $f\in\F_{n,d}$. By \eqref{eq:global-lipschitz}, $f$ is Lipschitz. At a differentiability point $x\in E^n$, write its derivative in blocks as
\[
 Df(x)=[A_1(x),\ldots,A_n(x)],\qquad
 S(x)=\sum_{i=1}^n A_i(x).
\]
The block $A_i$ is the derivative of the corresponding unilateral projection, so \cref{lem:projection-derivative} makes each $A_i$ symmetric positive semidefinite. Thus $S$ is symmetric positive semidefinite. It describes the derivative when all reports move by the same small vector. Summing the individual bounds only gives $S\preceq nI$; the needed stronger bound $S\preceq I$ comes from radial invariance.

For $t>0$, define $R_t:E^n\to E^n$ by $(R_tx)_i=tx_i+(1-t)f(x)$.
Equation~\eqref{eq:radial-all} gives
\begin{equation}\label{eq:radial-map-invariance}
 f(R_tx)=f(x).
\end{equation}
Moreover, $R_t^{-1}=R_{1/t}$: substitute \eqref{eq:radial-map-invariance} into the composition to verify that it returns each $x_i$. Both maps are Lipschitz by \eqref{eq:global-lipschitz}, so $R_t$ is bi-Lipschitz.

Let $N$ be the null set where $f$ is not differentiable. A Lipschitz map between Euclidean spaces of the same dimension sends null sets to null sets; this follows, for example, by covering a null set with small cubes and bounding the volumes of their images by a constant multiple of the cube volumes. Since $R_t^{-1}$ is Lipschitz, $R_t^{-1}(N)$ is null. Remove the single null set $N_* = N\ \cup\!\bigcup_{t\in\mathbb Q_{>0}} R_t^{-1}(N)$.
At $x\notin N_*$, $f$ is differentiable both at $x$ and at $R_tx$ for every positive rational $t$. Differentiate \eqref{eq:radial-map-invariance} in a common-translation direction. Under a common infinitesimal displacement $v$, every report of $R_tx$ has displacement $tv+(1-t)S(x)v$. The chain rule therefore gives, simultaneously for all positive rational $t$,
\begin{equation}\label{eq:translation-matrix-identity}
 S(R_tx)\bigl[tI+(1-t)S(x)\bigr]=S(x).
\end{equation}

Suppose $S(x)v=\lambda v$ for some $v\ne0$ and $\lambda>1$. Choose a positive rational $t>\lambda/(\lambda-1)$. Applying \eqref{eq:translation-matrix-identity} to $v$ yields
$S(R_tx)v=\frac{\lambda}{t+(1-t)\lambda}\,v$.
The denominator is negative, contradicting positive semidefiniteness of $S(R_tx)$. Thus
\begin{equation}\label{eq:translation-derivative-bound}
 0\preceq S(x)\preceq I\qquad\text{almost everywhere in }E^n.
\end{equation}
The use of $N_*$ is important: it permits choosing the rational $t$ \emph{after} seeing the eigenvalue $\lambda$ without changing the exceptional set.

We now transfer this bound from the full profile space to every translation slice. Use the invertible linear parameterization
$(r_1,\ldots,r_{n-1},v)\longmapsto
 (r_1+v,\ldots,r_{n-1}+v,v)$.
Differentiation in $v$ is exactly differentiation in a common-translation direction. By Fubini and \eqref{eq:translation-derivative-bound}, for almost every relative profile $r$, the Lipschitz map $G_r(v)=f(r_1+v,\ldots,r_{n-1}+v,v)$
has derivative of operator norm at most $1$ almost everywhere in $v$. By \cref{lem:ae-lipschitz}, $G_r$ is $1$-Lipschitz for each such $r$. These relative profiles form a dense set. Approximate any other $r$ by them and use \eqref{eq:global-lipschitz} to pass to the limit at any fixed pair of translation vectors. We obtain
\begin{equation}\label{eq:common-translation-lipschitz}
 \norm{f(x_1+v,\ldots,x_n+v)-f(x_1+w,\ldots,x_n+w)}
 \le\norm{v-w}
\end{equation}
for every profile $x$ and every $v,w\in E$. General profiles are covered by taking $r_i=x_i-x_n$ and shifting the translation variable by $x_n$.

\subsection{A bounded monotone displacement and a limit at infinity}\label{app:monotone-limit}
For a fixed profile $x$, define $F_x(v)=v-f(x_1+v,\ldots,x_n+v)$.
A map $F:E\to E$ is \emph{monotone} in the convex-analytic sense if
$\ip{F(v)-F(w)}{v-w}\ge0$ for all $v,w$; this is not coordinatewise monotonicity. By \eqref{eq:common-translation-lipschitz} and Cauchy--Schwarz,
\begin{align}
 \ip{F_x(v)-F_x(w)}{v-w}
 &=\norm{v-w}^2-\ip{f(x+v)-f(x+w)}{v-w}\notag\\
 &\ge0.\label{eq:monotone-displacement}
\end{align}
Here and below, $x+v$ means adding $v$ to every report. The displacement is also bounded: comparing with the unanimous profile $(v,\ldots,v)$ gives
\begin{equation}\label{eq:bounded-displacement}
 \norm{F_x(v)}=\norm{f(v,\ldots,v)-f(x+v)}
 \le\sum_{i=1}^n\norm{x_i}.
\end{equation}
Let $K_x=\overline{\conv}(F_x(E))$, which is compact and convex by this bound and finite dimensionality.

\begin{lemma}[An exposed direction fixes the limiting displacement]\label{lem:exposed-limit}
Suppose $u$ exposes a unique maximizing point $p_x$ of $K_x$. Then for every fixed $v\in E$, when $t\to\infty$, $F_x(tu+v)\longrightarrow p_x$.
The limit is independent of $v$.
\end{lemma}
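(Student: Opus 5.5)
The plan is to combine monotonicity of $F_x$ with the exposing property of $u$. Put $q_t=F_x(tu+v)$, which lies in the compact set $K_x$ by \eqref{eq:bounded-displacement}. It suffices to show that every limit point $q$ of $q_t$ as $t\to\infty$ satisfies $\ip{u}{q}=h_{K_x}(u)=\ip{u}{p_x}$. Uniqueness of the maximizer then forces $q=p_x$, and compactness upgrades subsequential convergence to convergence of the whole family.

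To get the support value, I apply the monotonicity inequality \eqref{eq:monotone-displacement} to the pair $tu+v$ and an arbitrary $w\in E$:
\[
 \ip{F_x(tu+v)-F_x(w)}{tu+v-w}\ge0 .
\]
Dividing by $t$ and letting $t\to\infty$ along a subsequence with $q_t\to q$ gives $\ip{q-F_x(w)}{u}\ge0$. Both $q_t$ and $F_x(w)$ are bounded, so the error terms $\ip{q_t-F_x(w)}{v-w}/t$ vanish. Hence $\ip{u}{q}\ge\ip{u}{F_x(w)}$ for every $w$. Since linear functionals attain their supremum over $\overline{\conv}$ on the generating set, this gives $\ip{u}{q}\ge h_{K_x}(u)$. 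Because $q\in K_x$ (as $K_x$ is closed), equality holds, so $q$ is a maximizer and therefore equals $p_x$.

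Independence of $v$ is automatic, since $p_x$ depends only on $x$ and $u$. The only mildly delicate point is the passage from $F_x(E)$ to its closed convex hull in the support function, which is elementary. I do not expect a real obstacle: the whole argument rests on monotonicity and boundedness of $F_x$, which are already established.
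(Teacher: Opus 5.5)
Your proposal is correct and follows the paper's proof essentially step for step: you apply monotonicity against an arbitrary point $w$, divide by $t$ along a convergent subsequence, identify every cluster point as the unique $u$-maximizer of $K_x$, and use boundedness to get convergence of the whole family. One wording nit: the supremum of $\ip{u}{\cdot}$ over $K_x$ \emph{equals} its supremum over $F_x(E)$ but need not be attained on $F_x(E)$; equality of the suprema is all your argument actually uses.
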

\begin{proof}
Fix $v,z\in E$. Monotonicity gives $\ip{F_x(tu+v)-F_x(z)}{tu+v-z}\ge0$.
Take any sequence $t\to\infty$ along which the first displacement converges to a cluster point $p$; such subsequences exist by boundedness. Divide the inequality by $t$ and pass to the limit. The terms containing $(v-z)/t$ vanish because the displacements are bounded. Hence $\ip{u}{p}\ge\ip{u}{F_x(z)}$ for $z\in E$.
Since $p\in K_x$, this says that $p$ maximizes the linear functional over $K_x$: the range has the same support supremum as its closed convex hull. Uniqueness gives $p=p_x$. Every cluster point is therefore $p_x$, so the bounded family converges to $p_x$. The argument works for every fixed offset $v$ and selects the same point.
\end{proof}

\subsection{One direction for all profiles and local uniform convergence}\label{app:common-direction}
Choose a countable dense subset $\mathcal D\subseteq E^n$. For each $x\in\mathcal D$, almost every $u\in E$ exposes a unique point of $K_x$, by the support-function fact in \cref{app:analytic-tools}. A countable union of exceptional null sets is null. Excluding also $u=0$, choose one nonzero direction that works for every $x\in\mathcal D$.

Define $f_t(x)=f(x+tu)-tu$ for $t>0$. By \cref{lem:exposed-limit}, $f_t(x)=-F_x(tu)$ converges for each $x\in\mathcal D$. Every $f_t$ has the same $1$-Lipschitz bound in the profile metric $d_1$: $\norm{f_t(x)-f_t(x')}\le d_1(x,x')$.
This promotes convergence on the dense set to locally uniform convergence. Explicitly, let $L\subseteq E^n$ be compact and fix $\delta>0$. Choose finitely many points $x^1,\ldots,x^q\in\mathcal D$ forming a $\delta$-net for $L$. For $x\in L$, choose $x^j$ with $d_1(x,x^j)<\delta$. Then for $s,t>0$,
\[
 \norm{f_t(x)-f_s(x)}
 \le2\delta+\max_{1\le j\le q}\norm{f_t(x^j)-f_s(x^j)}.
\]
The finite maximum tends to zero as $s,t\to\infty$. Since $\delta$ is arbitrary, the family is uniformly Cauchy on $L$. Completeness of $E$ gives a limit $\bar f$, uniform on every compact set. It retains the same global Lipschitz bound and is continuous.

Every $f_t$ is anonymous, unanimous, and strategyproof: it is just a translated copy of $f$. Passing to the limit in each fixed incentive inequality proves strategyproofness of $\bar f$; anonymity and unanimity pass to the limit in the same way. Thus $\bar f\in\F_{n,d}$. Translation of profiles and outcomes preserves both social cost and the optimum, so $\AR(f_t)=\AR(f)$. If this ratio is finite, pass to the limit in
$\SC(x,f_t(x))\le\AR(f)\OPT(x)$
for each fixed profile with positive optimum. Continuity of the finite sum of distances gives $\AR(\bar f)\le\AR(f)$. If $\AR(f)=\infty$, the same comparison is automatic.

\subsection{Equivariance of the limit}\label{app:limit-equivariance}
Fix $x\in\mathcal D$ and $v\in E$. The offset-independent limit in \cref{lem:exposed-limit} gives
\begin{align*}
 \bar f(x+v)
 &=\lim_{t\to\infty}\bigl(f(x+v+tu)-tu\bigr)\\
 &=v-\lim_{t\to\infty}F_x(tu+v)
 =v-p_x=\bar f(x)+v.
\end{align*}
The existence of the limit at $x+v$ is already known from local uniform convergence; $x+v$ need not belong to $\mathcal D$. Approximate an arbitrary profile by points of $\mathcal D$ and use continuity to obtain the same identity for every $x$ and every fixed $v$. Thus $\bar f$ is translation-equivariant.

By \cref{cor:automatic-homogeneity}, translation equivariance also gives positive homogeneity. Thus $\bar f$ is normalized, completing the proof of \cref{thm:normalization}.\qed


\section{Even populations}\label{app:even-populations}

The main text treats odd populations. The absorber reduction removes two
reports, so for an even population it terminates at a two-report rule rather
than at the one-report identity. This appendix supplies that additional
endpoint and verifies the four-report instance of rank lifting. The projection
and ternary-core arguments themselves are unchanged, because their statements
do not depend on the parity of the total population.

\subsection{Median convention and statements}

Let \(n=2q\ge4\). In a fixed orthonormal basis
\(\B=(e_1,\ldots,e_d)\), define the lower coordinate median by
\[
  \CM^-_{\B}(x)
  :=\sum_{j=1}^d x_{(q),j}e_j.
\]
More generally, for a fixed vector \(\tau\in\{0,1\}^d\), define
\[
  \CM^{\tau}_{\B}(x)
  :=\sum_{j=1}^d x_{(q+\tau_j),j}e_j.
\]
Thus each coordinate uses either the lower or the upper middle rank, and the
choice is fixed across profiles. We do not average the two middle values.

All these fixed choices have the same approximation ratio. Indeed, if
\(D_\tau e_j=(-1)^{\tau_j}e_j\), then reversing coordinate \(j\)
interchanges ranks \(q\) and \(q+1\), so
\begin{equation}\label{eq:even-median-reflection}
  \CM^{\tau}_{\B}(x)
  =D_\tau\CM^-_{\B}(D_\tau x_1,\ldots,D_\tau x_n).
\end{equation}
Orthogonal maps preserve all distances and biject the profile space. The ratio
is also independent of the orthonormal basis. We write $R^-_{n,d}:=\AR(\CM^-_{\B})$.

\begin{theorem}\label{thm:even-rigidity}
Let \(n\ge4\) be even, and let \(f\in\F_{n,d}\) be
translation-equivariant. If every unilateral menu of \(f\) is bounded, then
there are a single orthonormal basis \(\B\) and fixed ranks
\(k_j\in\{2,\ldots,n-1\}\) such that $f(x)=Q_{\B,k}(x)$
  for $x\in E^n$.
\end{theorem}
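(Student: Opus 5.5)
The plan is to run the odd-case argument of \cref{thm:rigidity} unchanged down to the final reduction and then add the two even-specific ingredients: a base case for two reports, and a check of rank lifting at $N=4$. As in the odd case, \cref{cor:automatic-homogeneity} makes $f$ normalized, and its bounded menus are compact. \cref{prop:absorber} then gives a normalized admissible absorber $g:E^{n-2}\to E$, and \cref{cor:absorber-boxes} shows that every unilateral menu of $g$ is an orthogonal box. None of these results used parity. The goal is to extend \cref{thm:box-characterization} to even $m\ge2$, so that it writes $g=Q_{\B,k'}$ with $k'_j\in\{1,\ldots,n-2\}$. Rank lifting then gives $f=Q_{\B,k'+1}$ with ranks in $\{2,\ldots,n-1\}$.

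The new endpoint is $m=2$: a normalized admissible $F:E^2\to E$ whose menus are boxes. Here I would use \cref{lem:coalition-cones} directly. With $K=K_1$ we have $K_1=K_1^*$, since $m-1=1$, so $K$ is a self-dual closed convex cone. Its menu at $0$ is an orthogonal box cone, and by pointedness a product of zero factors and halflines. A self-dual such cone must be a full orthant $\prod_j\R_{\ge0}\varepsilon_je_j$, because any zero factor would put the full line into the dual. Translation equivariance then gives
\[
F(a,b)=b+P_K(a-b),
\]
which is coordinatewise $\max$ or $\min$ according to $\varepsilon_j$. This is $Q_{\B,k}$ with $k_j\in\{1,2\}$ in the single frame $\B$ of $K$.

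The induction for even $m\ge4$ proceeds exactly as in \cref{thm:box-characterization}, since its dimension-reduction case is parity-free (\cref{lem:scalar} needs no parity). In the bounded case the absorber reduces $m$ to $m-2$ and \cref{prop:rank-lifting} reconstructs $F$. That proposition was stated for odd $N\ge5$, and I expect its proof to go through verbatim for even $N\ge4$. The argument fixes $N-3$ generic outside reports, identifies the core box \eqref{eq:neighbor-interval} for the ternary restriction, notes that each factor has positive length and at least one finite endpoint so that the core frame agrees with $\B$, and applies the clipping identity \cref{prop:clipping}. The only point to verify is $N=4$, where $M=2$ and there is a single outside report $z$. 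The intervals are $(-\infty,z_j]$ when $k_j=1$ and $[z_j,\infty)$ when $k_j=2$. Each has positive length and one finite endpoint, so the frame-identification step survives, and the three free agents of ranks $k_j,k_j+1,k_j+2$ exist because $k_j+2\le4$.

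I expect the main obstacle to be the $m=2$ endpoint. The cone $K$ must be exhibited as an orthant in the frame in which the menus are boxes; a self-dual cone that is an oblique box is excluded only because box faces are orthogonal. One must also confirm that the resulting frame is the one inherited at the next lifting step. That frame is determined intrinsically by the finite faces of the generic core boxes, exactly as in the odd case, so no conflict between frames should arise. Finally, for the original $f$ one needs $n-2\ge2$, which holds because $n\ge4$. The lifted ranks then lie in $\{2,\ldots,n-1\}$, as claimed.
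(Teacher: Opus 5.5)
Your proposal is correct and follows the paper's route. You normalize $f$, pass to the box-menu absorber, extend the box-menu characterization to even populations with a two-report base case, and check that same-frame rank lifting survives at $N=4$ because each core factor $(-\infty,z_j]$ or $[z_j,\infty)$ has positive length and a finite endpoint. The only difference is cosmetic and occurs at $m=2$: you obtain the orthant from the self-duality $K_1=K_1^*$ supplied by \cref{lem:coalition-cones}, whereas the paper tests the anonymity identity $P_Ka=a+P_K(-a)$ directly on $a=te_j$, and these are the same computation.
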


\begin{theorem}\label{thm:even-optimality}
For every even \(n\ge4\) and every finite \(d\ge1\), $\inf_{f\in\F_{n,d}}\AR(f)=R^-_{n,d}$.
Every fixed mechanism \(\CM^{\tau}_{\B}\) attains this infimum.
\end{theorem}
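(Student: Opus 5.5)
The statement immediately above is \cref{thm:even-optimality}. I will prove it in the same way as \cref{thm:optimality}, once \cref{thm:even-rigidity} is available.

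\textbf{Upper bound (attainment).} Each fixed mechanism $\CM^{\tau}_{\B}$ belongs to $\F_{n,d}$. By \cref{rem:converse}, every fixed-frame order-statistic rule is strategyproof, anonymous, continuous and unanimous, and the ranks $q$ and $q+1$ are admissible. By \eqref{eq:even-median-reflection}, all of these rules have ratio $R^-_{n,d}$. Hence the infimum is at most $R^-_{n,d}$ and is attained by every $\CM^\tau_\B$.

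\textbf{Lower bound.} Suppose some $f\in\F_{n,d}$ has $\AR(f)<R^-_{n,d}$. I first need a numerical bound $R^-_{n,d}<n-1$, which I do not expect to be the obstacle. For $n\ge4$ we have $n-1\ge3$. To bound $R^-_{n,d}$ crudely, note that the lower median in each coordinate lies in the coordinate box spanned by the order statistics of rank $q$ and $q+1$. A direct estimate then gives a constant ratio: for instance, the standard $\sqrt d$-free argument of Gravin and Jia, or simply a bound below $3$ obtained by comparing with the geometric median, using the fact that at least half of the agents lie on each side of every coordinate cut. If citing \cite{jia1} for even $n$ is not available, I would argue directly as follows. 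For any $y$ in the median box $\prod_j[x_{(q),j},x_{(q+1),j}]$, one has $\SC(x,y)\le 3\OPT(x)$. Alternatively, I would reduce to the odd case by duplicating agents. This is the step I would check most carefully, although an even cruder bound strictly below $3$ suffices.

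With that bound in hand, \cref{thm:normalization} gives a normalized $\bar f\in\F_{n,d}$ with $\AR(\bar f)<R^-_{n,d}<n-1$. \cref{lem:approx-menu} then makes every unilateral menu of $\bar f$ compact. \cref{thm:even-rigidity} then gives $\bar f=Q_{\B,k}$ with interior ranks.

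\textbf{Reflection step.} Adapt \cref{lem:reflection}. Reflecting coordinate $j$ sends rank $k_j$ to $n+1-k_j$. Since $k_j+(n+1-k_j)=n+1=2q+1$, one of $k_j$ and $n+1-k_j$ is at most $q$ and the other is at least $q+1$. Hence the interval between the two selected values in coordinate $j$ contains both $x_{(q),j}$ and $x_{(q+1),j}$, and in particular $x_{(q),j}$. Therefore $\CM^-_\B(x)\in\conv\{Q^\eps(x)\}$ for every profile $x$. Convexity of $\SC(x,\cdot)$ then gives
\[
\SC(x,\CM^-_\B(x))\le\max_\eps\SC(x,Q^\eps(x))\le\AR(\bar f)\OPT(x).
\]
This yields $R^-_{n,d}\le\AR(\bar f)$, a contradiction. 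The main obstacle is thus not this argument but \cref{thm:even-rigidity} itself, which is assumed here, together with securing $R^-_{n,d}<n-1$ for every even $n\ge4$.
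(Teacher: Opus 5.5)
Your proposal follows the paper's proof: attainment via admissibility of $\CM^\tau_{\B}$ and \eqref{eq:even-median-reflection}, and the lower bound via \cref{thm:normalization}, \cref{lem:approx-menu}, \cref{thm:even-rigidity}, and the even reflection comparison \cref{lem:even-reflection}, where the paper takes the numerical input $R^-_{n,d}\le\rho<2$ from Gravin--Jia's consistent tie-breaking remark. Your geometric-median fallback also works: with $g$ a geometric median and $y$ in the middle box, summing over the coordinate cuts and applying Cauchy--Schwarz gives $n\norm{y-g}\le 2\OPT(x)$, hence $\SC(x,y)\le 3\OPT(x)$; and the non-strict bound $R^-_{n,d}\le n-1$ is already enough, because the strict inequality $\AR(\bar f)<n-1$ comes from $\AR(f)<R^-_{n,d}$.
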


The even case is not obtained by applying the odd-population theorem to a
modified profile. The new issue is the endpoint of the population induction.
We first classify that endpoint and then extend the same-frame reconstruction
to four reports.

\subsection{The two-report endpoint}

When two reports remain, unanimity alone does not determine the mechanism.
The induction does, however, retain the fact that every unilateral menu is an
orthogonal box. Under that hypothesis, a normalized two-report mechanism must
select a minimum or a maximum in each coordinate of one fixed frame.

\begin{lemma}\label{lem:even-binary-box}
Let \(F\in\F_{2,d}\) be normalized, and suppose that every unilateral
menu of \(F\) is an orthogonal box. Then there is an orthonormal basis
\(\B=(e_1,\ldots,e_d)\) such that, for each coordinate \(j\), either $\langle e_j,F(a,b)\rangle=\min\{a_j,b_j\}$
for all \(a,b\in E\), or $\langle e_j,F(a,b)\rangle=\max\{a_j,b_j\}$
for all \(a,b\in E\). The choice is fixed in each coordinate.
\end{lemma}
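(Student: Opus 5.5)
The plan is to exploit the structure already available for normalized two-report rules: the coalition cones of \cref{lem:coalition-cones} with \(m=2\). The unilateral menu at outside report \(0\) is the cone \(K=K_1=\range(a\mapsto F(a,0))\). Duality gives \(K_{2-1}=K_1^*\), so \(K=K^*\): the cone is self-dual. By hypothesis \(K\) is also an orthogonal box, and it contains \(0\). Each interval factor in its frame \(\B\) is therefore \(\{0\}\), a coordinate halfline, or the full line. Self-duality excludes both \(\{0\}\) and the full line, since the dual of a \(\{0\}\) factor is a full-line factor and conversely. Hence \(K=\prod_j \sigma_j[0,\infty)\) with signs \(\sigma_j\in\{\pm1\}\), an orthant in the frame \(\B\). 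Projection onto it gives \(F(a,0)_j=\sigma_j\max\{\sigma_j a_j,0\}\). This is exactly \(\max\{a_j,0\}\) when \(\sigma_j=+1\) and \(\min\{a_j,0\}\) when \(\sigma_j=-1\).

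Translation equivariance then gives the global formula. We have \(F(a,b)=b+F(a-b,0)\), and coordinatewise \(b_j+\max\{a_j-b_j,0\}=\max\{a_j,b_j\}\), with the analogous identity for the minimum. So in the fixed frame \(\B\), coordinate \(j\) is the maximum if \(\sigma_j=+1\) and the minimum otherwise, and this choice does not depend on the profile. Anonymity is consistent with this formula and is not needed beyond defining \(K\) for a single agent.

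The only delicate point is the claim that the dual of a box cone is the box cone with each factor replaced by its dual. This holds because \(K^*\) of a product cone in an orthonormal frame is the product of the one-dimensional duals: \(\{0\}^*=\R\), \([0,\infty)^*=[0,\infty)\), and \(\R^*=\{0\}\). A second subtlety is that the box frame of \(K\) might not be unique when full-line factors are present. After self-duality forces an orthant, however, every factor is a halfline, so the frame is determined up to signs and permutation. The resulting rule is therefore well defined. The box hypothesis on the other menus is not needed beyond the menu at zero. I expect this argument to close quickly, and I expect the main obstacle to be checking the sign convention in the Moreau identity, which \cref{lem:coalition-cones} already supplies.
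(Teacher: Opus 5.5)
Your argument is correct and is essentially the paper's. The paper substitutes $a=te_j$ directly into the identity $P_Ka=a+P_K(-a)$, which comes from translation equivariance plus anonymity, to exclude $\{0\}$ and full-line factors; you pass the same identity through the self-duality $K=K^*$ of \cref{lem:coalition-cones} and compute the dual factor by factor.

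One correction to your closing remark: anonymity is doing real work. It is what identifies the second agent's cone with $K$, and so turns the Moreau identity into $K=K^*$. The dictatorship $F(a,b)=a$ satisfies every other hypothesis and is not a coordinatewise min/max rule.
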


\begin{proof}
Fix the second report at the origin and set $K:=\{F(a,0):a\in E\}$.
By \cref{lem:projection}, \(F(a,0)=P_Ka\). Translation equivariance
then gives
\begin{equation}\label{eq:even-binary-translation}
  F(a,b)=b+F(a-b,0)=b+P_K(a-b).
\end{equation}
Thus the entire mechanism is determined by the single menu \(K\).

Unanimity gives \(0\in K\), and positive homogeneity makes \(K\) a
cone. By hypothesis, \(K\) is also an orthogonal box. Choose a box frame
\(\B\) and write $K=\left\{\sum_{j=1}^d t_je_j:t_j\in J_j\right\}$.
Each \(J_j\) is a nonempty closed interval containing zero and invariant
under multiplication by positive scalars. Hence
$J_j\in\bigl\{\{0\},\mathbb R,[0,\infty),(-\infty,0]\bigr\}$.
Anonymity excludes the first two possibilities. Indeed,
\(F(a,0)=F(0,a)\), and \eqref{eq:even-binary-translation} gives
\begin{equation}\label{eq:even-binary-symmetry}
  P_Ka=a+P_K(-a).
\end{equation}
Take \(a=t e_j\) with \(t>0\). If \(J_j=\{0\}\), the two sides of
\eqref{eq:even-binary-symmetry} have \(j\)-coordinates \(0\) and
\(t\). If \(J_j=\mathbb R\), they have coordinates \(t\) and \(0\).
Both are impossible. Thus every factor is a halfline.

If \(J_j=[0,\infty)\), then the \(j\)-th coordinate in
\eqref{eq:even-binary-translation} is $b_j+\max\{a_j-b_j,0\}=\max\{a_j,b_j\}$.
If \(J_j=(-\infty,0]\), it is \(\min\{a_j,b_j\}\). The frame and
the coordinate choices are fixed because they come from the one fixed menu
\(K\).
\end{proof}

The menus in \cref{lem:even-binary-box} need not be compact. Coordinatewise
minimum and maximum have unbounded menus. This is compatible with the
induction: the absorber inherits box shape, not boundedness.

\subsection{Rank lifting with four reports}

The proof of \cref{prop:rank-lifting} requires a separate statement at
\(N=4\), because its current scope begins at \(N=5\). The same clipping
argument works once the two-report absorber has been classified.

\begin{proposition}\label{prop:even-rank-lifting}
Let \(N\ge4\), and let \(f:E^N\to E\) be anonymous, continuous, and
strategyproof. Suppose that \(g:E^{N-2}\to E\) satisfies
$f(a,g(z),z)=g(z)$ $(a\in E,\ z\in E^{N-2})$,
and that, in one fixed orthonormal basis \(\B\), for $1\le k_j\le N-2$, $\langle e_j,g(z)\rangle=z_{(k_j),j}$.
Then, in the same basis,
\[
  \langle e_j,f(x)\rangle=x_{(k_j+1),j}
  \qquad(x\in E^N,\ j=1,\ldots,d).
\]
\end{proposition}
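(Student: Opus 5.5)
The plan is to repeat the proof of \cref{prop:rank-lifting} verbatim, and to check that its only parity- or size-dependent ingredients survive when $N$ is even or $N=4$. Put $M=N-2\ge2$ and fix $M-1=N-3\ge1$ outside reports $z$ whose coordinates are pairwise distinct in each coordinate of $\B$. Consider the ternary restriction $H(a,b,c)=f(a,b,c,z)$.

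By the absorbing identity, $C_z=\{g(c,z):c\in E\}$ is a core of $H$. The same order-statistic computation as before shows that the $j$th coordinate of $c\mapsto g(c,z)$ clamps to $[z_{(k_j-1),j},z_{(k_j),j}]$, with the infinite conventions at the two ends. Hence $C_z$ is a full-dimensional product of these intervals. \Cref{thm:core-box} then gives $H|_{C_z^3}=\CM_{\B'}$ in some frame $\B'$.

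The one step that needs fresh attention is the identification of $\B'$ with $\B$, and the only new case is $N=4$. There the outside profile has a single report ($M-1=1$), and $k_j\in\{1,2\}$. Each interval is therefore $(-\infty,z_{1,j}]$ or $[z_{1,j},\infty)$: a halfline, hence of positive length with exactly one finite endpoint. That is precisely the property used before. Every coordinate direction carries a $(d-1)$-dimensional boundary face, the box is a translated orthant, and its face normals are intrinsic. Any other orthogonal-box representation must therefore use the axes $\pm e_j$. A permutation and sign change does not alter $\CM$, so \cref{prop:clipping} applies in the prescribed frame $\B$. For general $N\ge5$, including even $N$, the intervals again have positive length and at least one finite endpoint, so the earlier argument is unchanged. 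I expect this frame-identification check at $N=4$ to be the main obstacle. It works here only because the outside profile is nonempty, in contrast with the three-agent situation of \cref{rem:three-agent-frame}.

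Next take a full profile $x$ with pairwise distinct coordinates and fix $j$. Let the agents of $j$-ranks $k_j,k_j+1,k_j+2$ be variable; this is possible since $k_j+2\le N$. Take the remaining $N-3$ reports as $z$. The core interval in coordinate $j$ is bounded by the full-profile values of ranks $k_j-1$ and $k_j+3$, with either endpoint infinite if absent. All three variable values lie strictly inside, so their median $x_{(k_j+1),j}$ is interior. \Cref{prop:clipping} gives $(P_{C_z}f(x))_j=x_{(k_j+1),j}$. Since clamping can return an interior point only if it is left unchanged, this yields $f_j(x)=x_{(k_j+1),j}$.

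Finally, do this for every $j$. Density of profiles with distinct coordinates, together with continuity of $f$ and of order statistics, extends the formula to all $x\in E^N$.
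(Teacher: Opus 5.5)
Your proposal is correct and follows the paper's proof essentially step for step. The shared steps are: the same product core $C_z$, where the clamping formula gives $g(\cdot,z)=P_{C_z}$, which is what the core identity needs; the same face-normal argument identifying the core-box frame with $\B$ up to signs and permutations, with $N=4$ handled by noting that every factor is a halfline; the same choice of the agents of $j$-ranks $k_j,k_j+1,k_j+2$ as the variable reports; and the same clipping, interior-point, and density conclusion.
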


\begin{proof}
Put \(M=N-2\), so \(M\ge2\). Fix \(M-1=N-3\) outside reports
\(z\) whose values are distinct in every coordinate of \(\B\). When
\(N=4\), the outside profile has one report and this condition is vacuous.
Define $H_z(a,b,c):=f(a,b,c,z)$.
In coordinate \(j\), inserting \(c_j\) among the fixed outside values
shows that \(g(c,z)_j\) is projection of \(c_j\) onto
\[
  J_j(z):=[z_{(k_j-1),j},z_{(k_j),j}],
  \qquad z_{(0),j}=-\infty,\quad z_{(M),j}=+\infty.
\]
Therefore
\[
  C_z:=\range(c\mapsto g(c,z))
      =\prod_{j=1}^d J_j(z)
\]
in frame \(\B\), and \(g(c,z)=P_{C_z}c\). The absorbing identity
becomes $H_z(a,P_{C_z}c,c)=P_{C_z}c$,
so \(C_z\) is a core of \(H_z\).

Every factor \(J_j(z)\) has positive length and at least one finite
endpoint. Positive length follows from the distinct outside coordinates. Both
endpoints could be infinite only if \(k_j=1=M\), which is impossible because
\(M\ge2\). Thus \(C_z\) is full-dimensional and has a genuine face
perpendicular to every axis of \(\B\).

By \cref{thm:core-box}, the restriction of \(H_z\) to \(C_z^3\) is
coordinate-wise median in some box frame. The geometric faces just identified
force that frame to agree with \(\B\) up to signs and permutation: at a
relative interior point of a finite face, the unique outward normal is a signed
coordinate vector, and these normals recover every axis. Signs and permutations
do not change the coordinate-wise median. Hence \cref{prop:clipping} applies in
frame \(\B\):
\begin{equation}\label{eq:even-clipping}
  P_{C_z}H_z(a,b,c)
  =\CM_{\B}(P_{C_z}a,P_{C_z}b,P_{C_z}c).
\end{equation}
Here \(\CM_{\B}\) is the median of three reports.

Now take a profile \(x\in E^N\) whose values are distinct in every
coordinate of \(\B\), and fix coordinate \(j\). By anonymity, leave as
the three variable reports the agents with \(j\)-ranks $k_j, k_j+1$ and $k_j+2$.
All other agents form the outside profile \(z\). In coordinate \(j\),
the lower endpoint of \(J_j(z)\) is \(x_{(k_j-1),j}\) when \(k_j>1\),
and is \(-\infty\) otherwise. The upper endpoint is
\(x_{(k_j+3),j}\) when \(k_j<M\), and is \(+\infty\) otherwise.
The three selected values lie strictly inside this interval, and their median is
\(x_{(k_j+1),j}\). Equation \eqref{eq:even-clipping} gives
$\langle e_j,P_{C_z}f(x)\rangle=x_{(k_j+1),j}$.
Projection onto an interval can equal a strictly interior point only if the
unprojected scalar already equals that point. Hence
$\langle e_j,f(x)\rangle=x_{(k_j+1),j}$.

When \(N=4\), the core factor is \(( -\infty,w_j]\) for \(k_j=1\)
and \([w_j,\infty)\) for \(k_j=2\). The argument respectively
recovers rank \(2\) or rank \(3\), so no additional case is hidden.
Repeating the proof for every coordinate determines \(f\) on profiles with
distinct coordinate values. Density and continuity extend the formula to all
profiles.
\end{proof}

\subsection{The even population-and-dimension induction}

\begin{theorem}\label{thm:even-box-characterization}
Let \(m\ge2\) be even. Suppose \(F\in\F_{m,d}\) is normalized and
every unilateral menu is an orthogonal box, with a frame that may initially
depend on the menu. Then $F=Q_{\B,k}$
for a single orthonormal basis \(\B\) and fixed ranks
\(k_j\in\{1,\ldots,m\}\).
\end{theorem}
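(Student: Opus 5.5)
I would mirror the proof of \cref{thm:box-characterization}, inducting lexicographically on \((m,d)\): first on the even population \(m\), then on the spatial dimension \(d\) for fixed \(m\). The base case \(m=2\) is exactly \cref{lem:even-binary-box}. In that lemma each coordinate takes the minimum or the maximum in one fixed frame, which is \(Q_{\B,k}\) with \(k_j\in\{1,2\}\). For \(d=1\) and any \(m\), \cref{lem:scalar} gives a fixed order statistic directly. So let \(m\ge4\) and \(d\ge2\), and let \(K\) be the common recession cone of all unilateral menus supplied by \cref{lem:common-cone}.

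\textbf{Case 1: \(K=\{0\}\).} I would repeat the argument of Case 1 of \cref{thm:box-characterization} verbatim. Outcome replacement and \eqref{eq:global-lipschitz} show that every menu is bounded, hence compact. Then \cref{prop:absorber} produces a normalized admissible absorber \(G:E^{m-2}\to E\), and \cref{cor:absorber-boxes} shows that every unilateral menu of \(G\) is an orthogonal box. None of these steps uses parity. Since \(m-2\ge2\) is even, the outer induction hypothesis gives \(G=Q_{\B,k'}\) in a single frame with \(k'_j\in\{1,\ldots,m-2\}\). Then \cref{prop:even-rank-lifting} applies for every \(N=m\ge4\), including \(N=4\). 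It yields \(F=Q_{\B,k'+1}\) in the same frame, with ranks in \(\{2,\ldots,m-1\}\subseteq\{1,\ldots,m\}\).

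\textbf{Case 2: \(K\ne\{0\}\).} By \cref{lem:coalition-cones}, \(K\) is pointed, and it is a box because it is the menu at the zero outside profile. Choose an extreme ray \(\R_{\ge0}e\) of \(K\). As in the odd proof, \(e\) is a coordinate axis of every box menu. The menus then split as \(Je+D\), and so does each unilateral projection. Changing agents' components one at a time gives the global splitting \eqref{eq:global-splitting} into a scalar rule \(s\) and a rule \(F_\perp\) on \(e^\perp\). Both factors inherit normalization, admissibility, and box menus. \cref{lem:scalar} identifies \(s\) as a fixed order statistic, and the inner induction on dimension \(d-1\) identifies \(F_\perp\). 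Adjoining \(e\) to the frame of \(F_\perp\) finishes this case.

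The only genuinely parity-sensitive point is the endpoint and the \(N=4\) lifting, and both are already packaged in \cref{lem:even-binary-box} and \cref{prop:even-rank-lifting}. The step I expect to need care is therefore confirming that the odd Case~2 argument never used oddness. In particular, the extreme-ray claim, that each menu's recession cone equals \(K\) and that extreme rays of a pointed box cone are coordinate halflines of every box representation, is parity-free. I would also check that \cref{lem:coalition-cones} holds for \(m\ge2\) regardless of parity, which it does as stated. With that checked, the induction closes.
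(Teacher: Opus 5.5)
Your proposal is correct and follows the paper's proof essentially step for step. It uses the same lexicographic induction on $(m,d)$ with \cref{lem:even-binary-box} and \cref{lem:scalar} as base cases, a bounded branch that passes to the normalized box-menu absorber and lifts ranks by \cref{prop:even-rank-lifting} (including $N=4$), and an unbounded branch that splits off a shared extreme recession direction and applies the dimension induction to $F_\perp$, which is exactly the paper's argument.
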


\begin{proof}
Induct lexicographically on \((m,d)\): first over the even populations and,
for fixed \(m\), over the dimension. The case \(m=2\) is
\cref{lem:even-binary-box}; the case \(d=1\) is \cref{lem:scalar}, whose
statement does not use parity. Assume \(m\ge4\) and \(d\ge2\), and let
\(K\) be the common recession cone from \cref{lem:common-cone}.

\emph{Bounded branch: \(K=\{0\}\).}
If \(y\) is attainable with outside reports \(z\), outcome replacement gives
\(F(y,z)=y\), while \(F(y,0^{m-1})=0\) because the latter menu has range
\(K\). Hence \eqref{eq:global-lipschitz} gives $\|y\|\le\sum_i\|z_i\|$.
Every menu is closed and bounded, hence compact. By \cref{prop:absorber},
\(F\) has a normalized absorber \(G\in\F_{m-2,d}\), and
\cref{cor:absorber-boxes} makes every menu of \(G\) an orthogonal box. The
population induction characterizes \(G\) in one frame, including the base
case \(m-2=2\). Then \cref{prop:even-rank-lifting} reconstructs \(F\) in
that same frame.

\emph{Unbounded branch: \(K\ne\{0\}\).}
By \cref{lem:coalition-cones}, \(K\) is pointed. Since it is itself an
orthogonal-box menu, it is a product of zero factors and coordinate halflines.
Choose an extreme ray \(\mathbb R_{\ge0}e\) of \(K\), with \(\|e\|=1\).
Every unilateral menu has recession cone \(K\). In any orthogonal-box
representation, the extreme rays of a pointed recession cone are exactly the
unbounded coordinate directions. Thus \(e\) is an axis of every menu, and
each menu splits orthogonally as
\[
  C=Je+D,
  \qquad D\subseteq e^\perp,
\]
with projection $P_C(te+v)=P_J(t)e+P_Dv$.

Changing one agent's perpendicular input component leaves the scalar output
along \(e\) unchanged; changing reports one at a time shows that the scalar
output depends only on the scalar input tuple. The symmetric argument shows
that the perpendicular output depends only on the perpendicular input tuple.
Hence there are maps \(s:\mathbb R^m\to\mathbb R\) and
\(F_\perp:(e^\perp)^m\to e^\perp\) such that
\begin{equation}\label{eq:even-global-splitting}
  F(x_1,\ldots,x_m)
  =s(\langle e,x_1\rangle,\ldots,\langle e,x_m\rangle)e
   +F_\perp(P_{e^\perp}x_1,\ldots,P_{e^\perp}x_m).
\end{equation}
Both factors inherit continuity, anonymity, unanimity, normalization, and
strategyproofness. The unilateral menus of \(F_\perp\) are the perpendicular
box factors of the original menus. By \cref{lem:scalar}, \(s\) is a fixed
scalar order statistic, and the dimension induction characterizes
\(F_\perp\) in one frame of \(e^\perp\). Adjoining \(e\) proves the claim.
\end{proof}

\begin{proof}[Proof of \cref{thm:even-rigidity}]
By \cref{cor:automatic-homogeneity}, \(f\) is positively homogeneous and
therefore normalized. Its unilateral menus are closed by
\cref{lem:projection} and bounded by assumption, hence compact.
\Cref{prop:absorber} constructs a normalized absorber
\(g\in\F_{n-2,d}\), and \cref{cor:absorber-boxes} shows that its menus
are orthogonal boxes. Since \(n-2\ge2\) is even,
\cref{thm:even-box-characterization} writes \(g\) as a fixed-rank rule in
one frame. Applying \cref{prop:even-rank-lifting} raises every rank by one,
so all ranks of \(f\) lie in \(\{2,\ldots,n-1\}\).
\end{proof}

\subsection{Approximation optimality}

The normalization theorem and the menu estimate are parity-independent; see
\cref{thm:normalization,lem:approx-menu}. The only remaining change is the
reflection comparison, because an even population has two middle ranks. This
is the even-rank version of the reflection argument of Goel and
Hann-Caruthers~\cite[Lemma~6]{GoelHC23}.

\begin{lemma}\label{lem:even-reflection}
Let \(n=2q\ge4\). For every fixed-frame rule \(Q_{\B,k}\),
$\AR(Q_{\B,k})\ge R^-_{n,d}$.
\end{lemma}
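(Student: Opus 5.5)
The plan is to adapt the reflection argument of \cref{lem:reflection} to even populations, where the convex-hull target becomes a fixed-choice middle-rank rule $\CM^{\tau}_{\B}$ rather than a unique median. By \eqref{eq:even-median-reflection} every $\CM^{\tau}_{\B}$ has ratio $R^-_{n,d}$, so it suffices to produce, at each profile $x$, some $\tau$ for which $\CM^{\tau}_{\B}(x)$ is a convex combination of reflected outputs of $Q_{\B,k}$.

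\emph{Reflected rules.} For $\eps\in\{-1,1\}^d$ set $Q^\eps(x)=D_\eps Q_{\B,k}(D_\eps x_1,\ldots,D_\eps x_n)$ with $D_\eps e_j=\eps_je_j$. Since $D_\eps$ is orthogonal and bijects profiles, $\AR(Q^\eps)=\AR(Q_{\B,k})$. Reversing coordinate $j$ replaces rank $k_j$ by $n+1-k_j$, so in coordinate $j$ the reflected outputs take exactly the two values $x_{(k_j),j}$ and $x_{(n+1-k_j),j}$. Because all sign patterns $\eps$ occur, the set $\{Q^\eps(x)\}$ is the full set of corners of the coordinate box whose $j$-th factor is the interval between these two values.

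\emph{Middle rank in the box.} Both $k_j$ and $n+1-k_j$ sum to $n+1=2q+1$, so one of them is at most $q$ and the other is at least $q+1$. The closed interval between $x_{(k_j),j}$ and $x_{(n+1-k_j),j}$ therefore contains both $x_{(q),j}$ and $x_{(q+1),j}$. Hence, for any fixed $\tau$, and in particular for $\tau=0$, the point $\CM^{\tau}_{\B}(x)$ lies in the box and thus in $\conv\{Q^\eps(x)\}$. No profile-dependent choice of $\tau$ is needed; $\CM^-_{\B}$ works uniformly.

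\emph{Conclusion.} Since $y\mapsto\SC(x,y)$ is convex, for $\OPT(x)>0$ we get
\[
\SC(x,\CM^-_{\B}(x))\le\max_\eps\SC(x,Q^\eps(x))\le\AR(Q_{\B,k})\OPT(x).
\]
Profiles with $\OPT(x)=0$ are unanimous, and there both rules return the common point, so the bound holds trivially. Taking the supremum over profiles gives $R^-_{n,d}\le\AR(Q_{\B,k})$.

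The only step needing care is the rank-interval check. The pairing $k_j\leftrightarrow n+1-k_j$ straddles the two middle ranks $q$ and $q+1$, including the degenerate case where the two ranks coincide. That cannot happen here, since $n+1$ is odd and so $k_j\ne n+1-k_j$.
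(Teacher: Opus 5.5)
Your proposal is correct and is essentially the paper's proof: you reflect $Q_{\B,k}$ by all $2^d$ coordinate sign patterns, observe that the ranks $k_j$ and $n+1-k_j$ straddle $q$ and $q+1$ so that $\CM^-_{\B}(x)$ lies in the box whose corners are the reflected outputs, and conclude by convexity of $\SC(x,\cdot)$. One small caveat is that your opening claim that a profile-dependent choice of $\tau$ would suffice is not justified as stated, because a good $\tau(x)$ at each profile does not bound the ratio of any single $\CM^{\tau}_{\B}$; this is harmless, since you then show that $\tau=0$ works at every profile.
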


\begin{proof}
For \(\varepsilon\in\{-1,1\}^d\), let
\(D_\varepsilon e_j=\varepsilon_je_j\), and define $Q^\varepsilon(x)
  =D_\varepsilon Q_{\B,k}(D_\varepsilon x_1,\ldots,D_\varepsilon x_n)$.
Each reflected rule has the same approximation ratio as \(Q_{\B,k}\). In
coordinate \(j\), it selects rank \(k_j\) or \(n+1-k_j\). Since
\(n=2q\), $\min\{k_j,n+1-k_j\}
  \le q<q+1
  \le\max\{k_j,n+1-k_j\}$.
Thus the lower-middle coordinate value lies between the two reflected selected
values in every coordinate. Consequently, $\CM^-_{\B}(x)
  \in\conv\{Q^\varepsilon(x):\varepsilon\in\{-1,1\}^d\}$.
Convexity of \(y\mapsto\SC(x,y)\) gives, for \(\OPT(x)>0\),
\[
  \SC(x,\CM^-_{\B}(x))
  \le\max_\varepsilon\SC(x,Q^\varepsilon(x))
  \le\AR(Q_{\B,k})\OPT(x).
\]
Taking the supremum over profiles proves the lemma.
\end{proof}

Gravin and Jia allow the tie between the two middle values to be broken
consistently in either direction in each coordinate
\cite[Section~2, footnote~4]{jia1}. Their Euclidean estimate therefore gives
\cite[Theorem~1, spatial-norm exponent $2$]{jia1}
\begin{equation}\label{eq:even-median-bound}
  R^-_{n,d}\le\rho:=\sqrt{6\sqrt3-8}<2.
\end{equation}

\begin{proof}[Proof of \cref{thm:even-optimality}]
Suppose that \(f\in\F_{n,d}\) satisfies \(\AR(f)<R^-_{n,d}\). By
\cref{thm:normalization}, choose a normalized \(\bar f\in\F_{n,d}\) with $\AR(\bar f)\le\AR(f)<R^-_{n,d}\le\rho<2<n-1$,
where the final inequality uses \(n\ge4\). By \cref{lem:approx-menu}, every
unilateral menu of \(\bar f\) is compact. By \cref{thm:even-rigidity},
\(\bar f=Q_{\B',k}\) for one frame and fixed interior ranks. Then
\cref{lem:even-reflection} gives \(\AR(\bar f)\ge R^-_{n,d}\), a
contradiction.

Finally, every fixed \(\CM^\tau_{\B}\) belongs to \(\F_{n,d}\). Its
unilateral response in coordinate \(j\) is projection onto the interval
between the two neighboring outside order statistics, and the full response is
projection onto their product box. Thus it is strategyproof; continuity,
anonymity, and unanimity are immediate. Equation
\eqref{eq:even-median-reflection} shows that all fixed choices \(\tau\) have
ratio \(R^-_{n,d}\), so they attain the lower bound.
\end{proof}

The extension begins at \(n=4\). The binary lemma is an induction endpoint for
mechanisms whose menus are already known to be boxes; it is not an optimality
theorem for arbitrary two-agent mechanisms. The approximation menu estimate
would require a ratio strictly below \(n-1=1\) when \(n=2\), so the present
argument does not settle that separate case.

\printbibliography[heading=bibintoc]

@article{FilimonovM22,
author = {Filimonov, Alina and Meir, Reshef},
title = {Strategyproof facility location mechanisms on discrete trees},
year = {2022},
issue_date = {Jun 2023},
publisher = {Kluwer Academic Publishers},
address = {USA},
volume = {37},
number = {1},
issn = {1387-2532},
url = {https://doi.org/10.1007/s10458-022-09592-4},
doi = {10.1007/s10458-022-09592-4},
journal = {Autonomous Agents and Multi-Agent Systems},
month = dec,
numpages = {29}
}

@article{ChenGI24,
  title={Strategic Facility Location via Predictions},
  author={Chen, Qingyun and Gravin, Nick and Im, Sungjin},
  journal={arXiv preprint arXiv:2410.07497},
  year={2024}
}

@article{Satterthwaite75,
  author = {Satterthwaite, Mark. A.},
  description = {JMM master bibtex},
  journal = {Journal of Economic Theory},
  pages = {187-217},
  title = {Strategy-proofness and Arrow's conditions: Existence and
                  correspondence theorems for voting procedures and social welfare
                  functions},
  volume = 10,
  year = 1975
}

@article{Black48,
 ISSN = {00223808, 1537534X},
 URL = {http://www.jstor.org/stable/1825026},
 author = {Duncan Black},
 journal = {Journal of Political Economy},
 number = {1},
 pages = {23--34},
 publisher = {University of Chicago Press},
 title = {On the Rationale of Group Decision-making},
 urldate = {2024-07-05},
 volume = {56},
 year = {1948}
}

@inproceedings{DokowFMN12,
  author       = {Elad Dokow and
                  Michal Feldman and
                  Reshef Meir and
                  Ilan Nehama},
  title        = {Mechanism design on discrete lines and cycles},
  booktitle    = {Proceedings of the 13th {ACM} Conference on Electronic Commerce, {EC}
                  2012},
  pages        = {423--440},
  publisher    = {{ACM}},
  year         = {2012}
}

@inproceedings{TangYZ20,
  author       = {Pingzhong Tang and
                  Dingli Yu and
                  Shengyu Zhao},
  editor       = {P{\'{e}}ter Bir{\'{o}} and
                  Jason D. Hartline and
                  Michael Ostrovsky and
                  Ariel D. Procaccia},
  title        = {Characterization of Group-strategyproof Mechanisms for Facility Location
                  in Strictly Convex Space},
  booktitle    = {{EC} '20: The 21st {ACM} Conference on Economics and Computation,
                  Virtual Event, Hungary, July 13-17, 2020},
  pages        = {133--157},
  publisher    = {{ACM}},
  year         = {2020}
}

@article{AlonFPT10,
  author       = {Noga Alon and
                  Michal Feldman and
                  Ariel D. Procaccia and
                  Moshe Tennenholtz},
  title        = {Strategyproof Approximation of the Minimax on Networks},
  journal      = {Mathematics of Operations Research},
  volume       = {35},
  number       = {3},
  pages        = {513--526},
  year         = {2010}
}

@inproceedings{AzizLSW22,
  author       = {Haris Aziz and
                  Alexander Lam and
                  Mashbat Suzuki and
                  Toby Walsh},
  editor       = {Sanmi Koyejo and
                  S. Mohamed and
                  A. Agarwal and
                  Danielle Belgrave and
                  K. Cho and
                  A. Oh},
  title        = {Random Rank: The One and Only Strategyproof and Proportionally Fair
                  Randomized Facility Location Mechanism},
  booktitle    = {Advances in Neural Information Processing Systems 35: Annual Conference
                  on Neural Information Processing Systems 2022, NeurIPS 2022, New Orleans,
                  LA, USA, November 28 - December 9, 2022},
  year         = {2022}
}

@article{SerafinoV16,
  title={Heterogeneous facility location without money},
  author={Serafino, Paolo and Ventre, Carmine},
  journal={Theoretical Computer Science},
  volume={636},
  pages={27--46},
  year={2016},
  publisher={Elsevier}
}

@article{FotakisT14,
  author       = {Dimitris Fotakis and
                  Christos Tzamos},
  title        = {On the Power of Deterministic Mechanisms for Facility Location Games},
  journal      = {ACM Transactions on Economics and Computation (TEAC)},
  volume       = {2},
  number       = {4},
  pages        = {15:1--15:37},
  year         = {2014}
}

@article{FotakisT16,
  author       = {Dimitris Fotakis and
                  Christos Tzamos},
  title        = {Strategyproof Facility Location for Concave Cost Functions},
  journal      = {Algorithmica},
  volume       = {76},
  number       = {1},
  pages        = {143--167},
  year         = {2016}
}

@inproceedings{EscoffierGTPS11,
  author       = {Bruno Escoffier and
                  Laurent Gourv{\`{e}}s and
                  Kim Thang Nguyen and
                  Fanny Pascual and
                  Olivier Spanjaard},
  title        = {Strategy-Proof Mechanisms for Facility Location Games with Many Facilities},
  booktitle    = {Algorithmic Decision Theory - Second International Conference, {ADT}
                  2011},
  series       = {Lecture Notes in Computer Science},
  volume       = {6992},
  pages        = {67--81},
  publisher    = {Springer},
  year         = {2011}
}

@inproceedings{LuSWZ10,
  author       = {Pinyan Lu and
                  Xiaorui Sun and
                  Yajun Wang and
                  Zeyuan Allen Zhu},
  title        = {Asymptotically optimal strategy-proof mechanisms for two-facility
                  games},
  booktitle    = {Proceedings 11th {ACM} Conference on Electronic Commerce (EC-2010)},
  pages        = {315--324},
  publisher    = {{ACM}},
  year         = {2010}
}

@inproceedings{FeldmanW13,
  author       = {Michal Feldman and
                  Yoav Wilf},
  title        = {Strategyproof facility location and the least  squares objective},
  booktitle    = {Proceedings of the fourteenth {ACM} Conference on Electronic Commerce,
                  {EC} 2013},
  pages        = {873--890},
  publisher    = {{ACM}},
  year         = {2013}
}

@inproceedings{procaccia2013approximate,
author = {Procaccia, Ariel D. and Tennenholtz, Moshe},
title = {Approximate mechanism design without money},
year = {2009},
isbn = {9781605584584},
publisher = {ACM},
url = {https://doi.org/10.1145/1566374.1566401},
doi = {10.1145/1566374.1566401},
booktitle = {Proceedings of the 10th ACM Conference on Electronic Commerce},
pages = {177–186},
numpages = {10},
series = {EC '09}
}

@inproceedings{AgrawalBGTX22, author = {Agrawal, Priyank and Balkanski, Eric and Gkatzelis, Vasilis and Ou, Tingting and Tan, Xizhi}, title = {Learning-Augmented Mechanism Design: Leveraging Predictions for Facility Location}, year = {2022}, isbn = {9781450391504}, publisher = {Association for Computing Machinery}, url = {https://doi.org/10.1145/3490486.3538306}, doi = {10.1145/3490486.3538306}, booktitle = {Proceedings of the 23rd ACM Conference on Economics and Computation}, pages = {497–528}, numpages = {32}, series = {EC '22} }

@inproceedings{meir2019strategyproof,
  author       = {Reshef Meir},
  title        = {Strategyproof Facility Location for Three Agents on a Circle},
  booktitle    = {Algorithmic Game Theory - 12th International Symposium, {SAGT} 2019},
  series       = {Lecture Notes in Computer Science},
  volume       = {11801},
  pages        = {18--33},
  publisher    = {Springer},
  year         = {2019},
  url          = {https://doi.org/10.1007/978-3-030-30473-7\_2},
  doi          = {10.1007/978-3-030-30473-7\_2},
  bibsource    = {dblp computer science bibliography, https://dblp.org}
}

@article{schummer2002strategy,
  title={Strategy-proof location on a network},
  author={Schummer, James and Vohra, Rakesh V},
  journal={Journal of Economic Theory},
  volume={104},
  number={2},
  pages={405--428},
  year={2002},
  publisher={Elsevier}
}

@article{GoelHC23,
  title={Optimality of the coordinate-wise median mechanism for strategyproof facility location in two dimensions},
  author={Goel, Sumit and Hann-Caruthers, Wade},
  journal={Social Choice and Welfare},
  volume={61},
  number={1},
  pages={11--34},
  year={2023},
  publisher={Springer}
}

@inproceedings{chan2021mechanism,
  author       = {Hau Chan and
                  Aris Filos{-}Ratsikas and
                  Bo Li and
                  Minming Li and
                  Chenhao Wang},
  title        = {Mechanism Design for Facility Location Problems: {A} Survey},
  booktitle    = {Proceedings of the Thirtieth International Joint Conference on Artificial
                  Intelligence, {IJCAI} 2021},
  pages        = {4356--4365},
  publisher    = {ijcai.org},
  year         = {2021}
}

@inproceedings{el2023strategyproofness,
  title={On the strategyproofness of the geometric median},
  author={El-Mhamdi, El-Mahdi and Farhadkhani, Sadegh and Guerraoui, Rachid and Hoang, L{\^e}-Nguy{\^e}n},
  booktitle={International Conference on Artificial Intelligence and Statistics},
  pages={2603--2640},
  year={2023},
  organization={PMLR}
}

@article{moulin1980strategy,
  title={On strategy-proofness and single peakedness},
  author={Moulin, Herv{\'e}},
  journal={Public Choice},
  volume={35},
  number={4},
  pages={437--455},
  year={1980},
  publisher={Springer}
}

@article{peters1993range,
  title={Range convexity, continuity, and strategy-proofness of voting schemes},
  author={Peters, Hans and van der Stel, Hans and Storcken, Ton},
  journal={Zeitschrift f{\"u}r Operations Research},
  volume={38},
  pages={213--229},
  year={1993},
  publisher={Springer}
}

@article{barak2024mac,
  author       = {Zohar Barak and
                  Anupam Gupta and
                  Inbal Talgam{-}Cohen},
  title        = {{MAC} Advice for Facility Location Mechanism Design},
  journal      = {CoRR},
  volume       = {abs/2403.12181},
  year         = {2024},
  url          = {https://doi.org/10.48550/arXiv.2403.12181},
  doi          = {10.48550/ARXIV.2403.12181},
  eprinttype    = {arXiv},
  eprint       = {2403.12181},
  bibsource    = {dblp computer science bibliography, https://dblp.org}
}

@article{FeigenbaumSY17,
  author       = {Itai Feigenbaum and
                  Jay Sethuraman and
                  Chun Ye},
  title        = {Approximately Optimal Mechanisms for Strategyproof Facility Location:
                  Minimizing \emph{L\({}_{\mbox{p}}\)} Norm of Costs},
  journal      = {Mathematics of Operations Research},
  volume       = {42},
  number       = {2},
  pages        = {434--447},
  year         = {2017},
  url          = {https://doi.org/10.1287/moor.2016.0810},
  doi          = {10.1287/MOOR.2016.0810},
  bibsource    = {dblp computer science bibliography, https://dblp.org}
}

@article{border1983straightforward,
  title={Straightforward elections, unanimity and phantom voters},
  author={Border, Kim C and Jordan, James S},
  journal={The Review of Economic Studies},
  volume={50},
  number={1},
  pages={153--170},
  year={1983},
  publisher={Wiley-Blackwell}
}

@article{kim1984nonmanipulability,
  title={Nonmanipulability in two dimensions},
  author={Kim, Ki Hang and Roush, Fred W},
  journal={Mathematical Social Sciences},
  volume={8},
  number={1},
  pages={29--43},
  year={1984},
  publisher={Elsevier}
}

@article{barbera1993generalized,
  title={Generalized median voter schemes and committees},
  author={Barber{\`a}, Salvador and Gul, Faruk and Stacchetti, Ennio},
  journal={Journal of Economic Theory},
  volume={61},
  number={2},
  pages={262--289},
  year={1993},
  publisher={Elsevier}
}

@article{ching1997strategy,
  title={Strategy-proofness and “median voters”},
  author={Ching, Stephen},
  journal={International Journal of Game Theory},
  volume={26},
  pages={473--490},
  year={1997},
  publisher={Springer}
}

@article{barbera2011strategyproof,
  title={Strategyproof social choice},
  author={Barber{\`a}, Salvador},
  journal={Handbook of social choice and welfare},
  volume={2},
  pages={731--831},
  year={2011},
  publisher={Elsevier}
}

@article{barbera1998strategy,
  title={Strategy-proof voting on compact ranges},
  author={Barber{\`a}, Salvador and Mass{\'o}, Jordi and Serizawa, Shigehiro},
  journal={games and economic behavior},
  volume={25},
  number={2},
  pages={272--291},
  year={1998},
  publisher={Elsevier}
}

@article{peremans1997strategy,
  title={Strategy-proofness on {Euclidean} spaces},
  author={Peremans, W and Peters, Hans and vd Stel, H and Storcken, T},
  journal={Social Choice and Welfare},
  volume={14},
  pages={379--401},
  year={1997},
  publisher={Springer}
}

@misc{barak26,
      title={Facility Location Mechanism Design: Breaking The Deterministic Barrier}, 
      author={Zohar Barak},
      year={2026},
      eprint={2605.24750},
      archivePrefix={arXiv},
      primaryClass={cs.GT},
      url={https://arxiv.org/abs/2605.24750}, 
}

@misc{ChanLW26,
      title={Strategyproof Mechanisms for Euclidean Facility Location Problems under $L_p$-norm Social Cost}, 
      author={Hau Chan and Jianan Lin and Chenhao Wang},
      year={2026},
      eprint={2606.08621},
      archivePrefix={arXiv},
      primaryClass={cs.GT},
      url={https://arxiv.org/abs/2606.08621}, 
}

@misc{hastings2026,
      title={Strategic Facility Location with $p$-Norm Social Costs}, 
      author={Jabari Hastings},
      year={2026},
      eprint={2606.12187},
      archivePrefix={arXiv},
      primaryClass={cs.GT},
      url={https://arxiv.org/abs/2606.12187}, 
}

@inproceedings{jia1, author = {Gravin, Nikolai and Jia, Jianhao}, title = {Approximation Guarantees of Median Mechanism in $\mathbb{R}^d$}, year = {2025}, isbn = {9798400715105}, publisher = {Association for Computing Machinery}, address = {New York, NY, USA}, url = {https://doi.org/10.1145/3717823.3718156}, doi = {10.1145/3717823.3718156}, booktitle = {Proceedings of the 57th Annual ACM Symposium on Theory of Computing}, pages = {495–506}, numpages = {12}, location = {Prague, Czechia}, series = {STOC '25} }

@article{BordesLaffondLeBreton2011,
  title={Euclidean preferences, option sets and strategyproofness},
  author={Georges Bordes and Gilbert Laffond and Michel {Le Breton}},
  journal={SERIEs},
  year={2011},
  volume={2},
  pages={469-483},
  url={https://api.semanticscholar.org/CorpusID:154613111}
}

@article{BauschkeWangYeYuan2009,
title = {Bregman distances and Chebyshev sets},
journal = {Journal of Approximation Theory},
volume = {159},
number = {1},
pages = {3-25},
year = {2009},
note = {Special Issue in memory of Professor George G. Lorentz(1910–2006)},
issn = {0021-9045},
doi = {https://doi.org/10.1016/j.jat.2008.08.014},
url = {https://www.sciencedirect.com/science/article/pii/S0021904508001652},
author = {Heinz H. Bauschke and Xianfu Wang and Jane Ye and Xiaoming Yuan}
}

@article{Bowditch2016,
author = {Bowditch, Brian},
year = {2016},
month = {02},
pages = {279-317},
title = {Some properties of median metric spaces},
volume = {10},
journal = {Groups, Geometry, and Dynamics},
doi = {10.4171/GGD/350}
}

@article{CHATTERJI2010882,
title = {Kazhdan and Haagerup properties from the median viewpoint},
journal = {Advances in Mathematics},
volume = {225},
number = {2},
pages = {882-921},
year = {2010},
issn = {0001-8708},
doi = {https://doi.org/10.1016/j.aim.2010.03.012},
url = {https://www.sciencedirect.com/science/article/pii/S0001870810001003},
author = {Indira Chatterji and Cornelia Druţu and Frédéric Haglund}
}

@article{NehringPuppe2007,
title = {The structure of strategy-proof social choice — Part I: General characterization and possibility results on median spaces},
journal = {Journal of Economic Theory},
volume = {135},
number = {1},
pages = {269-305},
year = {2007},
issn = {0022-0531},
doi = {https://doi.org/10.1016/j.jet.2006.04.008},
url = {https://www.sciencedirect.com/science/article/pii/S0022053106001050},
author = {Klaus Nehring and Clemens Puppe}
}

@InProceedings{Lin2020,
author="Lin, Jianan",
editor="Wu, Weili
and Zhang, Zhongnan",
title="Nearly Complete Characterization of 2-Agent Deterministic Strategyproof Mechanisms for Single Facility Location in {\$}{\$}L{\_}p{\$}{\$}Space",
booktitle="Combinatorial Optimization and Applications",
year="2020",
publisher="Springer International Publishing",
address="Cham",
pages="411--425",
isbn="978-3-030-64843-5"
}

@book{Rockafellar1970,
  title={Convex analysis},
  author={Rockafellar, R Tyrrell},
  volume={28},
  year={1997},
  publisher={Princeton university press}
}

@book{EvansGariepy2015,
  title={Measure Theory and Fine Properties of Functions, Revised Edition},
  author={Evans, L.C. and Gariepy, R.F.},
  isbn={9781482242393},
  series={Textbooks in Mathematics},
  url={https://books.google.com/books?id=e3R3CAAAQBAJ},
  year={2015},
  publisher={CRC Press}
}

\end{document}